\newcolumntype{L}{>{$}l<{$}} % math-mode version of "l" column type
\newcommand{\abs}[1]{\left\lvert #1 \right\rvert}
\newcommand{\mc}[1]{\mathcal{#1}}
\newcommand{\bom}{\boldsymbol{\omega}}
\newcommand{\bv}{\bm{v}}
\newcommand{\bx}{\bm{x}}
\newcommand{\bw}{\bm{w}}
\newcommand{\bu}{\bm{u}}
\newcommand{\bp}{\bm{p}}
\newcommand{\bff}{\bm{f}}
\newcommand{\by}{\bm{y}}
\newcommand{\bL}{\bm{m}}
\newcommand{\bmm}{\bm{m}}
\newcommand{\dd}{\mathrm{d}}
\newcommand{\X}{\bm{X}}
\newcommand{\be}{\bm{e}}
\newcommand{\barz}{\overline z}
\newcommand{\trans}{^{\mathrm{T}}}
\newcommand{\RR}{\mathbb{R}}
\newcommand{\Z}{\mathbb{Z}}
\newcommand{\p}{\partial}
\renewcommand{\div}{{\rm{div}\,}}
\newcommand{\SB}{{\rm SB}}
\newtheorem{theorem}{Theorem}[section]
\newtheorem{lemma}[theorem]{Lemma}
\theoremstyle{definition}
\newtheorem{remark}[theorem]{Remark}
\begin{document}
\title{An integral model based on slender body theory, with applications to curved rigid fibers}
\author[1]{Helge I. Andersson}
\author[2]{Elena Celledoni}
\author[3]{Laurel Ohm}
\author[2]{Brynjulf Owren}
\author[2]{Benjamin K. Tapley}
\affil[1]{Department of Energy and Process Engineering, The Norwegian University of Science and Technology, 7491 Trondheim, Norway}
\affil[2]{ Department of Mathematical Sciences, The Norwegian University of Science and Technology, 7491 Trondheim, Norway}
\affil[3]{Courant Institute of Mathematical Sciences, New York University,
	New York, New York 10012, USA}
\date{\today}
\setcounter{Maxaffil}{0}
\renewcommand\Affilfont{\itshape\small}
%\heading{L. Ohm, B.K. Tapley, H.I. Andersson, E. Celledoni, and B. Owren}
%\address{$^{1}$ School of Mathematics, University of Minnesota, 	 Minneapolis, MN 55455
%	\and
%	$^{2}$
%	\and
%	$^{3}$ Department of Energy and Process Engineering, The Norwegian University of Science and Technology, 7491 Trondheim, Norway
%}
%\keywords{Slender body theory; Multiphase flows; Fiber suspensions; Anisotropic particles}

\maketitle
\begin{abstract}
	We propose a novel integral model describing the motion of curved slender fibers in viscous flow, and develop a numerical method for simulating dynamics of rigid fibers. The model is derived from nonlocal slender body theory (SBT), which approximates flow near the fiber using singular solutions of the Stokes equations integrated along the fiber centerline. In contrast to other models based on (singular) SBT, our model yields a smooth integral kernel which incorporates the (possibly varying) fiber radius naturally. The integral operator is provably negative definite in a non-physical idealized geometry, as expected from PDE theory. This is numerically verified in physically relevant geometries. We propose a convergent numerical method for solving the integral equation and discuss its convergence and stability. The accuracy of the model and method is verified against known models for ellipsoids. Finally, a fast algorithm for computing dynamics of rigid fibers with complex geometries is developed.
	% 150 words
\end{abstract}

\section{Introduction}
The dynamics of thin fibers immersed in fluid play an important role in many biological and engineering processes, including microorganism propulsion \cite{rodenborn2013propulsion,chattopadhyay2009effect,lauga2009hydrodynamics,spagnolie2011comparative}, rheological properties of fiber suspensions used to create composite materials \cite{petrie1999rheology,hamalainen2011papermaking,fan1998direct}, and deposition of microplastics in the ocean \cite{martin2017deposition}. Here the term `fiber' is used to refer to a particle with a very large aspect ratio. In many of the applications mentioned, the cross sectional radius of the fiber is small compared to the length scales of the surrounding fluid, which can be well approximated locally by Stokes flow. This allows for the development of computationally tractable mathematical models describing the interaction between the fiber and the surrounding fluid.  \\

Slender body theory \cite{gotz2000interactions, johnson1980improved, keller1976slender, lighthill1976flagellar, tornberg2004simulating} is a popular tool for reducing computational costs of simulating these thin fibers by approximating the fibers as one dimensional curves. Here we propose an integral model based on slender body theory which involves a smooth kernel and incorporates the (possibly varying) fiber radius in a natural way. Since the integral kernels are smooth, the model resembles the method of regularized Stokeslets \cite{cortez2012slender} with an arclength-dependent regularization similar to \cite{walker2020regularised}; however, we derive our model from usual (singular) Stokeslets and doublets. As such, we avoid introducing additional parameters into the basic model. The model relies on the asymptotic cancellation of angular-dependent terms along the fiber surface (see Section \ref{model} for details), leaving an expression that retains a dependence on the fiber radius in a natural way.   \\

Furthermore, we calculate the spectrum of our integral operator in the toy scenario of a straight-but-periodic fiber with constant radius. In this toy geometry, our integral operator is negative definite, as is the well-posed PDE operator of \cite{closed_loop,free_ends} which it is designed to approximate (see \cite{spectral_calc}). This is in contrast to other models based on (non-regularized) slender body theory which rely on further asymptotic expansion with respect to the fiber radius \cite{johnson1980improved, keller1976slender, lighthill1976flagellar}. These models exhibit an instability as the eigenvalues of the operator cross zero at a high but finite wavenumber. \\

The model we derive initially yields a first-kind Fredholm integral equation for the force density along the fiber centerline. Such integral equations are known for being ill-posed \cite[Chapt. 15.1]{kress1989linear}, as they do not necessarily have a bounded inverse at the continuous level. Numerical discretization alone can provide sufficient regularization to invert first-kind integral equations at the discrete level, but to make our model more suitable for inversion, we use an integral identity to regularize the expression into a second-kind equation. 
The second-kind regularization preserves the asymptotic accuracy of the model while improving the conditioning and invertibility of the corresponding numerical method. The regularization also serves to ensure that the discretized operator is negative definite, even in the presence of numerical errors, by bounding the spectrum away from zero. 
We distinguish this type of regularization from the method of regularized Stokeslets, since our regularization is not a key component of the model derivation. In particular, we can directly compare our model \emph{with} regularization to our model \emph{without}, which we will do repeatedly throughout the paper. We also distinguish this regularization from the procedure used by Tornberg and Shelley \cite{tornberg2004simulating}, since we are not correcting for a high wavenumber instability. This allows us to compare the numerical behavior of our regularized and unregularized models at the discrete level even for very fine discretization. Moreover, the regularization used here affects all directions (both normal and tangent to the slender body centerline) in the same way.  \\

The solution of the resulting second-kind Fredholm integral equation is a force density along the slender body centerline which we integrate to find the total force and torque on the rigid fiber. We present a numerical method based on the Nystr\"om method for solving second-kind Fredholm integral equations \cite[Chapt. 12.4]{atkinson2005theoretical}. Based on the results of \cite{atkinson2005theoretical}, we show that the method is convergent and that the accuracy of the method depends on the quadrature error. Numerical tests confirm the convergence results. Not surprisingly, we note significant improvements in the conditioning of the second-kind versus first-kind formulation of the model. We also numerically verify the spectral properties of the model in different geometries. \\

We develop an algorithm for dynamic simulations of a rigid fiber. The rigidity of the fiber can be exploited such that only matrix-vector products need to be performed within the time loop, resulting in a fast algorithm for computing dynamics. We compare the dynamics of our model to the well-studied dynamics of a slender prolate spheroid \cite{jeffery1922motion,brenner1964stokes,chwang1975hydromechanics}. We then apply our model to compare the dynamics of curved fibers whose centerlines deviate randomly from straight lines by varying magnitudes.  \\

The structure of the paper is as follows. Section \ref{SBmodel} presents the slender body model, which is derived in greater detail and justified via spectral comparisons with other slender body theories in Section \ref{model}. In Section \ref{sec:disc} we discuss a method for numerically solving Fredholm integral equations and integrating the result, and demonstrate the convergence of the method for our model. Section \ref{sec:dynamics} outlines a fast algorithm for computing the dynamics of a rigid slender fiber in viscous flow. We apply the dynamical algorithm to simulate the dynamics of fibers with complex shapes. Finally, we comment on conclusions and outlook for the model in Section \ref{conclude}.

%%%%%%%%%%%%%%%%%%%%%%%%%%%%%%%%%%%%%%%%%%%%%%%%%%%%%%%%%%%%%%%%%
%%%%%%%%%%%%%%%%%%%%%%%%%%%%%%%%%%%%%%%%%%%%%%%%%%%%%%%%%%%%%%%%%
\subsection{Fiber geometry}
We begin by introducing some notation for the slender geometries considered throughout the paper. Fix $\epsilon$, $L$ with $0<\epsilon\ll L$ and let $\X_{\text{ext}}:  [-\sqrt{L^2+\epsilon^2},\sqrt{L^2+\epsilon^2}] \to \RR^3$ denote the coordinates of a $C^2$ curve in $\RR^3$, parameterized by arclength $s$. Defining $\be_{\rm s}(s) = \frac{d\X_\text{ext}}{ds}/\abs{\frac{d\X_\text{ext}}{ds}}$, the unit tangent vector to $\X_\text{ext}(s)$, we parameterize points near $\X_\text{ext}(s)$ with respect to the orthonormal frame $(\be_{\rm s}(s),\be_{n_1}(s),\be_{n_2}(s))$ defined in \cite{free_ends}. Letting
\[ \be_r(s,\theta) := \cos\theta\be_{n_1}(s)+\sin\theta\be_{n_2}(s), \]
we define the slender body $\Sigma_\epsilon$ as
\begin{equation}\label{SB_def_free}
\Sigma_\epsilon:= \big\{\bx\in \RR^3 \; : \; \bx = \X_{\text{ext}}(s) + \rho\be_r(s,\theta), \; \rho <\epsilon r(s), \;  s\in [-\sqrt{L^2+\epsilon^2},\sqrt{L^2+\epsilon^2}] \big\}.
\end{equation}
Here the radius function $r\in C^2(-\sqrt{L^2+\epsilon^2},\sqrt{L^2+\epsilon^2})$ is required to satisfy $0<r(s)\le1$ for each $s\in (-\sqrt{L^2+\epsilon^2},\sqrt{L^2+\epsilon^2})$, and $r(s)$ must decay smoothly to zero at the fiber endpoints $\pm \sqrt{L^2+\epsilon^2}$. There are many admissible radius functions $r$ which can be considered. For the simulations in this paper, we will use a thin prolate spheroid as our geometrical model for a slender fiber. In this case, the radius function $r(s)$ is given by
\begin{equation}\label{prolate}
r(s) = \frac{1}{\sqrt{L^2+\epsilon^2}}\sqrt{L^2+\epsilon^2-s^2 }.
\end{equation}
We consider the subset
\begin{equation}
\X := \{ \X_\text{ext}(s) \; : \; -L\le s\le L \}
\end{equation}
extending from focus to focus of the prolate spheroid \eqref{prolate}, and define $\X(s)$ to be the effective centerline of the slender body so that $r=O(\epsilon)$ at the effective endpoints $s=\pm L$.  \\

The slender body model described in Section \ref{SBmodel} may also be used in the case of a closed curve, in which case we take $\X(L)=\X(-L)$ and consider $s\in\RR / 2L$. We may take the radius function $r\equiv1$ in this case.

%%%%%%%%%%%%%%%%%%%%%%%%%%%%%%%%%%%%%%%%%%%%%%%%%%%%%%%%%%%%%%%%%
\section{Slender body model}\label{SBmodel}
To describe the motion of the thin fiber $\Sigma_\epsilon$ \eqref{SB_def_free} in Stokes flow, we will use an expression derived from nonlocal slender body theory \cite{gotz2000interactions, johnson1980improved, keller1976slender, tornberg2004simulating}. Letting $\bm{f}(s,t)$ denote the force per unit length exerted by the fiber on the surrounding fluid at time $t$, we approximate  the velocity $\frac{\p\X}{\p t}$ of the fiber relative to a given background flow $\bu_0$ by
\begin{align}
\label{SB_new3}
8\pi\mu \bigg(\frac{\p\X}{\p t} - \bu_0(\X(s,t),t) \bigg) &= - 2\log\eta\, \bm{f}(s,t) - \int_{-L}^L\bigg(\bm{S}_{\epsilon,\eta} + \frac{\epsilon^2r^2(s')}{2}\bm{D}_\epsilon \bigg) \bm{f}(s',t) \, ds' ,\\
\label{Sdef}
\bm{S}_{\epsilon,\eta}(s,s',t) &= \frac{{\bf I}}{(|\overline{\X}|^2+\eta^2\epsilon^2r^2(s))^{1/2}} + \frac{\overline{\X}\overline{\X}^{\rm T} }{(|\overline{\X}|^2+\epsilon^2r^2(s))^{3/2}} \\
\label{Ddef}
\bm{D}_\epsilon(s,s',t) &= \frac{{\bf I}}{(|\overline{\X}|^2+\epsilon^2r^2(s))^{3/2} } - \frac{3\overline{\X}\overline{\X}^{\rm T} }{(|\overline{\X}|^2+\epsilon^2r^2(s))^{5/2}}
\end{align}
where $\overline{\X}(s,s',t) = \X(s,t)-\X(s',t)$. Here $\eta>1$ is a regularization parameter chosen so that \eqref{SB_new3} is a second-kind Fredholm equation for $\bm{f}$. Notice that $\eta$ must also appear in the first term of $\bm{S}_{\epsilon,\eta}$ in order to retain the asymptotic consistency of the model \eqref{SB_new3}. This is due to an integral identity \eqref{eta_ID} used to convert the integral model from a first-kind equation for for $\bm{f}$. The model accounts for a varying radius $r(s)$ through the denominators of each term as well as the coefficient of $\bm{D}_\epsilon$. Note that since $r(s)$ is nonzero for $-L\le s\le L$, the integral kernel is smooth for each $s\in [-L,L]$. We provide a more detailed derivation of \eqref{SB_new3}--\eqref{Ddef} in Section \ref{model}. \\

The model given by equations \eqref{SB_new3}--\eqref{Ddef} and the analysis in Section \ref{model} can be used to describe both flexible and rigid fibers. In Section \ref{sec:dynamics} we apply our model to the dynamics of a rigid fiber, since the invertibility properties of \eqref{SB_new3}--\eqref{Ddef} make the model especially suitable for simulating rigid filaments.  \\

In the case of a rigid fiber, at each time $t$ we additionally impose the constraint
\begin{equation}\label{rigid_vel}
\frac{\p\X}{\p t} = \bv + \bm{\omega}\times \X(s),
\end{equation}
where $\bv$, $\bm{\omega}\in \RR^3$ are the given linear and angular velocity of the fiber (see \cite{gustavsson2009gravity, rigid_SBT, tornberg2006numerical}). We then use \eqref{SB_new3} to solve for the total force $\bm{F}(t)$ and torque $\bm{T}(t)$ exerted on the slender body at time $t$ via
\begin{equation}\label{FandN}
\int_{-L}^L \bm{f}(s,t) \, ds = \bm{F}(t), \qquad \int_{-L}^L \X(s,t)\times \bm{f}(s,t) = \bm{T}(t).
\end{equation}
Note that solving for $\bm{F}$ and $\bm{T}$ involves inverting the expression \eqref{SB_new3}, so we are particularly concerned with the invertibility of the integral equation.  

%%%%%%%%%%%%%%%%%%%%%%%%%%%%%%%%%%%%%%%%%%%%%%%%%%%%%%%%%%%%%%%%%
%%%%%%%%%%%%%%%%%%%%%%%%%%%%%%%%%%%%%%%%%%%%%%%%%%%%%%%%%%%%%%%%%
%%%%%%%%%%%%%%%%%%%%%%%%%%%%%%%%%%%%%%%%%%%%%%%%%%%%%%%%%%%%%%%%%
\section{Derivation and justification of the slender body model}\label{model}
Our model for the motion of the fiber is based on slender body theory \cite{gotz2000interactions, johnson1980improved, keller1976slender, tornberg2004simulating}. According to slender body theory, the fluid velocity $\bu^\SB(\bx,t)$ at any point $\bx$ away from the fiber centerline $\X(s,t)$ is approximated by the integral expression
\begin{equation}\label{stokes_SB}
\begin{aligned}
8\pi\mu \big(\bu^{\SB}(\bx,t) - \bu_0(\bx,t) \big) &=-\int_{-L}^L \bigg( \mc{S}\big(\bx-\X(s',t) \big)+\frac{\epsilon^2r^2(s')}{2}\mc{D}\big(\bx-\X(s',t) \big) \bigg)\bm{f}(s',t) \, ds' \\
\mc{S}(\bx)&=\frac{{\bf I}}{\abs{\bx}}+\frac{\bx\bx^{\rm T}}{\abs{\bx}^3}, \;
\mc{D}(\bx)=\frac{{\bf I}}{\abs{\bx}^3}-\frac{3\bx\bx^{\rm T}}{\abs{\bx}^5},
\end{aligned}
\end{equation}
where $\bu_0(\bx,t)$ is the fluid velocity in the absence of the fiber and $\mu$ is the fluid viscosity. The expression $\frac{1}{8\pi\mu}\mc{S}(\bx)$ is the free space Green's function for the Stokes equations in $\RR^3$, commonly known as the Stokeslet, while $\frac{1}{8\pi\mu}\mc{D}(\bx)=\frac{1}{16\pi\mu}\Delta\mc{S}(\bx)$ is a higher order correction to the velocity approximation, often known as a doublet. The force-per-unit-length $\bm{f}(s,t)$ exerted by the fluid on the body is distributed between the generalized foci of the slender body at $s=\pm L$.  \\

In the stationary setting, the velocity field given by \eqref{stokes_SB} is an asymptotically accurate approximation to the velocity field around a three-dimensional semi-flexible rod satisfying a well-posed {\it slender body PDE}, defined in \cite{closed_loop,free_ends} as the following boundary value problem for the Stokes equations:
\begin{equation}\label{SB_PDE}
\begin{aligned}
-\mu\Delta \bu +\nabla p &= 0, \quad \div \, \bu = 0 \qquad \text{ in } \RR^3\backslash \overline{\Sigma_\epsilon} \\
\int_0^{2\pi} (\bm{\sigma}\bm{n})\big|_{(\varphi(s),\theta)} \mc{J}_\epsilon(\varphi(s),\theta) \varphi'(s) \, d\theta &= -\bm{f}(s) \hspace{2.1cm} \text{ on } \p\Sigma_\epsilon \\
\bu\big|_{\p\Sigma_\epsilon} &= \bu(s), \hspace{2.2cm} \text{ unknown but independent of }\theta \\
\abs{\bu} \to 0 & \text{ as } \abs{\bx} \to \infty.
\end{aligned}
\end{equation}
Here $\bm{\sigma}=\mu\big(\nabla\bu+(\nabla\bu)^{\rm T}\big)-p{\bf I}$ is the fluid stress tensor, $\bm{n}(\bx)$ denotes the unit normal vector pointing into $\Sigma_\epsilon$ at $\bx\in \p\Sigma_\epsilon$, $\mc{J}_\epsilon(s,\theta)$ is the Jacobian factor on $\p\Sigma_\epsilon$, and $\varphi(s):= \frac{s\sqrt{L^2+\epsilon^2}}{L}$ is a stretch function to address the discrepancy between the extent of $\bm{f}$ and the extent of the actual slender body surface. Given a force density $\bm{f}\in C^1(-L,L)$ which decays like $r(s)$ at the fiber endpoints ($\bm{f}(s)\sim r(\varphi(s))$ as $s\to \pm L$), the difference between the slender body approximation $\bu^{\SB}$ and the solution of \eqref{SB_PDE} is bounded by an expression proportional to $\epsilon\abs{\log\epsilon}$. Note that $r(s)$ need not be spheroidal \eqref{prolate} for this error analysis to hold, but $r(s)$ must decay smoothly to zero at the physical endpoints of the fiber at $s=\pm \sqrt{L^2+\epsilon^2}$. \\

A key component of the well-posedness theory for the slender body PDE to which \eqref{stokes_SB} is an approximation is the {\it fiber integrity condition} on $\bu\big|_{\p\Sigma_\epsilon}$. The fiber integrity condition requires the velocity across each cross section $s$ of the slender body to be constant; i.e. the velocity $\bu(\bx)$ at any point $\bx(s,\theta)=\X(s) +\epsilon r(s) \be_r(s,\theta)\in \p\Sigma_\epsilon$ satisfies $\p_\theta \bu(\bx(s,\theta))=0$. This is to ensure that the cross sectional shape of the fiber does not deform over time. An important aspect of the accuracy of slender body theory is that the expression \eqref{stokes_SB} satisfies this fiber integrity condition to leading order in $\epsilon$. Specifically, by Propositions 3.9 and 3.11 in \cite{closed_loop,free_ends}, respectively, the angular dependence in $\bu^\SB(\bx)$ over each cross section $s$ of the slender body is only $\mc{O}(\epsilon\log\epsilon)$. \\

Another important general feature of the slender body PDE \eqref{SB_PDE} is that the operator mapping the force data $\bm{f}(s)$ to the $\theta$-independent fiber velocity $\bu|_{\p\Sigma_\epsilon}(s)$ is negative definite (see \cite{spectral_calc}; note that the sign convention for $\bm{f}$ is opposite). \\

Now, the velocity expression \eqref{stokes_SB} is singular at $\bx=\X(s,t)$ and can be used only away from the fiber centerline; however, \eqref{stokes_SB} presents a starting point for approximating the velocity of the slender body itself. Various methods can be used to obtain an expression for the relative velocity of the fiber centerline $\frac{\p\X(s,t)}{\p t}$ which depends only on the arclength parameter $s$ and time $t$ \cite{cortez2012slender, gotz2000interactions, johnson1980improved, keller1976slender, lighthill1976flagellar, maxian2020integral, MEKiT, tornberg2004simulating}. Here we consider a different approach to deriving a limiting centerline expression from \eqref{stokes_SB} which evidently results in a negative definite integral operator mapping $\bm{f}$ to $\bu|_{\p\Sigma_\epsilon}$. We then regularize this first-kind integral equation in an asymptotically consistent way to yield the second-kind integral equation \eqref{SB_new3}. We outline our approach here and provide a more detailed justification in Section \ref{spec_sec}.  \\

The first step in approximating $\frac{\p\X(s,t)}{\p t}$ is to evaluate \eqref{stokes_SB} on the surface of the slender body at $\bx=\X(s,t) + \epsilon r(s)\be_r(s,\theta,t)$. Written out, the velocity field along the fiber surface is given by
\begin{equation}\label{SB_new}
\begin{aligned}
&8\pi\mu \bigg(\bu^\SB(\bx(s,\theta,t),t) - \bu_0(\X(s,t),t) \bigg) = \\
&\hspace{1cm} - \int_{-L}^L \bigg(\frac{{\bf I}}{\abs{\bm{R}}} + \frac{\overline{\X}\overline{\X}^{\rm T}+\epsilon r(\overline{\X} \be_r^{\rm T} +\be_r\overline{\X}^{\rm T}) + \epsilon^2 r^2 \be_r\be_r^{\rm T} }{\abs{\bm{R}}^3} \\
&\hspace{2cm} + \frac{\epsilon^2r^2(s')}{2} \bigg(\frac{{\bf I}}{\abs{\bm{R}}^3} -3 \frac{\overline{\X}\overline{\X}^{\rm T}+\epsilon r(\overline{\X} \be_r^{\rm T} +\be_r\overline{\X}^{\rm T}) + \epsilon^2 r^2 \be_r\be_r^{\rm T} }{\abs{\bm{R}}^5}\bigg) \bigg) \bm{f}(s',t) \, ds' ,
\end{aligned}
\end{equation}
where unless otherwise specified, we have $r=r(s)$,  $\overline{\X}=\overline{\X}(s,s',t) = \X(s,t) - \X(s',t)$ and $\bm{R}=\bm{R}(s,s',\theta,t)= \overline{\X}+\epsilon r(s)\be_r(s,\theta,t)$. Now, along the fiber surface, the expression \eqref{SB_new} satisfies the fiber integrity condition to leading order in $\epsilon$; i.e. the terms containing $\be_r(s,\theta,t)$ in \eqref{SB_new} vanish to $\mc{O}(\epsilon\log\epsilon)$. In particular, both the Stokeslet and doublet include a $\theta$-dependent term with $\epsilon^2 r^2 \be_r\be_r^{\rm T}$ in the numerator. Due to the form of $\bm{R}$ in the denominator, both of these terms are $\mc{O}(1)$ at $s=s'$; however, upon integrating in $s'$, these terms cancel each other asymptotically to order $\epsilon\log\epsilon$ (see estimates 3.62 and 3.65 in \cite{closed_loop} and estimates 3.40 and 3.43 in \cite{free_ends}). Furthermore, the terms $\epsilon r(\overline{\X} \be_r^{\rm T} +\be_r\overline{\X}^{\rm T})$ in both the Stokeslet and doublet approximately integrate to zero in $s'$, while the $\be_r$ term in each denominator from $\abs{\bm{R}(s,\theta,t)}^2=\abs{ \overline{\X}}^2 + 2\epsilon r\be_r\cdot\overline{\X}+\epsilon^2r^2$ is also only $O(\epsilon \log\epsilon)$ (see Propositions 3.9 and 3.11 in \cite{closed_loop}, \cite{free_ends}, respectively). \\

Due to these cancellations and the fact that dropping these terms still approximates the slender body PDE solution of \cite{closed_loop,free_ends} to at least $O(\epsilon\log\epsilon)$, we may eliminate all terms containing $\be_r(s,\theta,t)$ in \eqref{SB_new} to obtain a $\theta$-independent expression which approximates the velocity of the fiber itself:
\begin{equation}\label{SB_new2}
\begin{aligned}
&8\pi\mu \bigg(\frac{\p\X}{\p t} - \bu_0(\X(s,t),t) \bigg) = - \int_{-L}^L \bigg(\frac{{\bf I}}{(|\overline{\X}|^2+\epsilon^2r^2(s))^{1/2}} + \frac{\overline{\X}\overline{\X}^{\rm T} }{(|\overline{\X}|^2+\epsilon^2r^2(s))^{3/2}} \\
&\hspace{4cm} + \frac{\epsilon^2r^2(s')}{2} \bigg(\frac{{\bf I}}{(|\overline{\X}|^2+\epsilon^2r^2(s))^{3/2} } - \frac{3\overline{\X}\overline{\X}^{\rm T} }{(|\overline{\X}|^2+\epsilon^2r^2(s))^{5/2}}\bigg) \bigg) \bm{f}(s',t) \, ds' .
\end{aligned}
\end{equation}

The expression \eqref{SB_new2} serves as the model underlying our final slender body velocity expression \eqref{SB_new3}.
In Section \ref{spec_sec}, we show that in a simplified setting, \eqref{SB_new2} results in a negative definite operator mapping the force density $\bm{f}$ to the fiber velocity $\frac{\p\X}{\p t}$, whereas other models which rely on further asymptotic expansion of \eqref{SB_new} about $\epsilon=0$ do not, and incur high wavenumber instabilities. This phenomenon is well known for the Keller--Rubinow model \cite{keller1976slender,gotz2000interactions}, but for other possible centerline expressions, including models similar to Lighthill \cite{lighthill1976flagellar}, this high wavenumber instability has not been documented previously. It seems that our model \eqref{SB_new2} may be the simplest that can be obtained by expanding from \eqref{SB_new} while still guaranteeing a negative definite operator. \\

Now, since the integral operator in \eqref{SB_new2} has a smooth kernel, the expression \eqref{SB_new2} yields a first-kind Fredholm integral equation for $\bm{f}$ when the fiber velocity $\frac{\p\X}{\p t}$ is supplied. Describing the motion of a rigid fiber involves inverting this expression to solve for $\bm{f}$, which in general is an ill-posed problem for a first-kind equation. Thus we want to regularize the integral operator \eqref{SB_new2} to create a second-kind integral equation while keeping the same order of accuracy in the map $\bm{f}\mapsto\frac{\p\X}{\p t}$.   \\

We first note that, for $\eta>1$, we have the following identity:
\begin{equation}\label{eta_ID}
\int_{-L}^L \bigg(\frac{1}{(|\overline{\X}|^2+\epsilon^2r^2(s))^{1/2}} - \frac{1}{(|\overline{\X}|^2+\eta^2\epsilon^2r^2(s))^{1/2}}  \bigg) g(s')\, ds' = 2\log\eta \, g(s)+ \mc{O}(\eta\epsilon\log(\eta\epsilon)).
\end{equation}

\begin{proof}
By Lemma 3.8 in \cite{free_ends}, for $a> 0$ sufficiently small, we have
\begin{equation}\label{ID1}
\begin{aligned}
 \int_{-L}^L &\bigg(\frac{g(s')}{(|\overline{\X}|^2+ a^2r^2(s))^{1/2}} - \frac{g(s')}{|\overline{\X}|} + \frac{g(s)}{\abs{s-s'}} \bigg) \, ds' \\
 &\hspace{4cm}= \log\bigg(\frac{2(L^2-s^2)+2\sqrt{(L^2-s^2)^2+a^2r^2(s)}}{a^2r^2(s)} \bigg) + \mc{O}(a\log a).
\end{aligned}
\end{equation}
Subtracting \eqref{ID1} with $a=\eta\epsilon$ from \eqref{ID1} with $a=\epsilon$ and using that
\begin{align*}
\abs{\log\bigg(\frac{(L^2-s^2) + \sqrt{L^2+\epsilon^2r^2}}{(L^2-s^2) + \sqrt{L^2+\eta^2\epsilon^2r^2}} \bigg) } =\abs{\log\bigg(\frac{(L^2-s^2) + \sqrt{L^2+\epsilon^2r^2}}{(L^2-s^2) + \sqrt{L^2+\eta^2\epsilon^2r^2}} \bigg)  - \log(1) }\le C\epsilon^2,
\end{align*}
we obtain \eqref{eta_ID}.
\end{proof}

Using \eqref{eta_ID}, we replace the first term in the integrand of \eqref{SB_new2} to obtain \eqref{SB_new3}. We can compare the expression \eqref{SB_new3} to that of Tornberg and Shelley in \cite{tornberg2004simulating}, where a regularization of the Keller--Rubinow model is used to obtain a second-kind integral equation for $\bm{f}$. One thing to note is that, due to the form of the local term in our model \eqref{SB_new3}, the effect of the regularization parameter $\eta$ is the same in all directions (both tangent and normal to the fiber centerline). This is not necessarily the case for the Tornberg and Shelley model (see Section \ref{reg_comp} for a spectral comparison given a simplified fiber geometry).

%%%%%%%%%%%%%%%%%%%%%%%%%%%%%%%%%%%%%%%%%%%%%%%%%%%%%%%%%%%%%%%%%
%%%%%%%%%%%%%%%%%%%%%%%%%%%%%%%%%%%%%%%%%%%%%%%%%%%%%%%%%%%%%%%%%
%%%%%%%%%%%%%%%%%%%%%%%%%%%%%%%%%%%%%%%%%%%%%%%%%%%%%%%%%%%%%%%%%
\subsection{Spectral comparison of slender body integral operators}\label{spec_sec}
In this subsection we provide evidence that our model \eqref{SB_new3} is well suited for approximating the map $\frac{\p\X}{\p t}\mapsto\bm{f}$ needed to simulate the motion of a rigid fiber.
Here we consider the spectrum of the integral operator taking the force density $\bm{f}$ to the fiber velocity $\frac{\p\X}{\p t}$ in the non-physical but nevertheless instructive case of a straight, periodic fiber with constant radius $\epsilon$. In this scenario we can explicitly calculate the eigenvalues of both the slender body PDE operator \eqref{SB_PDE} as well as the integral operator \eqref{SB_new2} and related models. This allows us to directly compare the properties of different models in the same simple setting and serves as a starting point for understanding more complicated geometries. In particular, we expect this analysis to roughly capture the high wavenumber behavior of these models in different geometries -- on length scales much smaller than the variation in curvature and fiber radius. The high wavenumber behavior is of particular interest for the invertibility and stability of the slender body theory integral operator. \\

For comparison, we first recall the form of the eigenvalues of the slender body PDE \eqref{SB_PDE}, calculated in \cite{spectral_calc}. In Section \ref{pre_reg}, we consider the model \eqref{SB_new2}, before regularization, and show that the integral operator is negative definite. We compare the spectrum of \eqref{SB_new2} to three other possible models based on slender body theory which do not result in negative definite operators. Then in Section \ref{reg_comp}, we consider the regularized version of our model \eqref{SB_new3} and compare its spectrum to the regularized model of Tornberg and Shelley \cite{tornberg2004simulating}. We note that in our model, a uniform regularization parameter appears to give the best approximation of the slender body PDE spectrum in directions both normal and tangent to the slender body centerline, whereas in the Tornberg--Shelley model, the parameter required by the tangential direction may not be optimal in the normal direction.

%%%%%%
\subsubsection{Spectrum of the slender body PDE}\label{SB_PDE_eigs}
Here we consider a straight, periodic fiber with constant radius $\epsilon$. We take the fiber centerline to be 2-periodic and lie along the $z$-axis, $\X(z)= z\be_z$, $z\in \RR/2\Z$, and for simplicity take $\mu=1$ and zero background flow. We consider the stationary setting and omit the time dependence in our notation; in particular, we denote the fiber velocity by $\overline{\bu}(z)$ to distinguish from the fluid velocity away from the fiber. \\

We consider this scenario because we can explicitly calculate the eigenvalues of the slender body PDE \eqref{SB_PDE} as well as various possible integral expressions for approximating the map $\bm{f}\mapsto\overline{\bu}$. In particular, the eigenvectors of this map can be decomposed into tangential ($\be_z$) and normal ($\be_x,\be_y$) directions and are given by $\bm{f}_m(z)= e^{i\pi k z}\be_m$, $m=x,y,z$. We may then explicitly solve for $\lambda^m_k$ satisfying
\begin{equation}\label{evaleq1}
\overline{\bu}(z) = \lambda^m_k \bm{f}_m(z), \qquad m=x,y,z
\end{equation}
for both the slender body PDE operator and various approximations based on slender body theory. To avoid logarithmic growth of the corresponding bulk velocity field at spatial infinity, we will ignore translational modes ($k=0$) in the following spectral analysis. Clearly these modes are important, especially for a rigid body; however, we are mainly interested in the high wavenumber behavior of these operators. High wavenumber instabilities are a known issue for nonlocal slender body theory \cite{gotz2000interactions,shelley2000stokesian,tornberg2004simulating}, and the following analysis likely captures the behavior of these models at high wavenumbers (small length scales) even in curved geometries.   \\

To begin, the eigenvalues of the slender body PDE operator \eqref{SB_PDE} mapping $\bm{f}$ to $\overline{\bu}$ were calculated in \cite{spectral_calc}, Proposition 1.4. Note that the sign convention in this paper is opposite, as we are considering $\bm{f}$ to be the hydrodynamic force exerted \emph{by} rather than \emph{on} the slender body. For the slender body PDE, the eigenvalues satisfying \eqref{evaleq1} in the tangential and normal directions, respectively, are given by
\begin{equation}\label{PDE_eigs}
\lambda^m_k = \begin{cases}
- \frac{2K_0K_1 + \pi\epsilon\abs{k} \big( K_0^2 - K_1^2 \big) }{ 4\pi^2\epsilon\abs{k} K_1^2}, & m=z \\
- \frac{2K_0K_1K_2 + \pi\epsilon\abs{k} \big(K_1^2(K_0+K_2)-2K_0^2K_2 \big)}{2\pi^2\epsilon\abs{k}\big(4K_1^2K_2+\pi\epsilon\abs{k} K_1(K_1^2-K_0K_2)\big)}, & m=x,y
\end{cases}
\end{equation}
where each $K_j=K_j(\pi\epsilon\abs{k})$, $j=0,1,2$, is a $j^\text{th}$ order modified Bessel function of the second kind. Note that both sets of eigenvalues $\lambda^z_k$ and $\lambda^x_k,\lambda^y_k$ are strictly negative and decay to 0 at a rate proportional to $1/\abs{k}$ as $\abs{k}\to\infty$. We will compare our approximation and various other slender body approximations to \eqref{PDE_eigs}.

%%%%%%%%%%PRE-REGULARIZED MODELS%%%%%%%%%%%%%
\subsubsection{Pre-regularization comparison}\label{pre_reg}
Before we consider the regularized version \eqref{SB_new3} of our model, we consider the base model \eqref{SB_new2} and compare its spectrum to other existing models based on slender body theory, before regularization.
In the straight-but-periodic scenario, our model \eqref{SB_new2} becomes the periodization of the expression
\begin{equation}\label{SB_straight}
\overline{\bu}(z) = -\frac{1}{8\pi} \int_{-1}^1 \bigg(\frac{\bf I}{(\barz^2 + \epsilon^2)^{1/2}} + \frac{\barz^2\be_z\be_z^{\rm T} }{(\barz^2 + \epsilon^2)^{3/2}} + \frac{\epsilon^2}{2} \bigg( \frac{\bf I}{(\barz^2 + \epsilon^2)^{3/2}} - 3\frac{\barz^2\be_z\be_z^{\rm T} }{(\barz^2 + \epsilon^2)^{5/2}} \bigg) \bigg) \bm{f}(z-\barz) \, d\barz .
\end{equation}

For this geometry, we may calculate the eigenvalues $\lambda^m_k$ satisfying \eqref{evaleq1}, which are given by
\begin{align}\label{eig_eqs}
\lambda^m_k &= \begin{cases}
\displaystyle -\frac{1}{8\pi} \int_{-1}^1 \frac{2\barz^4+ 2\epsilon^2\barz^2+ \frac{3}{2}\epsilon^4}{(\barz^2 + \epsilon^2)^{5/2}} e^{-i\pi k\barz} \, d\barz ,  & m=z \\
\displaystyle -\frac{1}{8\pi} \int_{-1}^1 \frac{\barz^2 + \frac{3}{2}\epsilon^2}{(\barz^2 + \epsilon^2)^{3/2}} e^{-i\pi k \barz} \, d\barz,  & m=x,y.
\end{cases}
\end{align}

These integrals may be computed explicitly to obtain
\begin{equation}\label{our_eigs}
\lambda^m_k =
\begin{cases}
\displaystyle -\frac{1}{8 \pi} \bigg( (4+ \pi^2\epsilon^2k^2 ) K_0(\pi\epsilon\abs{k}) - 2\pi\epsilon\abs{k} K_1(\pi\epsilon\abs{k}) \bigg),  & m=z \\
 \displaystyle -\frac{1}{8\pi} \bigg( 2K_0(\pi\epsilon\abs{k}) + \pi\epsilon\abs{k} K_1(\pi\epsilon\abs{k}) \bigg), & m=x,y.
\end{cases}
\end{equation}
Here $K_0$ and $K_1$ are zero and first order modified Bessel functions of the second kind, respectively. The eigenvalues $\lambda^m_k$ lie along the curves plotted in Figure \ref{PreRegFig}. Importantly, these eigenvalues satisfy the following lemma.
\begin{lemma}\label{positivity}
For all $\abs{k}\ge 1$ and $m=x,y,z$, the eigenvalues $\lambda^m_k$ given by \eqref{our_eigs} satisfy $\lambda^m_k<0$.
\end{lemma}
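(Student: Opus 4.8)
The plan is to reduce the statement to a single pointwise inequality between modified Bessel functions and then establish that inequality. Write $x = \pi\epsilon\abs{k} > 0$ (so $x \ge \pi\epsilon$ for $\abs{k}\ge 1$, and $x^2 = \pi^2\epsilon^2 k^2$). Then the closed forms \eqref{our_eigs} read $\lambda^{x,y}_k = -\frac{1}{8\pi}\bigl(2K_0(x) + x K_1(x)\bigr)$ and $\lambda^z_k = -\frac{1}{8\pi}\bigl((4+x^2)K_0(x) - 2x K_1(x)\bigr)$. The normal case is immediate: since $K_0(x),K_1(x) > 0$ for every $x>0$, the bracket defining $\lambda^{x,y}_k$ is strictly positive, hence $\lambda^{x,y}_k < 0$. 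It therefore remains to treat the tangential case, i.e.\ to show $g(x) := (4+x^2)K_0(x) - 2x K_1(x) > 0$ for all $x > 0$.

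First I would recast this as a bound on the ratio $R(x) := K_1(x)/K_0(x)$: since $K_0(x) > 0$, the claim $g(x) > 0$ is equivalent to $R(x) < \frac{2}{x} + \frac{x}{2}$. The key step is to produce a sufficiently sharp upper bound for $R$. I would use the recurrence $K_2(x) = K_0(x) + \frac{2}{x}K_1(x)$ together with the Turán-type inequality $K_0(x)K_2(x) > K_1(x)^2$, valid for all $x>0$. Substituting the recurrence yields $K_0^2 + \frac{2}{x}K_0 K_1 - K_1^2 > 0$, that is $R^2 - \frac{2}{x}R - 1 < 0$, whence $R(x) < \frac{1}{x} + \sqrt{1 + \frac{1}{x^2}}$. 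A one-line estimate then closes the argument: squaring the positive quantities shows $\sqrt{1 + \frac{1}{x^2}} \le \frac{1}{x} + \frac{x}{2}$ because $1 + \frac{1}{x^2} \le \frac{1}{x^2} + 1 + \frac{x^2}{4}$, so $R(x) < \frac{1}{x} + \sqrt{1 + \frac{1}{x^2}} \le \frac{2}{x} + \frac{x}{2}$, which is exactly the required bound.

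The one genuinely nontrivial ingredient is the Turán inequality $K_0 K_2 > K_1^2$; everything else is algebra. Since this estimate is classical I would simply cite it, but if a self-contained argument is preferred I would instead run a comparison (barrier) argument on the Riccati equation for $R$. Using $K_0' = -K_1$ and $K_1' = -K_0 - \frac{1}{x}K_1$ gives $R'(x) = R^2 - \frac{1}{x}R - 1$. Setting $\phi(x) = \frac{2}{x} + \frac{x}{2}$ and $w = R - \phi$, a short computation shows that at any point where $w = 0$ one has $w' = \frac{4}{x^2} + \frac{x^2}{4} > 0$, so $w$ is strictly increasing through each of its zeros; hence every zero is an upward crossing, and there can be at most one. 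Since the endpoint asymptotics $K_0(x)\sim\log(2/x)$, $K_1(x)\sim 1/x$ as $x\to0^+$ and $K_\nu(x)\sim\sqrt{\pi/(2x)}\,e^{-x}$ as $x\to\infty$ give $w\to-\infty$ at both ends, an upward-only crossing is impossible; therefore $w$ has no zero and $w < 0$ throughout, again giving $R < \phi$.

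I expect the algebraic reduction and the final squaring step to be routine; the crux is securing the ratio bound $R(x) < \frac{2}{x} + \frac{x}{2}$, for which either the cited Turán inequality or the elementary Riccati comparison suffices. As a sanity check I would confirm the two limiting regimes directly, where $g(x)\sim 4\log(2/x)$ as $x\to0^+$ and $g(x)\approx\sqrt{\pi/(2x)}\,e^{-x}\bigl((x-1)^2+3\bigr)$ to leading order as $x\to\infty$, both manifestly positive; this reassures us that the only place a sign change could hide is the bounded intermediate range, which the ratio bound rules out.
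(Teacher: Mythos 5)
Your proof is correct, and it reaches the conclusion by a genuinely different route from the paper. Both arguments dispose of the normal directions identically (positivity of $K_0,K_1$) and both reduce the tangential case to an upper bound on the ratio $K_1(x)/K_0(x)$ against the threshold $\tfrac{2}{x}+\tfrac{x}{2}$. The difference is the source of that bound: the paper simply cites Lemma 1.16 of the companion spectral paper, which gives $1\le K_1(t)/K_0(t)\le 1+\tfrac{1}{2t}$, after which the conclusion is one line of algebra, $4+t^2-2t\big(1+\tfrac{1}{2t}\big)=(t-1)^2+2>0$. You instead derive the bound $K_1/K_0<\tfrac{1}{x}+\sqrt{1+\tfrac{1}{x^2}}$ from the Tur\'an-type inequality $K_0K_2>K_1^2$ combined with the recurrence $K_2=K_0+\tfrac{2}{x}K_1$, and offer a self-contained Riccati barrier argument as a fallback; I checked the Riccati computation ($R'=R^2-\tfrac{1}{x}R-1$, and $w'=\tfrac{4}{x^2}+\tfrac{x^2}{4}$ at any zero of $w$) and the endpoint asymptotics, and they are right. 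What each approach buys: the paper's is shorter but leans on an external, fairly sharp two-sided estimate; yours is independent of that reference and rests on a classical (or fully elementary) ingredient. The price is that your ratio bound is weaker pointwise than $1+\tfrac{1}{2x}$, and it clears the required threshold only narrowly near $x=0$ — the margin $\tfrac{4+x^2}{2x}-\tfrac{1+\sqrt{1+x^2}}{x}=\tfrac{(\sqrt{1+x^2}-1)^2}{2x}=O(x^3)$ — so the argument works but with essentially no slack there, whereas the paper's bound leaves a uniform gap of at least $2$ in $g/K_0$. Your closing sanity checks on the two limiting regimes are consistent with the exact eigenvalue formulas.
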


\begin{proof}
The case $m=x,y$ is immediate, since $K_0(t)>0$ and $K_1(t)>0$ for any $t>0$. \\

For the tangential direction $m=z$, we first note that, by Lemma 1.16 in \cite{spectral_calc}, we have
\[ 1 \le \frac{K_1(t)}{K_0(t)} \le 1+ \frac{1}{2t}\]
for all $t>0$. Letting $g(t)= (4+t^2)K_0(t) - 2tK_1(t)$, it suffices to show that $g(t)/K_0(t)>0$. But
\[ \frac{g(t)}{K_0(t)} = 4+t^2 - 2t\frac{K_1(t)}{K_0(t)} \ge 3+t^2- 2t  > (t-\sqrt{3})^2 \ge 0.\]
\end{proof}

Now, at a continuous level, regularization is necessary to make sense of inverting the integral operator \eqref{SB_straight}, since $K_0$ and $K_1$ decay exponentially as $\abs{k}\to\infty$. However, at a discrete level, numerical approximation of \eqref{SB_straight} will be invertible, albeit with a large condition number, due to Lemma \ref{positivity}.
This negativity does not hold for other popular slender body approximations which rely on further asymptotic expansion of \eqref{SB_new2} with respect to $\epsilon$ to obtain a limiting centerline velocity expression. In particular, we consider the models of Keller and Rubinow \cite{keller1976slender} and of Lighthill \cite{lighthill1976flagellar}. \\

%%%
The Keller--Rubinow model, proposed in \cite{keller1976slender} and further studied by \cite{gotz2000interactions,johnson1980improved,shelley2000stokesian,tornberg2004simulating}, is equivalent to a full matched asymptotic expansion of \eqref{SB_new} about $\epsilon=0$. In the straight-but-periodic setting, the Keller--Rubinow expression for the slender body velocity is given by
\begin{equation}\label{KR_straight}
8\pi \overline{\bu}(z) = -\bigg(({\bf I} -3\be_z\be_z^{\rm T}) - 2\log(\pi\epsilon/8)({\bf I}+\be_z\be_z^{\rm T})\bigg)\bm{f}(z) - ({\bf I}+\be_z\be_z^{\rm T}) \frac{\pi}{2}\int_{-1}^1 \frac{\bm{f}(z-\barz)-\bm{f}(z)}{\abs{\sin(\pi\barz/2)}} \, d\barz.
\end{equation}

The eigenvalues of the periodic Keller--Rubinow operator taking $\bm{f}$ to $\overline{\bu}$ have been calculated in \cite{gotz2000interactions, shelley2000stokesian, tornberg2004simulating} and are given by
\begin{equation}\label{KR_eigs_OG}
\lambda^m_k = \begin{cases}
\displaystyle \frac{1}{4\pi} \big(1 + 2\log(\pi\epsilon\abs{k}/2) + 2\gamma \big) ,  & m=z \\
\displaystyle -\frac{1}{8\pi} \big( 1 - 2\log(\pi\epsilon\abs{k}/2) - 2\gamma \big) , & m=x,y.
\end{cases}
\end{equation}
Here $\gamma\approx 0.5772$ is the Euler gamma.\\

In both the tangent and normal directions, however, the Keller--Rubinow approximation runs into stability issues at moderately high wavenumbers, apparent in Figure \ref{PreRegFig} at $\abs{k}=\frac{2 e^{-\gamma-1/2}}{\pi\epsilon} \approx 0.217/\epsilon$ (tangent) and $\abs{k}=\frac{2 e^{-\gamma+1/2}}{\pi\epsilon} \approx 0.589/\epsilon$ (normal). In particular, the curve containing the eigenvalues $\lambda^m_k$ crosses zero and becomes negative. This is an issue both because the slender body PDE eigenvalues \eqref{PDE_eigs} are strictly negative, and because, for arbitrary $\epsilon$, there is no clear way to guarantee that $\lambda^m_k\neq 0$, especially for more complicated fiber geometries. Thus some sort of regularization of \eqref{KR_straight} is necessary before approximating the inverse map $\overline{\bu}\mapsto \bm{f}$.  \\

\begin{figure}[ht]
\centering
   \includegraphics[width=.49\textwidth]{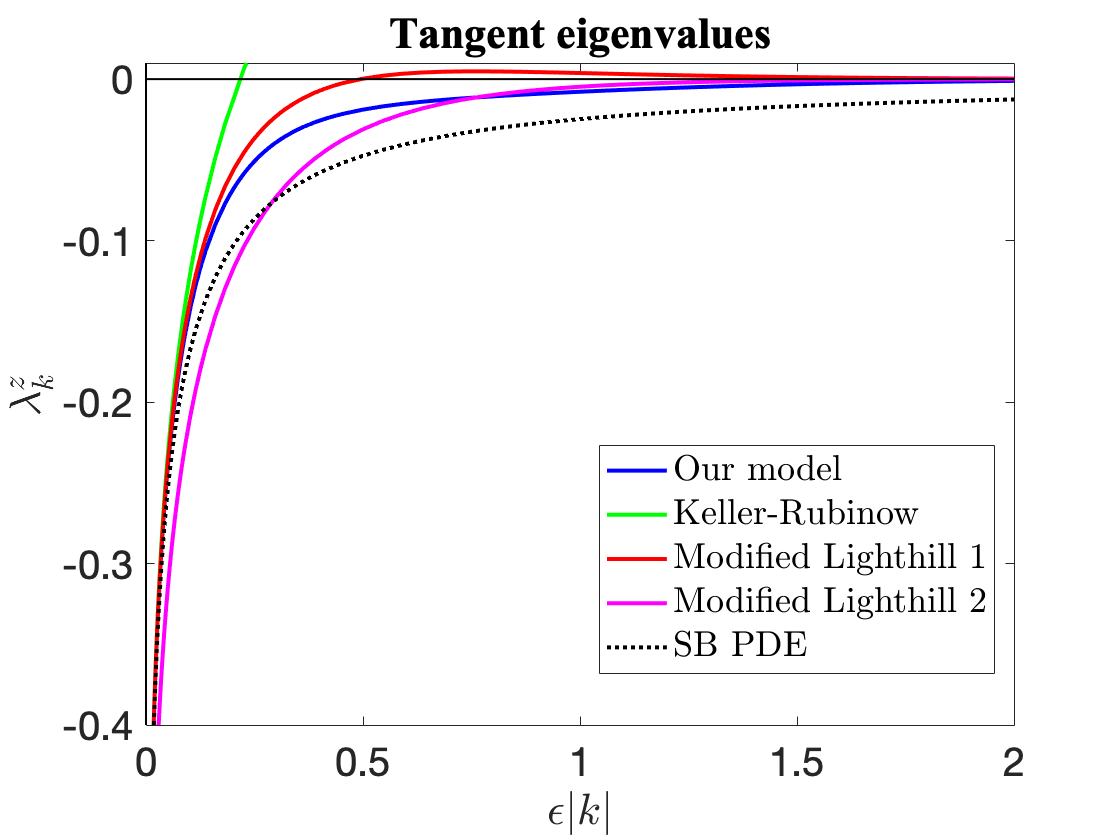}
   \includegraphics[width=.49\textwidth]{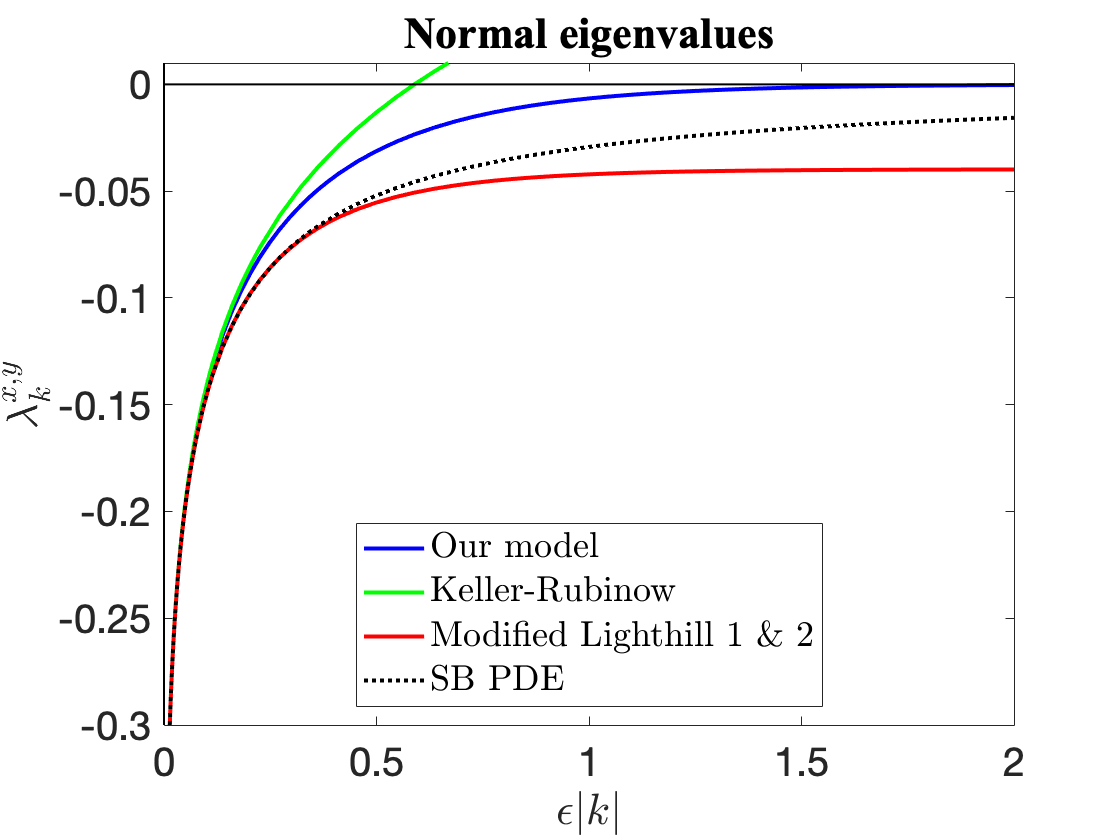}
  \caption{The eigenvalues $\lambda^m_k$ of the operator mapping $\bm{f}\mapsto \overline{\bu}$ in various slender body models lie along the curves plotted here in the case of a straight-but-periodic fiber. Our model (blue) results in strictly negative eigenvalues in both the tangential and normal directions, as does the slender body PDE (dotted). The Keller--Rubinow approximation (green) exhibits instabilities at wavenumbers $\abs{k}\approx 0.2/\epsilon$ (tangential direction) and $\abs{k}\approx 0.6/\epsilon$ (normal direction) as the eigenvalues of the operator mapping $\bm{f}\mapsto \overline{\bu}$ become negative. For the modified Lighthill models, the normal direction eigenvalues $\lambda^x_k$ and $\lambda^y_k$ (red) remain negative at high wavenumber, but in the tangential direction, the eigenvalues of Modified Lighthill 1 (red) become negative when $\abs{k}>0.5/\epsilon$. Furthermore, the tangential eigenvalues of Modified Lighthill 2 (magenta) do not agree with the slender body PDE at low wavenumber. }
  \label{PreRegFig}
\end{figure}

In addition to the Keller--Rubinow model, we consider what we will term the \emph{modified Lighthill} approach to deriving a fiber velocity approximation. This approach takes advantage of the fact that the doublet term of \eqref{SB_new2} only has an $O(1)$ contribution to the fiber velocity very close to $s'=s$, and thus can be integrated asymptotically to leave only a local term. This results in a model similar to that of Lighthill \cite{lighthill1976flagellar}, which was derived via different reasoning but also includes a local doublet term and a nonlocal Stokeslet contribution (see Remark \ref{remark}). \\

There are two ways to consider the nonlocal Stokeslet contribution. The first expression, which we will term Modified Lighthill 1, is given by the periodization of
\begin{equation}\label{ML_straight1}
\overline{\bu}(z) = -\frac{1}{8\pi} \bigg( ({\bf I}-\be_z\be_z^{\rm T}) \bm{f}(s) + \int_{-1}^1 \bigg(\frac{\bf I}{(\barz^2 + \epsilon^2)^{1/2}} + \frac{\barz^2\be_z\be_z^{\rm T} }{(\barz^2 + \epsilon^2)^{3/2}} \bigg) \bm{f}(z-\barz) \, d\barz \bigg) .
\end{equation}
Here the local term $({\bf I}-\be_z\be_z^{\rm T})$ comes from asymptotically integrating the doublet term of \eqref{SB_new} (see estimate 3.65 of \cite{closed_loop} for more detail). Note that in \eqref{ML_straight1}, the Stokeslet term inside the integral is equal to $\bm{f}/\epsilon$ when $\barz=0$.\\

For the second expression, which we will call Modified Lighthill 2, the $\be_z\be_z^{\rm T}$ component of the Stokeslet term is normalized to give the same order contribution at $\barz=0$ as in \eqref{SB_new}; namely, $({\bf I}+\be_z\be_z^{\rm T})\bm{f}/\epsilon$. This yields the periodization of the expression
\begin{equation}\label{ML_straight2}
\overline{\bu}(z) = -\frac{1}{8\pi} \bigg( ({\bf I}-\be_z\be_z^{\rm T}) \bm{f}(s) + \int_{-1}^1 \frac{{\bf I}+\be_z\be_z^{\rm T}}{(\barz^2 + \epsilon^2)^{1/2}} \bm{f}(z-\barz) \, d\barz \bigg) .
\end{equation}

%%%%%
\begin{remark}\label{remark}
The actual model proposed by Lighthill in \cite{lighthill1976flagellar}, written in the periodic, straight setting, has the form
\begin{equation}\label{lighthill0}
\overline{\bu}(z) =  -\frac{1}{8\pi} \bigg( 2({\bf I}-\be_z\be_z^{\rm T}) \bm{f}(z) + \int_{\abs{\barz} > q} \frac{{\bf I}+\be_z\be_z^{\rm T}}{\abs{\barz}} \bm{f}(z-\barz) \, d\barz \bigg); \quad q= \epsilon\sqrt{e}/2.
\end{equation}
At first glance, this looks like a slightly different model from \eqref{ML_straight1} and \eqref{ML_straight2}, due to the 2 in front of the $({\bf I}-\be_z\be_z^{\rm T}) \bm{f}(z)$ term. However, the extra factor here is precisely due to the removal of the section $\abs{\barz}\le q$ from the integral term. Indeed, if we consider the integrand of \eqref{ML_straight1}, we note that
\begin{align*}
\int_{-q}^q \bigg(\frac{\bf I}{(\barz^2 + \epsilon^2)^{1/2}} + \frac{\barz^2\be_z\be_z^{\rm T} }{(\barz^2 + \epsilon^2)^{3/2}} \bigg) \bm{f}(z-\barz) \, d\barz &= \big(2\log(2q/\epsilon)({\bf I}+\be_z\be_z^{\rm T}) - 2\be_z\be_z^{\rm T} \big)\bm{f}(z) + O(\epsilon^2/q^2) \\
&= ({\bf I}-\be_z\be_z^{\rm T})\bm{f}(z) + O(\epsilon^2/q^2)
\end{align*}
for $q$ as in \eqref{lighthill0}. Now, this particular choice of $q$ is not large relative to $\epsilon$, so the $O(\epsilon^2/q^2)$ error term is not small asymptotically. However, this is merely a heuristic and we will not be considering the expression \eqref{lighthill0} in greater depth here. Furthermore, the expressions \eqref{ML_straight1} and \eqref{ML_straight2} are more amenable to calculating eigenvalues.
\end{remark}
%%%%%

The eigenvalues of \eqref{ML_straight1} are given by
\begin{equation}\label{ML1_eigs}
\begin{aligned}
\lambda^m_k &= \begin{cases}
%\displaystyle \frac{1}{8\pi} \int_{-1}^1 \frac{2\barz^2+\epsilon^2}{(\barz^2+\epsilon^2)^{3/2}} e^{-i\pi k \barz} \, d\barz,  & m=z \\
%\displaystyle \frac{1}{8\pi} \bigg( 1 + \int_{-1}^1 \frac{1}{(\barz^2+\epsilon^2)^{1/2}} e^{-i\pi k \barz} \, d\barz \bigg), & m=x,y
%\end{cases} \\
%&= \begin{cases}
\displaystyle -\frac{1}{4\pi} \bigg( 2K_0(\pi\epsilon\abs{k}) - \pi\epsilon\abs{k} K_1(\pi\epsilon\abs{k}) \bigg),  & m=z \\
\displaystyle -\frac{1}{8\pi} \bigg( 1 + 2K_0(\pi\epsilon\abs{k}) \bigg), & m=x,y.
\end{cases}
\end{aligned}
\end{equation}
Now the normal eigenvalues $\lambda^x_k$ and $\lambda^y_k$ are always negative. However, there is still a high wavenumber instability in the tangent direction. In particular, $\lambda^z_k=0$ when $\pi\epsilon\abs{k} \approx 1.55265$, and becomes positive at higher wavenumbers (see Figure \ref{PreRegFig}). Thus the instability issue is not fully resolved by expanding only the doublet term of \eqref{SB_new}. \\

For Modified Lighthill 2, the eigenvalues of \eqref{ML_straight2} are given by
\begin{equation}\label{ML2_eigs}
\begin{aligned}
\lambda^m_k &= \begin{cases}
%\displaystyle \frac{1}{4\pi} \int_{-1}^1 \frac{1}{(\barz^2+\epsilon^2)^{1/2}} e^{-i\pi k \barz} \, d\barz,  & m=z \\
%\displaystyle \frac{1}{8\pi} \bigg( 1 + \int_{-1}^1 \frac{1}{(\barz^2+\epsilon^2)^{1/2}} e^{-i\pi k \barz} \, d\barz \bigg), & m=x,y
%\end{cases} \\
%&=\begin{cases}
\displaystyle -\frac{1}{2\pi}K_0(\pi\epsilon\abs{k}) ,  & m=z \\
\displaystyle -\frac{1}{8\pi} \bigg( 1 + 2K_0(\pi\epsilon\abs{k}) \bigg), & m=x,y.
\end{cases}
\end{aligned}
\end{equation}
Here the eigenvalues $\lambda^x_k$ and $\lambda^y_k$ in the normal directions are identical to \eqref{ML1_eigs}, but the tangential eigenvalues $\lambda^z_k$ are very different. In fact, they are too different: Recall that near $t=0$, the modified Bessel functions $K_0(t)$ and $K_1(t)$ satisfy
\begin{equation}\label{bessel_expand}
K_0(t) = -\log(t/2) - \gamma + O(t^2); \quad tK_1(t) = 1 + O(t^2).
\end{equation}

Therefore, at low wavenumber ($k=O(1)$), the tangential eigenvalues of Modified Lighthill 2 \eqref{ML_straight2} look like
\[ \lambda^z_k = \frac{1}{2\pi} (\log(\pi\epsilon\abs{k}/2)+\gamma ) + O(\epsilon^2k^2). \]
This does not agree with the low wavenumber behavior of the slender body PDE \eqref{PDE_eigs} (see Figure \ref{PreRegFig}). It appears that the normalization in Modified Lighthill 2 \eqref{ML_straight2} results in the wrong model. \\

For the sake of completeness, we also consider a modification of our model \eqref{SB_new2} in which the $\overline{\X}\overline{\X}^{\rm T}$ terms are normalized as in Modified Lighthill 2 \eqref{ML_straight2} to yield a nonzero contribution to the fiber velocity when $s=s'$. In the case of the periodic straight centerline, the modified version of our model becomes the periodization of
\begin{equation}\label{SB_straight2}
\overline{\bu}(z) = -\frac{1}{8\pi} \int_{-1}^1 \bigg(\frac{{\bf I} + \be_z\be_z^{\rm T}}{(\barz^2 + \epsilon^2)^{1/2}} + \frac{\epsilon^2}{2} \frac{{\bf I}- 3\be_z\be_z^{\rm T} }{(\barz^2 + \epsilon^2)^{3/2}}  \bigg) \bm{f}(z-\barz) \, d\barz.
\end{equation}

The eigenvalues of \eqref{SB_straight2} are given by
\begin{equation}\label{our_eigs2}
\begin{aligned}
\lambda^m_k &= \begin{cases}
%\displaystyle \frac{1}{8\pi} \int_{-1}^1 \frac{2\barz^2+\epsilon^2}{(\barz^2+\epsilon^2)^{3/2}} e^{-i\pi k \barz} \, d\barz,  & m=z \\
%\displaystyle \frac{1}{8\pi} \int_{-1}^1 \frac{\barz^2+\frac{3}{2}\epsilon^2}{(\barz^2+\epsilon^2)^{3/2}} e^{-i\pi k \barz} \, d\barz \bigg), & m=x,y
%\end{cases} \\
%%%%
%&= \begin{cases}
\displaystyle -\frac{1}{4\pi} \bigg( 2K_0(\pi\epsilon\abs{k}) - \pi\epsilon\abs{k} K_1(\pi\epsilon\abs{k}) \bigg),  & m=z \\
 \displaystyle -\frac{1}{8\pi} \bigg( 2K_0(\pi\epsilon\abs{k}) + \pi\epsilon\abs{k} K_1(\pi\epsilon\abs{k}) \bigg), & m=x,y.
\end{cases}
\end{aligned}
\end{equation}
Now, the eigenvalues $\lambda^x_k$ and $\lambda^y_k$ in the directions normal to the fiber are unchanged from our original expression \eqref{our_eigs}. However, the tangent eigenvalues $\lambda^z_k$ are now given by the same expression as Modified Lighthill 1 \eqref{ML1_eigs}, which we recall exhibits a high wavenumber instability (Figure \ref{PreRegFig}). \\

The takeaway here is that, at least in the case of a straight, periodic fiber, our model \eqref{SB_new2}, before regularization, captures the negative-definiteness of the the slender body PDE and provides a better approximation than other slender body models \eqref{KR_straight}, \eqref{ML_straight1}, \eqref{ML_straight2}, \eqref{SB_straight2}.

%%%%%%%%%%%%%%%%%%%%%%%%%%%%%%%%%%%%%%%%%%%%%%%%%%%%%%%%%%%%%%%%%
\subsubsection{Regularized comparison}\label{reg_comp}
To make our model truly suitable for inversion, we need to regularize the integral kernel as in \eqref{SB_new3}. In the straight-but-periodic setting, the operator in \eqref{SB_new3} becomes the periodization of
\begin{equation}\label{SB_straight_reg}
\begin{aligned}
8\pi \overline{\bu}(z) &= -2\log\eta \, \bm{f}(z) - \int_{-1}^1 \bigg(\frac{\bf I}{(\barz^2 + \eta^2\epsilon^2)^{1/2}} + \frac{\barz^2\be_z\be_z^{\rm T} }{(\barz^2 + \epsilon^2)^{3/2}} \\
&\hspace{4cm}+ \frac{\epsilon^2}{2} \bigg( \frac{\bf I}{(\barz^2 + \epsilon^2)^{3/2}} - 3\frac{\barz^2\be_z\be_z^{\rm T} }{(\barz^2 + \epsilon^2)^{5/2}} \bigg) \bigg) \bm{f}(z-\barz) \, d\barz .
\end{aligned}
\end{equation}
The eigenvalues of \eqref{SB_straight_reg} are then given by
\begin{equation}\label{ourREGeigs}
\lambda^m_k =
\begin{cases}
\displaystyle -\frac{1}{8 \pi} \bigg( 2\log\eta + 2K_0(\eta \pi\epsilon\abs{k}) + (2+ \pi^2\epsilon^2k^2 ) K_0(\pi\epsilon\abs{k}) - 2\pi\epsilon\abs{k} K_1(\pi\epsilon\abs{k}) \bigg),  & m=z \\
 \displaystyle -\frac{1}{8\pi} \bigg( 2\log\eta + 2K_0(\eta \pi\epsilon\abs{k}) + \pi\epsilon\abs{k} K_1(\pi\epsilon\abs{k}) \bigg), & m=x,y.
\end{cases}
\end{equation}
For $\eta>1$, the spectrum of our operator is bounded away from 0 and \eqref{SB_straight_reg} is a second-kind integral equation for $\bm{f}$. \\

We can compare the behavior of \eqref{SB_straight_reg} with the Tornberg--Shelley regularization of the Keller--Rubinow model. In \cite{tornberg2004simulating,shelley2000stokesian}, the high wavenumber instability in \eqref{KR_straight} is removed by replacing the denominator of the integral term, which vanishes at $\barz=0$, with an expression proportional to $\epsilon$ at $\barz=0$. Using the relation
\[ \int_{-1}^1 \bigg(\frac{\pi}{\abs{2\sin(\pi z/2)}} - \frac{1}{\abs{z}} \bigg) \, dz = -2\log(\pi/4),\]
to rewrite \eqref{KR_straight}, a regularization $\delta\epsilon$, $\delta>0$, is added to the denominator to obtain
\begin{equation}\label{KR_reg}
8\pi \overline{\bu}(z) = -\bigg(({\bf I} -3\be_z\be_z^{\rm T}) + 2\log(\delta)({\bf I}+\be_z\be_z^{\rm T})\bigg)\bm{f}(z) - ({\bf I}+\be_z\be_z^{\rm T}) \int_{-1}^1 \frac{\bm{f}(z-\barz)}{(\barz^2+\delta^2\epsilon^2)^{1/2}} \, d\barz.
\end{equation}
Here we have also used that the second term in the original Keller--Rubinow integral expression can now be integrated up to $O(\epsilon^2)$ errors to nearly cancel the logarithmic term in \eqref{KR_straight}, leaving only $\log(\delta)$. The idea is to then choose $\delta$ such that all eigenvalues of the operator taking $\bm{f} \mapsto\overline{\bu}$ are negative. Since the integral kernel is now smooth, \eqref{KR_reg} is now a second-kind integral equation for $\bm{f}$. \\

The eigenvalues of this $\delta$-regularized Keller--Rubinow operator are given by
\begin{equation}\label{KR_eigs_reg}
\begin{aligned}
\lambda^m_k &= \begin{cases}
%\displaystyle \frac{1}{4\pi} \bigg( -1+ 2\log(\delta) + \int_{-1}^1 \frac{e^{-i\pi k\barz}}{(\barz^2 + \delta^2\epsilon^2)^{1/2}} \, d\barz \bigg) ,  & m=z \\
%%
%\displaystyle \frac{1}{8\pi} \bigg( 1+ 2\log(\delta) + \int_{-1}^1 \frac{e^{-i\pi k\barz}}{(\barz^2 + \delta^2\epsilon^2)^{1/2}} \, d\barz \bigg),  & m=x,y
%\end{cases} \\
%%%%
%&= \begin{cases}
\displaystyle -\frac{1}{4\pi} \bigg( -1+ 2\log\delta + 2 K_0(\delta \pi\epsilon\abs{k}) \bigg) ,  & m=z \\
\displaystyle -\frac{1}{8\pi} \bigg( 1+ 2\log\delta + 2K_0(\delta \pi\epsilon\abs{k}) \bigg),  & m=x,y.
\end{cases}
\end{aligned}
\end{equation}
Since $K_0$ is positive, $\lambda^z_k$ is guaranteed to be negative and bounded away from 0 as long as $\delta> \sqrt{e}$ (see Figure \ref{eig_KR_reg}). \\

\begin{figure}[ht]
\centering
   \includegraphics[width=.49\textwidth]{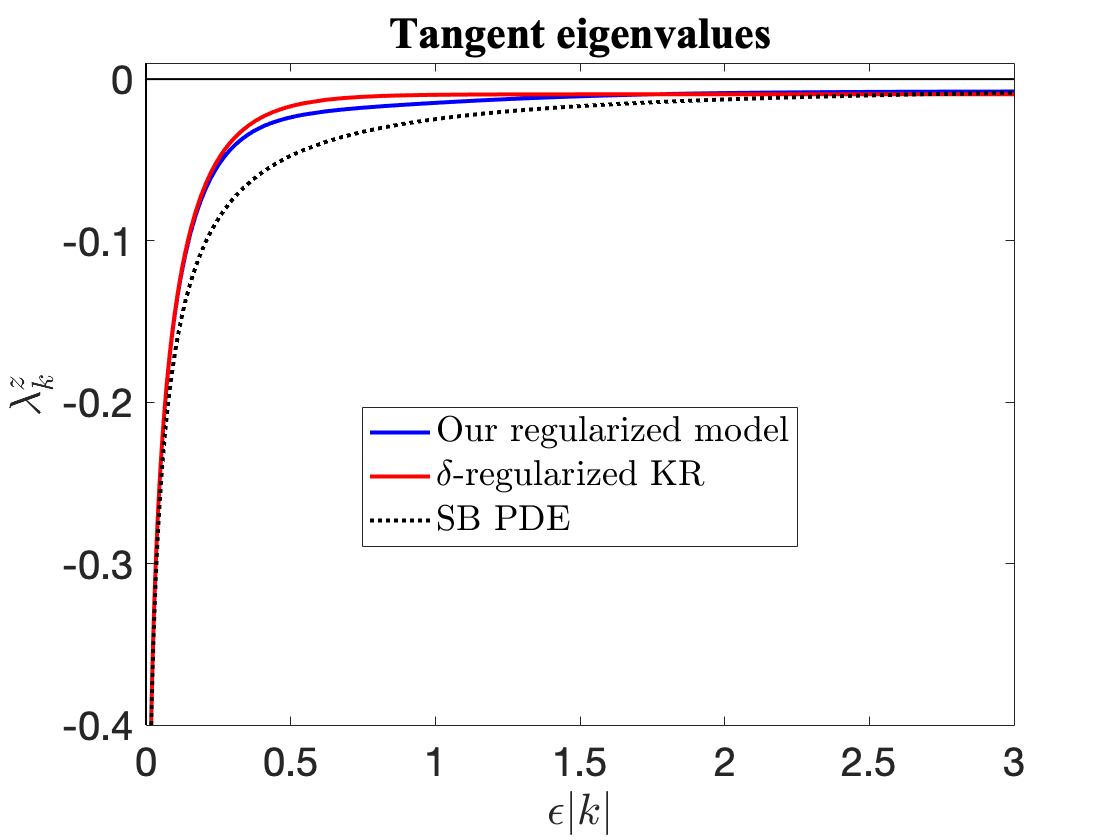}
   \includegraphics[width=.49\textwidth]{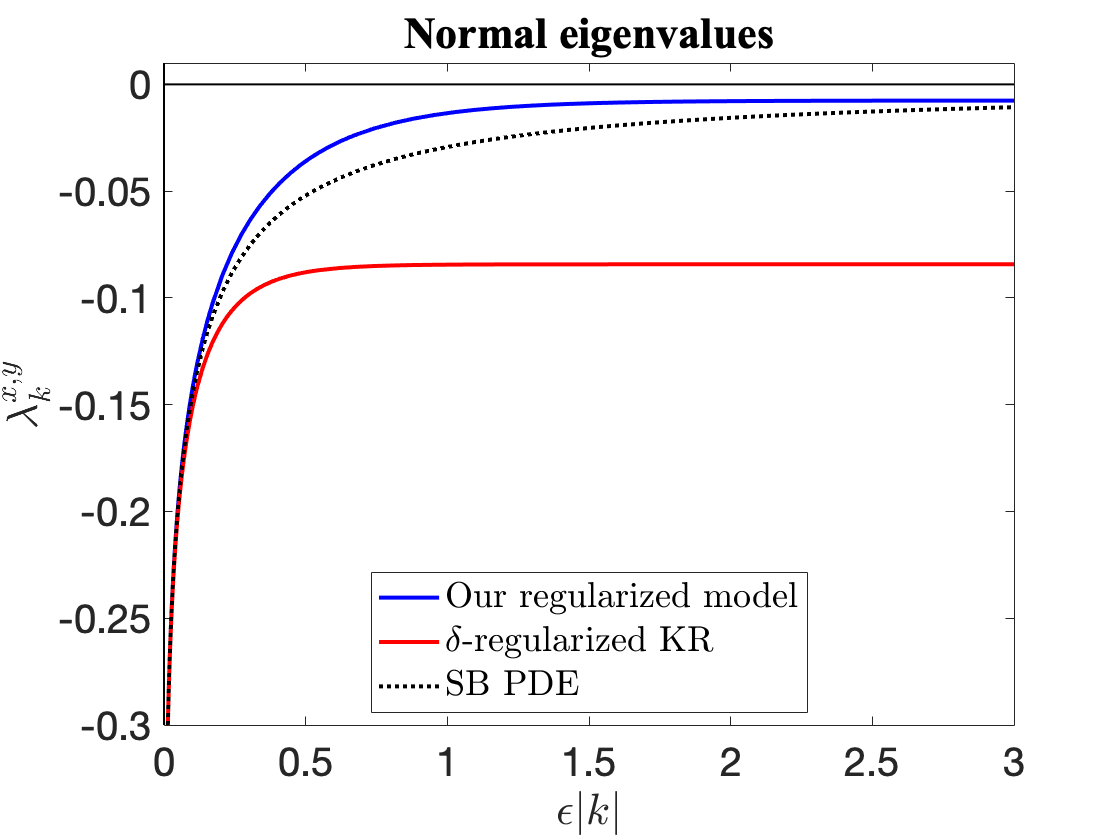}
\caption{The eigenvalues of our regularized model \eqref{SB_straight_reg} with $\eta=1.1$ and the Tornberg--Shelley $\delta$-regularized model \eqref{KR_reg} with $\delta=\sqrt{e}+0.1$ lie along the blue and red curves, respectively. Note that the regularization parameter $\eta$ in our model affects the tangential and normal eigenvalues in a similar way; in particular, $\eta>1$ is required in both cases to ensure that \eqref{SB_straight_reg} is a second-kind integral equation. In the $\delta$-regularized model, the tangential direction requires $\delta>\sqrt{e}$, but the normal direction does not, resulting at least visually in a greater disparity between the $\lambda^x_k,\lambda^y_k$ for the PDE (dotted) and the $\delta$-regularized approximation.}
\label{eig_KR_reg}
\end{figure}

Note that in our model \eqref{SB_straight_reg}, the regularization parameter $\eta$ affects the spectrum of the operator mapping $\bm{f}$ to $\overline{\bu}$ in the same way in both the tangential and normal directions. In particular, in both directions, $\eta>1$ is required to obtain the desired second-kind integral equation. In the Tornberg--Shelley model, the bound $\delta>\sqrt{e}\approx 1.649$ is required to ensure negativity of the tangential eigenvalues, but this lower bound does not apply to the normal direction; in fact, $\delta>e^{-1}\approx 0.368$ is sufficient for ensuring strictly negative normal eigenvalues. This may mean that our model can achieve better agreement with the slender body PDE in both the tangent and normal directions at the same time.  \\

In \cite{spectral_calc}, it is shown that using the $\delta$-regularized model \eqref{KR_reg} to approximate the map $\overline{\bu}\mapsto \bm{f}$ yields $\epsilon^2$ convergence to the slender body PDE for sufficiently smooth $\overline{\bu}$. It is also shown that the constant in the resulting error estimate has the form $C_1\delta^2(1+\log\delta)+C_2/(-1+\log\delta)$ for constants $C_1$ and $C_2$. We expect that a similar error estimate and analogous $\eta$ dependence hold for our model \eqref{SB_straight_reg}; i.e. the constant should look like $C_1\eta^2+C_2/\log\eta$. If $C_1\approx C_2$, this yields an optimal $\eta$ of approximately $1.5$. This should give a rough guideline for a good choice of $\eta$ for more general curved geometries, at least in the periodic setting.

% \begin{remark}
% We can also consider using the method of regularized Stokeslets to rederive the Keller--Rubinow model (see \cite{cortez2012slender}). In the case of a straight-but-periodic fiber, this method yields an almost identical expression to \eqref{KR_reg} but with a different logarithmic factor: $-\log(\sqrt{\delta^2+1}/\delta)$ in place of $\log\delta$. We claim that, due to the low wavenumber expansion \eqref{bessel_expand} of $K_0$, the expression \eqref{KR_reg} yields the correct low wavenumber behavior for the eigenvalues \eqref{KR_eigs_reg}.
% \end{remark}

%%%%%%%%%%%%%%%%%%%%%%%%%%%%%%%%%%%%%%%%%%%%%%%%%%%%%%%%%%%%%%%%%
%%%%%%%%%%%%%%%%%%%%%%%%%%%%%%%%%%%%%%%%%%%%%%%%%%%%%%%%%%%%%%%%%
%%%%%%%%%%%%%%%%%%%%%%%%%%%%%%%%%%%%%%%%%%%%%%%%%%%%%%%%%%%%%%%%%
%%%%%%%%%%%%%%%%%%%%%%%%%%%%%%%%%%%%%%%%%%%%%%%%%%%%%%%%%%%%%%%%%
%%%%%%%%%%%%%%%%%%%%%%%%%%%%%%%%%%%%%%%%%%%%%%%%%%%%%%%%%%%%%%%%%
%%%%%%%%%%%%%%%%%%%%%%%%%%%%%%%%%%%%%%%%%%%%%%%%%%%%%%%%%%%%%%%%%

\section{Numerical discretization of the slender body model}\label{sec:disc}
We turn now to a numerical method for simulating thin rigid fibers in flows.  %This section is concerned with quadrature and convergence; dynamics will be addressed in Section \ref{sec:dynamics}. 
We begin by discussing a general method for numerically solving Fredholm integral equations where the solution must be integrated (i.e. to find the total force and torque on a rigid fiber). 
We apply these general methods to the slender body model \eqref{SB_new3} and perform convergence tests. We note improvements in conditioning and stability for the second kind ($\eta>1$) versus first kind ($\eta=1$) integral equation. Finally, we look at the spectrum of the discretized integral operator in different geometries to verify the negative definite nature of the operator. 

%%%%%%%%%%%
\subsection{Solving the second-kind Fredholm integral equation}\label{sec:fred}

Denote by $\mathbf{K}:L^2([-L,L],\mathbb{R}^3)\rightarrow L^2([-L,L],\mathbb{R}^3)$ the integral operator
\begin{equation}\label{int1}
\mathbf{K}[\bff](s) := \int_{-L}^{L}\!K(s,s')\bff(s')ds'.
\end{equation}
Then a Fredholm integral equation of the first kind reads
\begin{equation}\label{intop}
\by(s) = \mathbf{K}[\bff](s).
\end{equation}
It is well known that the inversion of such an integral operator is an ill-posed problem, meaning that the solution may not be unique or not even exist \cite{atkinson2005theoretical,hansen1992numerical, kress1989linear}. Furthermore, small perturbations to the left hand side of \eqref{intop} can lead to relatively large perturbations of the solution $\bff(s)$. %This manifests, for example, as high frequency oscillations in $\bff(s)$. 
%Numerical discretization of the integral \eqref{intop} provides a certain level of regularization, but the problem is very ill-conditioned and it is necessary to pay careful attention when inverting the linear system. \\
The ill-posedness of this problem can be circumvented by regularizing the integral operator into a second-kind Fredholm integral equation, which takes the form
\begin{equation}\label{intop2}
\by(s) = (\alpha \mathbf{I} + \mathbf{K})[\bff](s)
\end{equation}
for some parameter $\alpha$. Discretization of \eqref{intop2} yields a linear system with a far better condition number. The connection between equation \eqref{intop2} and our model is illustrated in Section \ref{app2SBT}.\\

Numerical methods for solving Fredholm integral equations are well documented \cite{hansen1992numerical,twomey1963numerical} and the approach we adopt is based on the Nystr{\"o}m method \cite[Chapt. 12.4]{atkinson2005theoretical}. The main additional consideration for rigid fibers is that after numerically inverting a second-kind Fredholm integral equation, linear functionals \eqref{FandN} need to be applied to the resulting $\bff(s)$ to find the total force and torque.  \\ 

We consider now the numerical approximation of a general linear functional of $\bff(s)$, given by
\begin{equation}\label{Fexpr}
	{\phi}_M(\bff) = \int_{-L}^{L}\! M(s) \bff(s)\dd s.
\end{equation}
Here $M(s)\in\mathbb{R}^{3\times 3}$ is a bounded, smooth operator and $\bff(s)$ is found by numerically inverting a second-kind Fredholm integral equation of the form \eqref{intop2}.
% \begin{equation} \label{SBT}
% 	 \by(s) =  \alpha \, \bff(s) + \int_{-L}^{L}\!K(s,s')\bff(s') ds'. 
% \end{equation}
The numerical method is obtained discretizing the equation \eqref{intop2}  by replacing the integral with a convergent quadrature formula with nodes $-L=s_1<s_2<...<s_n=L$ and weights $\bw = (w_1,w_2,...,w_n)^T\in\mathbb{R}^{n}$,  and requiring the numerical approximation $\bff_i^{[n]} \approx \bff(s_i)$ to satisfy
%We start by discretizing equation \eqref{intop2} on the set of nodes $-L=s_1<s_2<...<s_n=L$ of a convergent quadrature formula. This gives
\begin{equation} \label{dSBT}
\by(s_i) =  \alpha \, \bff_i^{[n]} + \sum_{j=1}^n w_jK(s_i,s_j)\bff_j^{[n]} \quad\text{for}\quad i=1,...,n.
\end{equation}
%where $\bw = (w_1,w_2,...,w_n)^T\in\mathbb{R}^{n}$ are the quadrature weights and $\bff_i^{[n]} \approx \bff(s_i)$ are the numerical approximations. %such that the above can be written as an exact equation. \\

%Let $\be\in\mathbb{R}^{n}$ be the $n$-dimensional ones vector. We now introduce the following vectorised notation where quantities with an underscore are in $3n$ dimensions, that is,
%\begin{equation}
%\underline{\bu} = (\bu(s_1)^T,...,\bu(s_n)^T)^T,\quad
%\underline{\bff}^{[n]} = ((\bff_1^{[n]})^T,...,(\bff_n^{[n]})^T)^T,\quad \underline{\bv} = \be\otimes\bv,\quad\mathrm{and}\quad
%\underline{\bom} = \be\otimes\bom \in\mathbb{R}^{3n}
%\end{equation}
Introducing the vectors $\underline{\bff}^{[n]} = ((\bff_1^{[n]})^T,...,(\bff_n^{[n]})^T)^T$ and $\underline{\by} = (\by(s_1)^T,...,\by(s_n)^T)^T$, 
equation \eqref{dSBT} can be written compactly as 
\begin{equation}\label{yeq}
\underline{\by} =\left(\alpha\,I+\underline{K}\,\underline{W}\right)\,\underline{{\bff}}^{[n]}.
\end{equation}
Here $I$ denotes the $3n\times 3n$ identity matrix, 
and 
\begin{gather}
\underline{W} = \mathrm{diag}({\bm w})\otimes {\bf I}, \quad\text{and}\quad\underline{K} = \left(\begin{array}{ccc}
	K(s_1,s_1) & \dots & K(s_1,s_n) \\
	\vdots & \ddots & \vdots \\
	K(s_n,s_1) & \dots & K(s_n,s_n)\\
	\end{array}\right)\in\mathbb{R}^{3n\times 3n}
\end{gather}
with $\otimes : \mathbb{R}^{n_1\times m_1}\times\mathbb{R}^{n_2\times m_2}\rightarrow\mathbb{R}^{(n_1n_2)\times (m_1m_2)}$ the Kronecker product of matrices and ${\bf I}$ the $3\times 3$ identity matrix. 
%Letting $I$ denote the $3n\times 3n$ identity matrix, equation \eqref{dSBT} can be written compactly as 
%\begin{equation}\label{yeq}
%\underline{\by} =\left(\alpha\,I+\underline{K}\,\underline{W}\right)\,\underline{{\bff}}^{[n]},
%\end{equation}
%which is a $3n$ dimensional linear system for $\underline{\bff}^{[n]}$. 
We then approximate \eqref{Fexpr} by the same quadrature formula
\begin{align}
{\phi}_M(\bff)\approx \sum_{i=1}^{n}w_iM(s_i){\bff}_i^{[n]}
= (\mathbb{1}^T\otimes {\bf I})\underline{M}\,\underline{W}\,\underline{{\bff}}^{[n]}:={\phi}^{[n]}_M,
\end{align}
where 
\begin{equation}
\underline{M} = \left(\begin{array}{ccc}
M(s_1) & &  0\\
& \ddots &  \\
0 &  & M(s_n)\\
\end{array}\right)\in\mathbb{R}^{3n\times 3n}
\end{equation}
and $\mathbb{1}=(1,\dots, 1)^T\in\mathbb{R}^n.$
Here we have used ${\phi}^{[n]}_M$ to denote the approximation of ${\phi}_M({\bff})$ obtained by quadrature. After inserting the solution of \eqref{yeq}, we obtain 
\begin{equation}\label{F1}
{\phi}^{[n]}_M = (\mathbb{1}^T\otimes {\bf I})\underline{M}\,\underline{W} \left(\alpha I+\underline{K}\,\underline{W}\right)^{-1}\underline{\by}.
\end{equation}

\begin{remark}\label{int_reg}
The linear functional \eqref{Fexpr} has a regularizing effect on numerical solutions to first-kind integral equation \eqref{intop} in the case of a constant fiber radius. This is illustrated via singular value expansion for a thin ring using $M={\bf I}$ in Appendix \ref{app:reg}. This effect applies whenever such a functional is the final quantity of interest, such as in the case of a rigid slender body. 
\end{remark}

%%%%%%%%%%%%%%%%%%%%%%%%%%%%%%%%
\subsubsection{Convergence and error bounds}

We are interested in obtaining an estimate for the error when approximating \eqref{Fexpr} by its discrete approximation \eqref{F1}, which we denote by 
\begin{equation}\label{totalerror}
\mathbf{d}^{[n]} = {\phi_M}({\bff})-\phi^{[n]}_M = \int_{-L}^{L}\!M(s)\bff(s) ds - \sum_{j=1}^{n}w_jM(s_j)\bff_j^{[n]}.
\end{equation}

%Recall that $\underline{\bff}^{[n]}$ arises from equation \eqref{yeq}, so it is not immediately clear how to bound $\mathbf{d}$ in terms of $n$. 
This error will depend on the error committed in the numerical approximation of \eqref{intop2} by the solution $\underline{\bff}^{[n]}$ of \eqref{yeq}. For this reason, we first analyze the convergence of Nystr{\"o}m's method in using \eqref{yeq} to approximate the solution of \eqref{intop2} \cite[Chapt. 12.4]{atkinson2005theoretical}. At each quadrature node, we define the error of this approximation as
\begin{equation}
\mathbf{e}^{[n]}_i:=\bff(s_i)- \bff_i^{[n]}, \quad\text{for}\quad i=1,\dots , n,
\end{equation}
and let $\mathbf{\underline{e}}^{[n]}:=((\mathbf{e}^{[n]}_1)^T,\dots, (\mathbf{e}^{[n]}_n)^T)^T$ denote the error vector.
We want to show that $\| \mathbf{\underline{e}}^{[n]}\|_{\infty}\rightarrow0$ as $ n\rightarrow \infty$. Let $\underline{{\bff}}:=(\bff(s_1)^T,\dots , \bff(s_n)^T)^T$ and define $\mathbf{\underline{\tau}}^{[n]}:=(\mathbf{\tau}_1^T,\dots ,\mathbf{\tau}_n^T)^T$ with components
\begin{equation}\label{tau}
\mathbf{\tau}_i:=\by(s_i)-\alpha \bff(s_i)-\sum_{j}^nK_{i,j}w_j\bff(s_j),
\end{equation}
the truncation error for the discrete second kind equation \eqref{yeq} -- i.e. the residual obtained replacing $\underline{\bff}^{[n]}$ by $\underline{{\bff}}$ in \eqref{yeq}. 
We obtain
\begin{equation}
\label{te}
\qquad \left(\alpha\,I+\underline{K}\,\underline{W}\right)\,\underline{{\bff}}=\underline{\by} -\mathbf{\underline{\tau}}^{[n]}.
\end{equation}
It is easily seen using \eqref{intop2} that
\begin{equation}
\mathbf{\tau}_i=\int_{-L}^{L}K(s_i,s')\bff(s') ds'-\sum_{j}^nK_{i,j}w_j\bff(s_j),
\end{equation} 
which is simply quadrature error, and for any convergent quadrature formula we have
\begin{equation}
\lim_{n\rightarrow \infty}\|\mathbf{\tau}^{[n]}\|_{\infty}=0.
\end{equation}
We next bound the norm of the error $\mathbf{\underline{e}}^{[n]}$ by the norm of  $\mathbf{\underline{\tau}}^{[n]}$ to prove the convergence of the method. Subtracting \eqref{yeq} from \eqref{te} we obtain a linear system satisfied by $\mathbf{\underline{e}}^{[n]}$:
\begin{equation}\label{ferr}
\left(\alpha\,I+\underline{K}\,\underline{W}\right)\,\mathbf{\underline{e}}^{[n]}= -\mathbf{\underline{\tau}}^{[n]}.
\end{equation}
From \cite[Chapt. 12.4]{atkinson2005theoretical} Theorem 12.4.4 and equation (12.4.51) we have that for sufficiently large $n$, say $n\ge n^*$, the matrix  $\left(\alpha\,I+\underline{K}\,\underline{W}\right)$ is invertible and
\begin{equation}\label{C1}
	\|\left(\alpha\,I+\underline{K}\,\underline{W}\right)^{-1}\|_{\infty}\le C_1 \qquad \forall n \ge n^*.
\end{equation}
Thus we can conclude that
\begin{equation}
\|\mathbf{\underline{e}}^{[n]}\|_{\infty}\le \|\left(\alpha\,I+\underline{K}\,\underline{W}\right)^{-1}\|_{\infty}\, \|\mathbf{\underline{\tau}}^{[n]}\|_{\infty}\le C_1\,\|\mathbf{\underline{\tau}}^{[n]}\|_{\infty}.
\end{equation}
Since $C_1$ is independent of $n$ for $n\ge n^*$ and $\|\mathbf{\underline{\tau}}^{[n]}\|_{\infty}\rightarrow 0$ as $n\rightarrow\infty$, this implies that 
$$\lim_{n\rightarrow \infty}\|\mathbf{\underline{e}}^{[n]}\|_{\infty}=0.$$
Consider now the quadrature error
\begin{align}
%	\tau_i :=& \by(s_i) - \by_i^{[n]}= \left(\alpha\bff(s_i) +\int_{-L}^{L}K(s_i,s')\bff(s') ds'\right)\nonumber \\ &\qquad\qquad\qquad\qquad -   \left(\alpha \bff(s_i) +\sum_{j=1}^{n}w_jK(s_i,s_j)\bff(s_j)\right)\quad \text{for}\quad i=1,...,n\label{one}\\
\delta^{[n]} :=& \int_{-L}^{L}M(s)\bff(s) ds - \sum_{j=1}^{n}w_jM(s_j)\bff(s_j).  \label{two}
\end{align}
%Here $\bar{\phi}^{[n]}_M$ is the quadrature rule applied to the exact but unknown force density $\bff(s)$.
From \eqref{totalerror} we obtain
\begin{equation}
\mathbf{d}^{[n]} = \delta^{[n]} -\sum_{j=1}^{n}w_jM(s_j)\mathbf{e}_j,
\end{equation}
and using  \eqref{ferr} the total discretization error for our methods is given by 
\begin{equation}\label{err}
\mathbf{d}^{[n]} =  (\mathbb{1}^T\otimes \mathbf{I})\underline{W}\,\underline{M}(\alpha I + \underline{K}\,\underline{W})^{-1}\underline{\tau}^{[n]} + \delta^{[n]}
\end{equation}
%Hence
%\begin{equation}
%	\|\mathbf{d}\| \le  \|(\be^T\otimes I)\|\,\|(\alpha\underline{W}^{-1} + \underline{K})^{-1}\|\|\underline{\tau}\| + \|\delta\|
%\end{equation}
Since both $\delta^{[n]}$ and $\mathbf{\tau}^{[n]}$ are quadrature errors, $\|(\alpha I + \underline{K}\,\underline{W})^{-1}\|\le C_1$ for all $n\ge n^*$, and $M$ is bounded, the method converges at the same rate as the underlying quadrature.

%%%%%%%%%%%%%%%%%%%%%%%%%%%%%%%%%%%
\subsection{Application to the slender body model and convergence tests}\label{app2SBT}
We next apply our numerical method from Section \ref{sec:fred} to approximate the force and torque on a slender body. Note that the equations \eqref{FandN} are given by setting $M(s)=\mathbf{I}$ and $M(s)=\widehat{\X}(s)$ in the functional \eqref{Fexpr}. That is, 
\begin{equation}\label{eqqq}
	\bm{F} = {\phi}_\mathbf{I}(\bff)\quad\text{and}\quad\bm{T} = {\phi}_\mathbf{\widehat{\X}}(\bff).
\end{equation}
Letting $\alpha = 2\log(\eta)$ and
\begin{align}\label{K}
K(s,s') = \bm{S}_{\epsilon,\eta}(s,s')+\frac{\epsilon^2r^2(s')}{2}\bm{D}_\epsilon(s,s'),
\end{align}
\begin{equation}
	\by(s) = -8\pi\mu(\bv - \widehat{ \X}(s)\bom - \bu_0(\X(s,t),t)),
\end{equation}
our model \eqref{SB_new3} is of the form \eqref{intop2}, and we may write the discretization of \eqref{SB_new3} in the form \eqref{yeq}. Here we have introduced the hat operator $\widehat{\cdot}: \mathbb{R}^3\rightarrow\mathfrak{so}(3)$ which maps vectors in $\mathbb{R}^3$ to $3\times3$ skew symmetric matrices by
\begin{equation}
\bm{\omega}=\left(
\begin{array}{c}
\omega_1\\
\omega_2\\
\omega_3\\
\end{array}
\right)
\mapsto
\widehat{\bom} = \left(
\begin{array}{ccc}
0 & -\omega_3 &  \omega_2 \\
\omega_3 & 0  & -\omega_1 \\
-\omega_2  & \omega_1 & 0   \\
\end{array}
\right).
\end{equation}
Here $\mathfrak{so}(3)$ is the Lie algebra of $SO(3)$, and such that $ \mathbf{\bom}\times\bv = \hat{\bom}\bv$ for $\bom, \bv\in\mathbb{R}^3$. \\

Denote the numerical approximations to \eqref{eqqq} by 
\begin{equation}\label{eqqqn}
\bm{F}^{[n]} = {\phi}^{[n]}_\mathbf{I}\quad\text{and}\quad\bm{T}^{[n]} = {\phi}^{[n]}_\mathbf{\widehat{\X}}.
\end{equation}
Defining the matrices  $\Phi$ and $\Psi\in\mathbb{R}^{3\times 3n}$ as
\begin{align}
	\Phi =&  (\mathbb{1}^T\otimes \mathbf{I})\underline{W}\left(\alpha I+\underline{K}\,\underline{W}\right)^{-1},\label{Phi}\\
	\Psi =&  (\mathbb{1}^T\otimes \mathbf{I})\underline{W}\,\underline{X}\left(\alpha I+\underline{K}\,\underline{W}\right)^{-1}, \label{Psi}
\end{align}
we may then write equations \eqref{eqqqn} as 
\begin{align}
\bm{F}^{[n]} &=\Phi \underline{\by} \quad\text{and}\quad \bm{T}^{[n]} =\Psi \underline{\by} \label{FandT}.
\end{align}

In the next section we perform convergence tests for our discrete model \eqref{FandT} for both a thin ring and a prolate spheroid. With these geometries we are able to calculate accurate reference solutions against which we can compare the accuracy of our numerical solution. We provide an error bound (see \eqref{bound} below) depending on the speed of convergence of the quadrature rule, this error bound is confirmed by our numerical experiments. Furthermore, we will look at how the conditioning of the linear system associated with the discretized integral operator improves as the regularization parameter $\eta$ is increased from $\eta=1$ to $\eta>1$. 

\subsubsection{Convergence of numerical method for closed loop geometry}
By applying the formula \eqref{err}, we now show how one can achieve spectral convergence in the case of a closed fiber geometry with constant radius $\epsilon$ and periodic integration domain. In this setting, we will use trapezoidal quadrature. We begin by bounding the norms of the integration kernels to which we apply the trapezoidal quadrature rules to, namely the integrals \eqref{int1} and \eqref{Fexpr}. Using this, and some smoothness assumptions, we are able bound the quadrature errors $\tau_i^{[n]}$ and $\delta^{[n]}$ using classical error estimates. This leads to a bound on the total error $\mathbf{d}^{[n]}$ for both the force and torque calculation.\\ 

Let $C_2$ be a constant such that 
\begin{equation}
\|\bff(s')\|_\infty\le C_2\quad \text{ for }\quad s\in[-L,L].
\end{equation}
From the definition of $K(s,s')$ (equations \eqref{Sdef}, \eqref{Ddef}, and \eqref{K}) in the constant radius case, we observe that
\begin{equation}
\|K(s,s')\|_\infty\le \frac{3}{2 \epsilon}
\end{equation}
with equality when $s=s'$. From equation \eqref{eqqq} we have $\|M(s)\|_{\infty} = 1$ for the force calculation, while for the torque calculation, $M(s) = \widehat{ \X}(s)$ and therefore
\begin{equation}
\|M(s)\|_{\infty}\le \max_{s\in[-L,L]}\|\X(s)\|_1.
\end{equation}
 Therefore we can bound the integration kernels of \eqref{int1} and \eqref{Fexpr} by
\begin{equation}
\|K(s,s')\bff(s')\|_\infty \le \frac{3}{2 \epsilon} C_2
\end{equation}
and 
\begin{equation}
\|M(s)\bff(s)\|_\infty \le \|M(s)\|_\infty C_2.
\end{equation}
Note that in the constant radius case, $K(s,s')$ has the same regularity as $\X(s)$. If we assume that $\X(s)$, $\bff(s)$ and $M(s)$ are analytic, then using \cite[Theorem 3.2]{trefethen2014exponentially} we can bound the trapezoidal rule quadrature error from equation \eqref{tau} by 
\begin{equation}
\|\tau_i^{[n]}\|_\infty \le \frac{6 L C_2}{\epsilon(e^{a n} - 1)} \quad \text{for}\quad i=1,...,n.
\end{equation}
Similarly, we can bound equation \eqref{two} by
\begin{equation}
\|\delta^{[n]}\|_\infty \le \frac{4 L \|M(s)\|_\infty C_2}{e^{a n} - 1}.
\end{equation}
Here $a$ is some constant. Using equation \eqref{err}, the total discretization error is therefore bounded as
\begin{equation}
\|\mathbf{d}^{[n]}\|_\infty \le  \left(\|(\mathbb{1}^T\otimes \mathbf{I})\underline{W}\,\underline{M}(\alpha I + \underline{K}\,\underline{W})^{-1}\|_\infty\frac{3}{2\epsilon} + \|M(s)\|_\infty\right)\frac{4 LC_2}{e^{an} - 1}.
\end{equation}
Using that $\|\underline{M}\|_{\infty} \le \|{M(s)}\|_{\infty}$, $\|\underline{W}\|_\infty = \frac{2L}{n}$ and $C_1$ is given by equation \eqref{C1}, this simplifies to
\begin{equation}\label{bound}
\|\mathbf{d}^{[n]}\|_\infty \le  \left(\frac{6C_1L}{2\epsilon} + 1\right)\frac{4 L\|{M(s)}\|_{\infty}C_2}{e^{an} - 1}.
\end{equation}
Hence, the method shares the same exponential convergence as the underlying trapezoidal rule. We remark that one could perform an analogous analysis for open ended fiber geometries with, e.g., Gauss-Lobatto quadrature, and derive similar results. Furthermore, we also remark that one could require less stringent regularity assumptions on the integration on the kernels or the fiber centreline $\X(s)$, e.g., $M(s)\bff(s)\in C^{2m+2}[-L,L]$. Then \cite[Thm. 5.5]{atkinson1978introduction} can be used to derive asymptotic error estimates for $\tau_i^{[n]}$ and $\delta^{[n]}$ of order $O(h^{2m+2})$. Nonetheless, we do observe spectral convergence in numerical experiments in the following sections, as predicted by the bound \eqref{bound}.

\subsubsection{Thin ring translating with unit velocity}
As a convergence test, we use \eqref{FandT} to calculate the force on a thin ring of unit length in the $xy$-plane translating in the $z$ direction with unit velocity in zero background flow. We will consider both the first- and second-kind formulations of the model. In this setting, the force on the ring can be calculated to arbitrarily high precision by evaluating elliptic integrals, which can be used as a reference solution. For a circular centerline parametrized by  $$\X(s) = 
\left(\frac{\cos(\pi s)}{2\pi},\frac{\sin(\pi s)}{2\pi},0\right)^T,$$ the $z$-component of our unregularized ($\eta = 1$) model becomes
\begin{equation}\label{ring_int}
8\pi\mu =-\int_{-\frac{1}{2}}^{\frac{1}{2}} {\frac {\sqrt {2}\,\pi \left( 3\,{\epsilon}^{2}{\pi}^{2}-\cos \left( 2\pi\,(s-s')
		\right) +1 \right) }{ \left( 2\,{\epsilon}^{2}{\pi}^{2}
		-\cos \left( 2\pi\,(s-s') \right) +1 \right) ^{3/2}}}
f^z(s') \, ds' .
\end{equation}
As in the straight-but-periodic geometry of Section \ref{spec_sec}, the eigenfunctions of this operator are the Fourier modes $f_k^z(s) = \exp(i2\pi ks)$. The force $\bm{F} = (F,0,0)^T$ is therefore given by 
\begin{equation}
F = \int_{-\frac{1}{2}}^{\frac{1}{2}}\!f^z(s){\rm d}s = \frac{8\pi\mu}{\lambda^z_0}
\end{equation}
where $\lambda^z_0$ is the $k=0$ eigenvalue. This can be found by evaluating the integral in equation \eqref{ring_int} with $f^z(s) = f^z_0(s)=1$, which gives
\begin{equation}
\lambda^z_0=-c_\epsilon\left( 2\,\phi_K \left( c_\epsilon \right) +\phi_E \left( c_\epsilon \right)  \right).
\end{equation}
Here $c_\epsilon = \sqrt { \left( {\epsilon}^{
		2}{\pi}^{2}+1 \right) ^{-1}}$, and 
\begin{equation}
\phi_K(x) = \int_{0}^{1}\!{\frac {1}{\sqrt {1-{\theta}^{2}}\sqrt {1-{x}^{2}{\theta
			}^{2}}}}\,{\rm d}\theta\quad\text{and}\quad \phi_E(x) = \int_{0}^{1}\!{\frac {\sqrt {1-{x}^{2}{\theta}^{2}}}{\sqrt {1-{\theta}
			^{2}}}}\,{\rm d}\theta
\end{equation}
are the complete elliptic integrals of the first and second kind, respectively. \\

For $\epsilon = 0.05,0.025,0.01$ and $0.005$, equation \eqref{ring_int} is discretized using trapezoidal quadrature, and our numerical method is used to approximate $F$ by equation \eqref{FandT}. Figure \ref{ring_error} plots the error as a function of $n$ for four different values of $\epsilon$. We observe spectral convergence of the error to machine precision, which is consistent with our error estimates from expression \eqref{bound}. We note that the condition number of the unregularized discrete integral operator grows exponentially as $n$ increases, as shown in Figure \ref{cond_num_unreg}. However, because we are considering a rigid fiber with constant radius, the regularizing effect of computing $F$ comes into effect (see Remark \ref{int_reg}) and thus the effect of this ill-conditioning is not apparent in the final force calculation. This may be contrasted with the case of the prolate spheroid, where, as we will see in Section \ref{subsec:prolate}, the conditioning does have a noticeable effect on the error. Nevertheless, we note that by setting $\eta>1$ we can improve the condition number of the linear system (see Figure \ref{cond_num_reg}). 

\begin{figure}[!ht]
	\centering
	\includegraphics[width=0.5\linewidth]{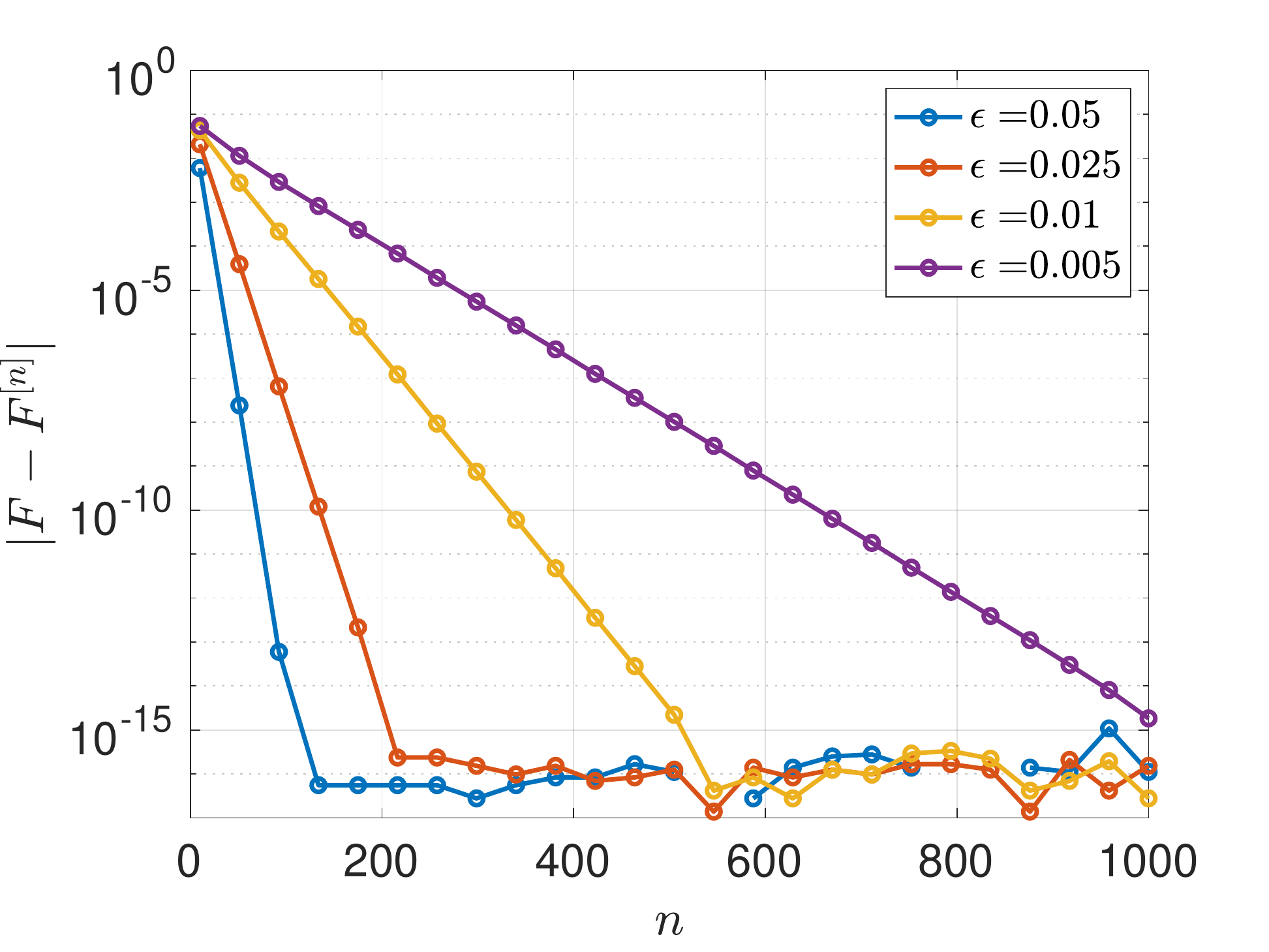}
	\caption{The approximate drag force $F^{[n]}$ on a thin ring translating broadwise with unit velocity converges with spectral accuracy to the true force $F$.}
	\label{ring_error}
\end{figure}

\begin{figure}[!ht]
	\centering
	\begin{subfigure}{.45\textwidth}
		\includegraphics[width=\linewidth]{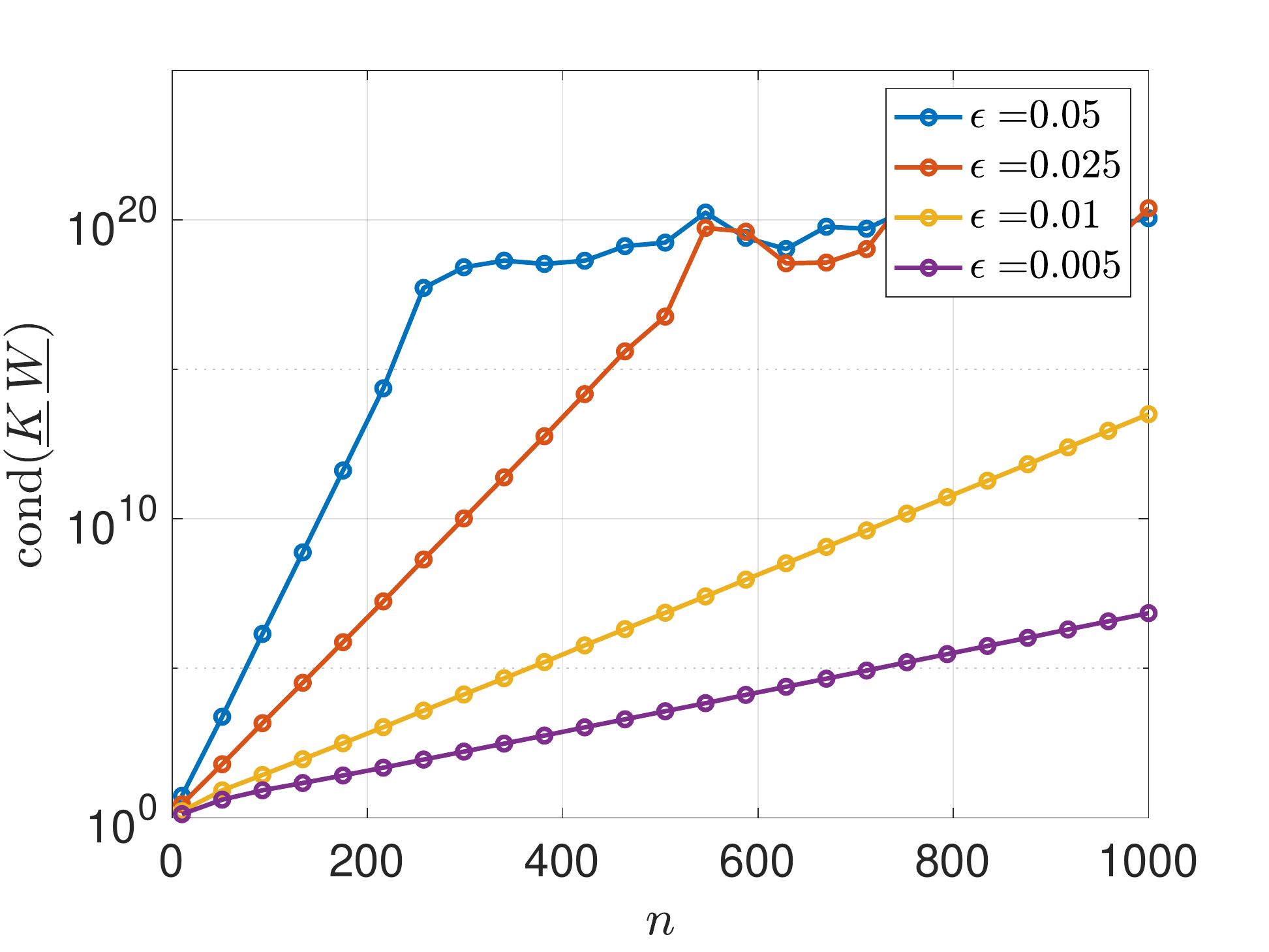}
		\caption{Unregularized ($\eta=1$).}
		\label{cond_num_unreg}
	\end{subfigure}
	\begin{subfigure}{.45\textwidth}
		\includegraphics[width=\linewidth]{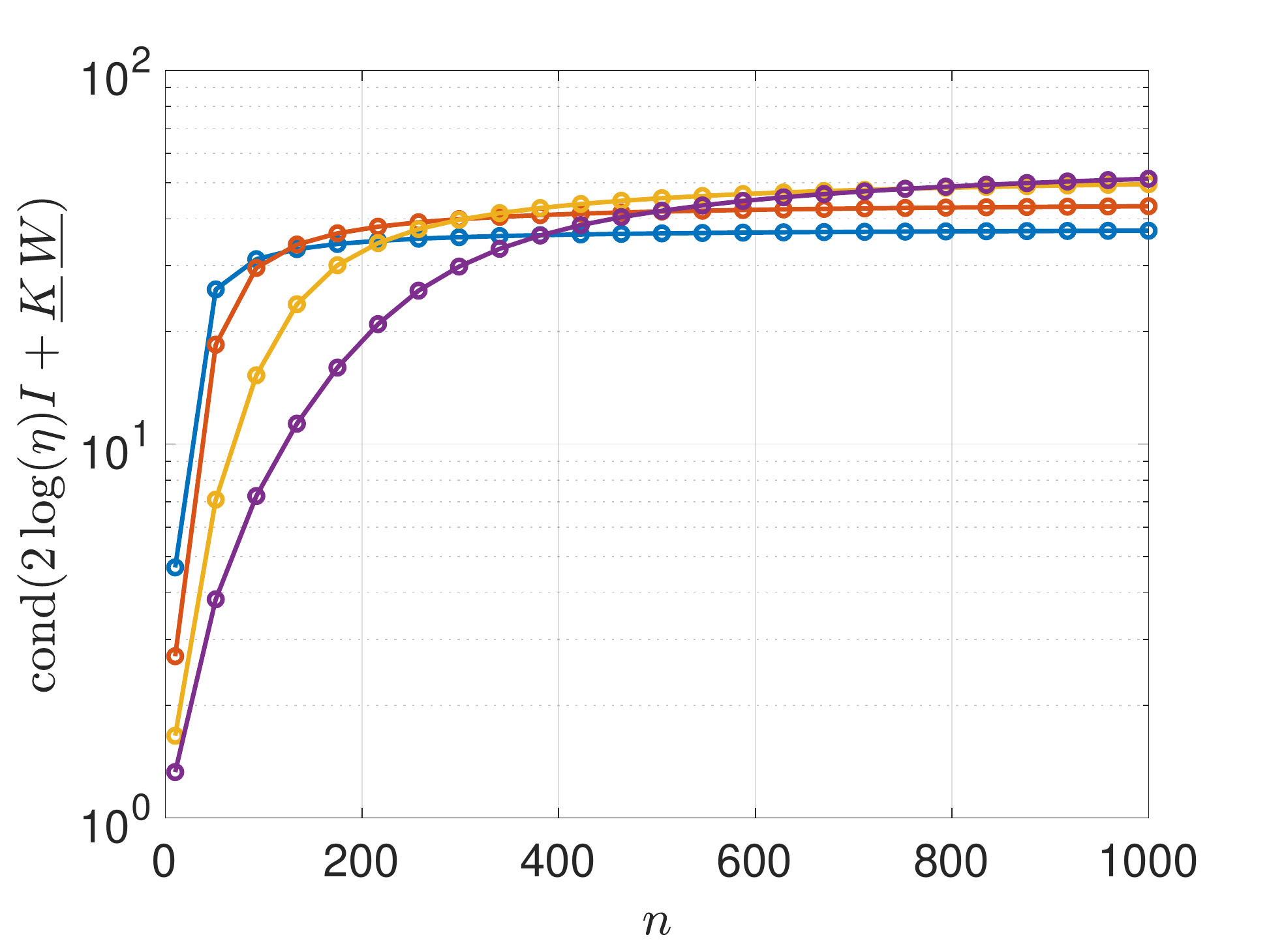}
		\caption{Regularized ($\eta=1.5$).}
		\label{cond_num_reg}
	\end{subfigure}
	\caption{The condition numbers associated with the discretized versions of the unregularized ($\eta = 1$) and regularized ($\eta = 1.5$) slender body models for calculating the force on a thin ring. Note the change in scale between the two figures.}
	\label{cond_num_ring}
\end{figure}

\subsubsection{Prolate spheroid with artificial fluid velocity field}\label{subsec:prolate}
We next use \eqref{FandT} to compute the drag force for a stationary prolate spheroid immersed in an artificial fluid velocity field. The particle centerline is aligned in the $z$-direction, parameterized by $\X(s) = (0,0,s)^T$, $s\in[-1,1]$. The fluid velocity field $\bu(s) = (u(s),0,0)^T$ is designed such that $\bff(s) = (f^x(s),0,0)^T$ is a known analytic function. We choose this function to be a Gaussian $f^x(s) = \exp\left(-\frac{s^2}{\epsilon^2}\right)$ such that the force decays to zero at the fiber endpoints and use high order Gauss-Lobatto quadrature for the discretization of the integral operator. Denote the set of $n$ quadrature nodes by $\{s_i\}^{n}_{i=1}$. Inserting the above expression for $f^x(s)$ into our model \eqref{SB_straight}, the fluid velocity at $s_i$ is found by solving the integral 
\begin{equation}\label{u_sph}
	u(s_i) = \frac{-1}{8\pi}\left(2\log(\eta)\,\exp\left(-\frac{s_i^2}{\epsilon^2}\right)+ \int_{-1}^{1} {\frac {{\epsilon}^{2} r \left( s_i \right)^{2}+\frac{1}{2}{\epsilon}^{2} r \left( s' \right)  ^{2}+ \left( s_i-s' \right) ^{2}}{ \left( {\epsilon}^{2} r \left( s_i\right)^{2}+ \left( s_i-s' \right) ^{2} \right) ^{3/2}}} \exp\left(-\frac{s'^2}{\epsilon^2}\right) ds'\right)
\end{equation}
where the ellipsoidal radius function is given by equation \eqref{prolate}. We also take the viscosity $\mu=1$. To solve for $u(s_i)$ for $i=1,...,n$, the integral in equation \eqref{u_sph} is evaluated to machine precision using MATLAB's built-in \texttt{integral} function, which uses adaptive quadrature. For this fluid velocity field, the total force $\bm{F} = (F,0,0)^T$ on the ellipsoid is found by 
\begin{equation}\label{F_sph}
	F = \int_{-1}^{1}\exp\left(-\frac{s^2}{\epsilon^2}\right)ds = \sqrt{\pi}\,\epsilon\,\mathrm{erf}(\frac{1}{\epsilon}).
\end{equation}

We compute numerical approximations to $F$ using equation \eqref{FandT} for four choices of $\epsilon$. We initially set $\eta=1$ and compute these numerical approximations for the non-regularized, first-kind equation. The errors are presented in Figure \ref{convergence_error_sph_unreg}. We see that the error converges spectrally up to a certain point where the method begins to diverge due to numerical instabilities and poor conditioning of the discrete integral operator, which is plotted in Figure \ref{cond_num_sph_unreg}. \\

However, by choosing $\eta>1$, we can amend the condition number and therefore improve the accuracy of the numerical solution. In Figure \ref{sph_error_reg}, we fix $\epsilon=0.025$ and calculate the errors for four choices of $\eta$. We see from Figure \ref{convergence_error_sph_reg} that the error converges spectrally to machine precision for all such choices of $\eta$. This level of accuracy was unattainable for the non-regularized problem. Furthermore, we observe from Figure \ref{cond_num_sph_reg} that the condition number of the discrete integral operator is bounded by a value that becomes smaller for larger $\eta$.

\begin{figure}[!ht]
	\centering
	\begin{subfigure}{.45\textwidth}
		\includegraphics[width=\linewidth]{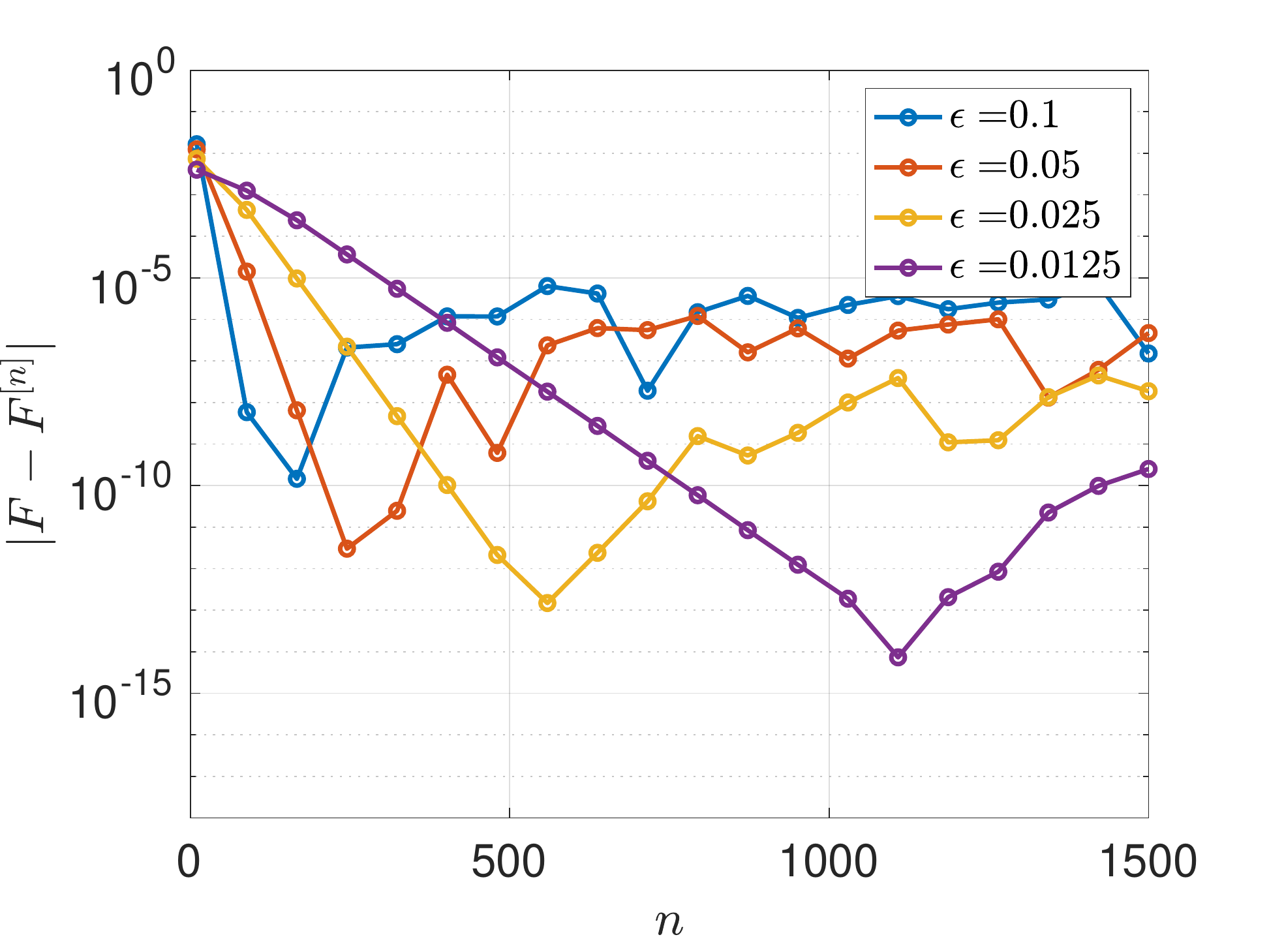}
		\caption{}
		\label{convergence_error_sph_unreg}
	\end{subfigure}
	\begin{subfigure}{.45\textwidth}
		\includegraphics[width=\linewidth]{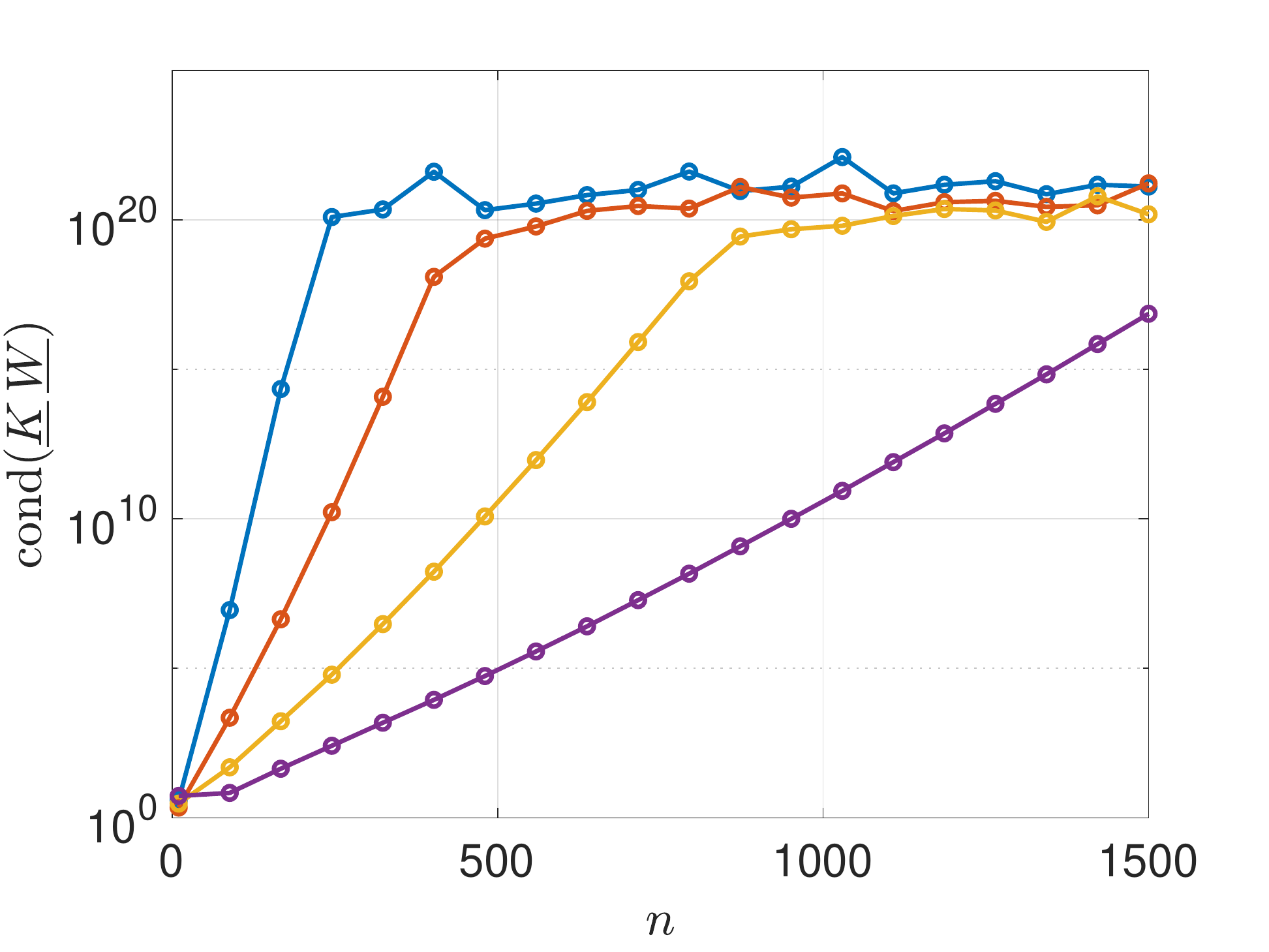}
		\caption{}
		\label{cond_num_sph_unreg}
	\end{subfigure}
	\caption{The errors (a) and condition numbers (b) associated with the unregularized ($\eta = 1$) numerical method for the calculation of the force on a prolate spheroid for different values of $\epsilon$. }
	\label{sph_error_unreg}
\end{figure}

\begin{figure}[!ht]
	\centering
	\begin{subfigure}{.45\textwidth}
		\includegraphics[width=\linewidth]{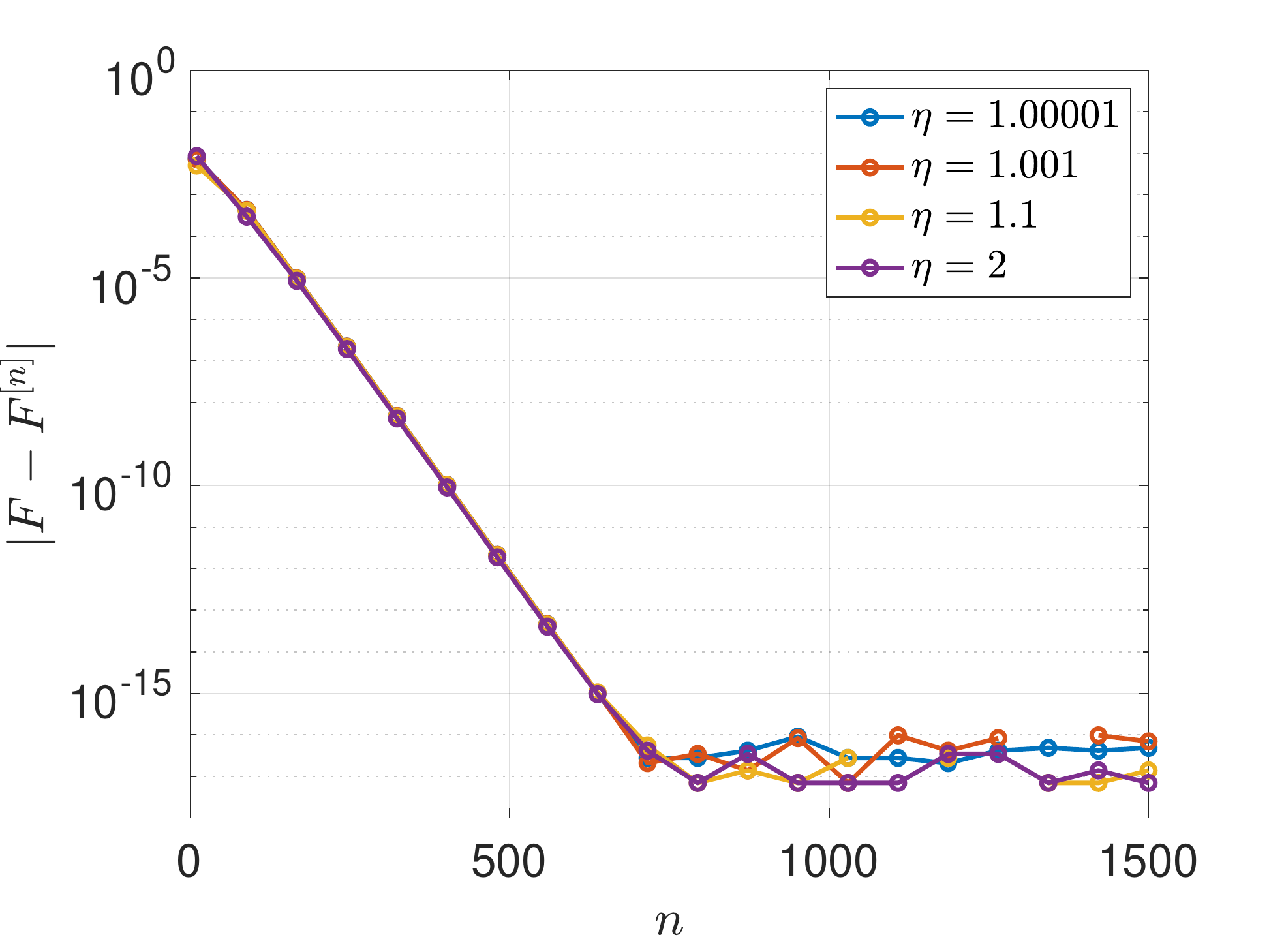}
		\caption{}
		\label{convergence_error_sph_reg}
	\end{subfigure}
	\begin{subfigure}{.45\textwidth}
		\includegraphics[width=\linewidth]{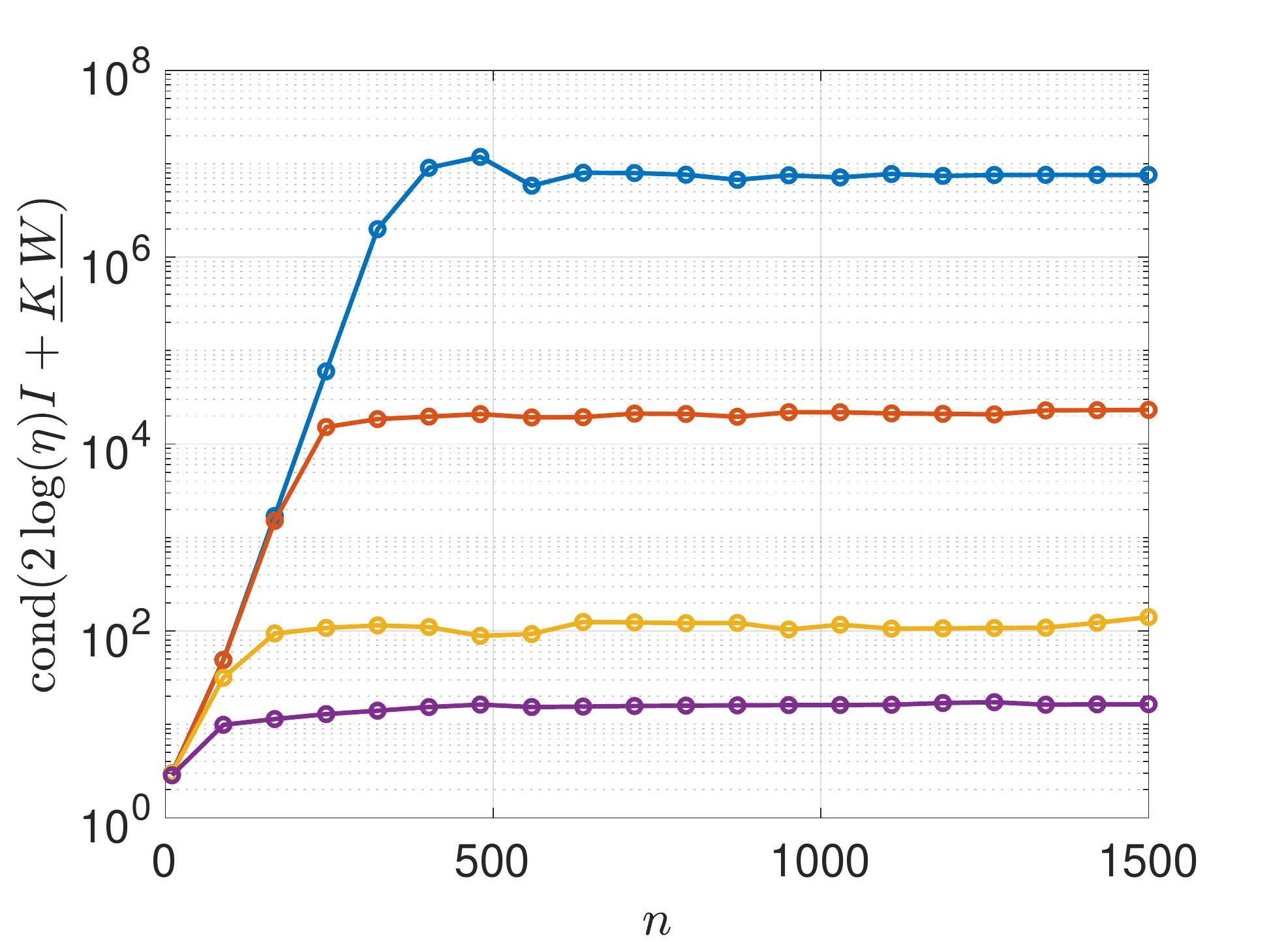}
		\caption{}
		\label{cond_num_sph_reg}
	\end{subfigure}
	\caption{The errors (a) and condition numbers (b) associated with the regularized numerical method for the calculation of the force on a prolate spheroid for $\epsilon = 0.025$. Similar results are observed for other values of $\epsilon$. }
	\label{sph_error_reg}
\end{figure}

%%%%%%%%%%%%%%%%%%%%%%%%%%%%%%%%%%%
\subsection{Spectrum of the slender body operator in different geometries}\label{sec:deigs}
One important unresolved question about the slender body model \eqref{SB_new3} is the effect of different geometries, including curvature, endpoints, and non-uniform fiber radius, on the spectrum of the integral operator. Although we cannot analytically determine the spectrum of the continuous operator in general, we \emph{can} determine the eigenvalues of the discrete operator $(2\log(\eta)I+\underline{K}\,\underline{W})$ \eqref{yeq}. We consider first the unregularized version $\eta=1$, recalling that in the straight-but-periodic geometry of Section \ref{spec_sec}, the continuous operator was provably negative definite. Ideally we would like to see evidence that this negative definiteness persists in general geometries, as this would be the physically correct behavior and also would agree with the underlying slender body PDE operator \eqref{SB_PDE}. \\

%
%\textcolor{blue}{[LO: This is the z-component only (so, one of the normal directions)... I am kind of debating whether or not to include it at all...]}  We start by calculating the eigenvalues of the continuous operator, given by the right hand side of equation \eqref{ring_int} where $f_k^z(s) = \exp(i2\pi ks)$. This integral is evaluated to machine precision using MATLAB's built-in \texttt{integral} function and the resulting eigenvalues are plotted in figure \ref{eigs_ring_exact}. We see here that the eigenvalues are negative definite, except for those that correspond to roughly $k>\frac{5.5}{\epsilon}$. These eigenvalues oscillate above and below 0 and are of the order of machine precision and we infer that positive eigenvalues are attributed to round-off errors. We therefore expect similar behavior for the spectrum $\{\lambda_i\}_{i=1}^{3n}$ of $\underline{K}\,\underline{W}$ for the thin ring, since we know that the continuous operator is negative definite to machine precision. This will give us an idea of how much numerical error is present in the following eigenvalue calculations.\\

We begin by calculating the eigenvalues $\{\lambda_i\}_{i=1}^{3n}$ of $\underline{K}\,\underline{W}$ for the thin ring. Letting $\lambda_{\max} = \max_{i}(\lambda_i)$, in Figure \ref{deigs_ring} we plot $\lambda_{\max}$ versus $n$ for five different values of $\epsilon$. Note that for very large $n$ relative to $\epsilon^{-1}$ (roughly $n=O(\epsilon^{-2})$), we begin to see numerical error resulting in very small positive eigenvalues of $\underline{K}\,\underline{W}$ (denoted by red markers). However, the magnitude of these positive eigenvalues are on the order of machine precision and therefore can be attributed to round-off errors. \\

We next consider the effects of endpoints and a non-uniform radius by calculating the eigenvalues of $\underline{K}\,\underline{W}$ for a slender prolate spheroid \eqref{prolate}, keeping in mind the above level of numerical error. In Figure \ref{deigs_sph} we again plot $\lambda_{\max}$ versus $n$ for four different values of $\epsilon$. Again for $n=O(\epsilon^{-2})$ we begin to see small positive eigenvalues which are significantly larger than for the thin ring (around $O(10^{-10})$). However, the magnitude of the positive eigenvalues is still very small and bounded as $n$ increases. It is not clear whether this is a numerical artifact or an actual eigenvalue crossing 0 for the continuous operator. At any rate, the non-regularized operator would never actually be used for simulations with such large $n$ because the condition number of $\underline{K}\,\underline{W}$ is prohibitive (see Figure \ref{cond_num_sph_unreg}). It appears that a very reasonable choice of regularization parameter $\eta$ will ensure that none of these near-zero eigenvalues actually cross zero. \\

%\begin{figure}
%	\centering
%	\includegraphics[width=0.45\linewidth]{}
%	\caption{Eigenvalues of the continuous thin ring operator.}
%	\label{eigs_ring_exact}
%\end{figure}
\begin{figure}[!ht]
	\centering
	\begin{subfigure}{.45\textwidth}
		\includegraphics[width=\linewidth]{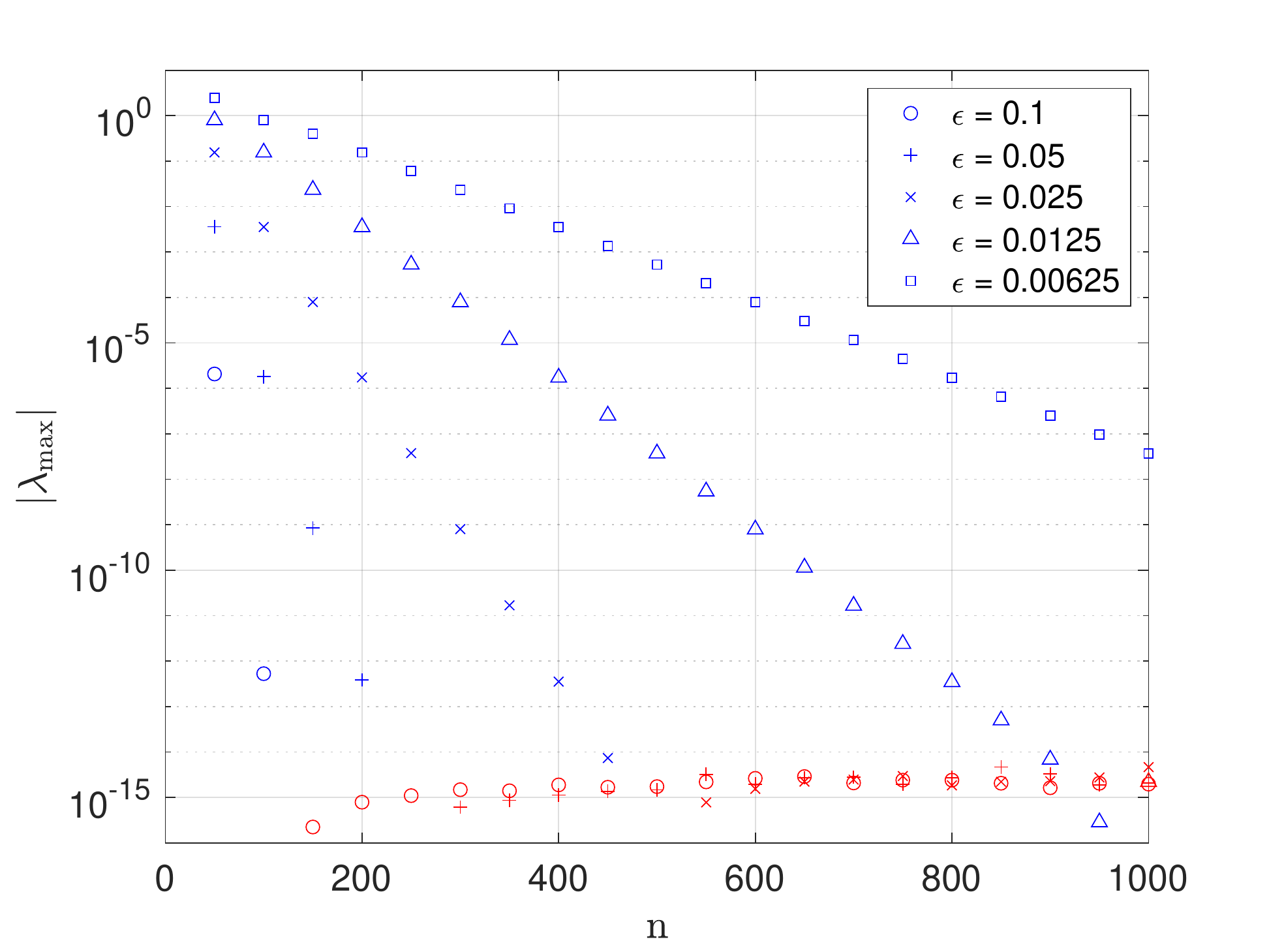}
		\caption{Thin ring}
		\label{deigs_ring}
	\end{subfigure}
	\begin{subfigure}{.45\textwidth}
		\includegraphics[width=\linewidth]{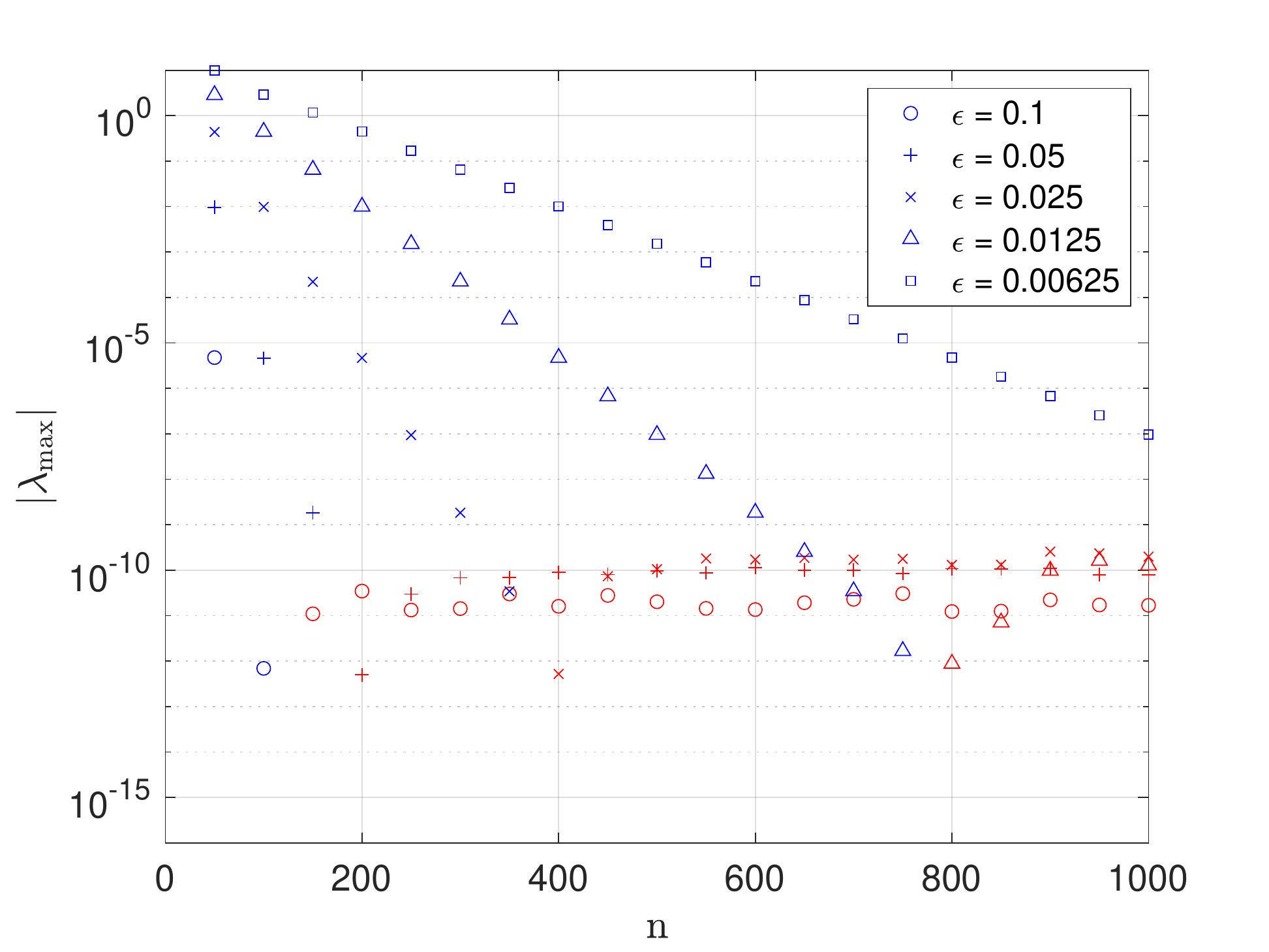}
		\caption{Spheroid}
		\label{deigs_sph}
	\end{subfigure}
	\caption{Magnitude of the maximum eigenvalue of the non-regularized discrete slender body operator $\underline{K}\,\underline{W}$. Blue markers mean that $\lambda_{\max}<0$ implying $\underline{K}\,\underline{W}$ is negative definite, while red markers mean that $\lambda_{\max}>0$.}
	\label{deigs}
\end{figure}

As a final test, we calculate the spectrum of $\underline{K}\,\underline{W}$ for randomly but systematically generated curvy fibers with complicated shapes (Figure \ref{rand_fibres}). Here the magnitude of the fiber's deviation from a straight line is controlled by a small parameter $\delta\ge 0$. The fiber shapes are generated by interpolating $m$ points $(x_i,y_i,z_i)\in \RR^3$, $i=1,...,m$, with cubic splines. Here $z_i=(i-1)\frac{2L}{m}$ while $x_i,y_i\in[-\delta,\delta]$ are given by a random number generator and are of size at most $\delta$. Setting $\delta = 0$ corresponds to a straight fiber. Examples of the fiber centerline for $m=10$ and four different values of $\delta$ are given in Figure \ref{rand_fibres}. \\
\begin{figure}[!ht]
	\centering
	\begin{subfigure}{.225\textwidth}
		\includegraphics[width=\linewidth]{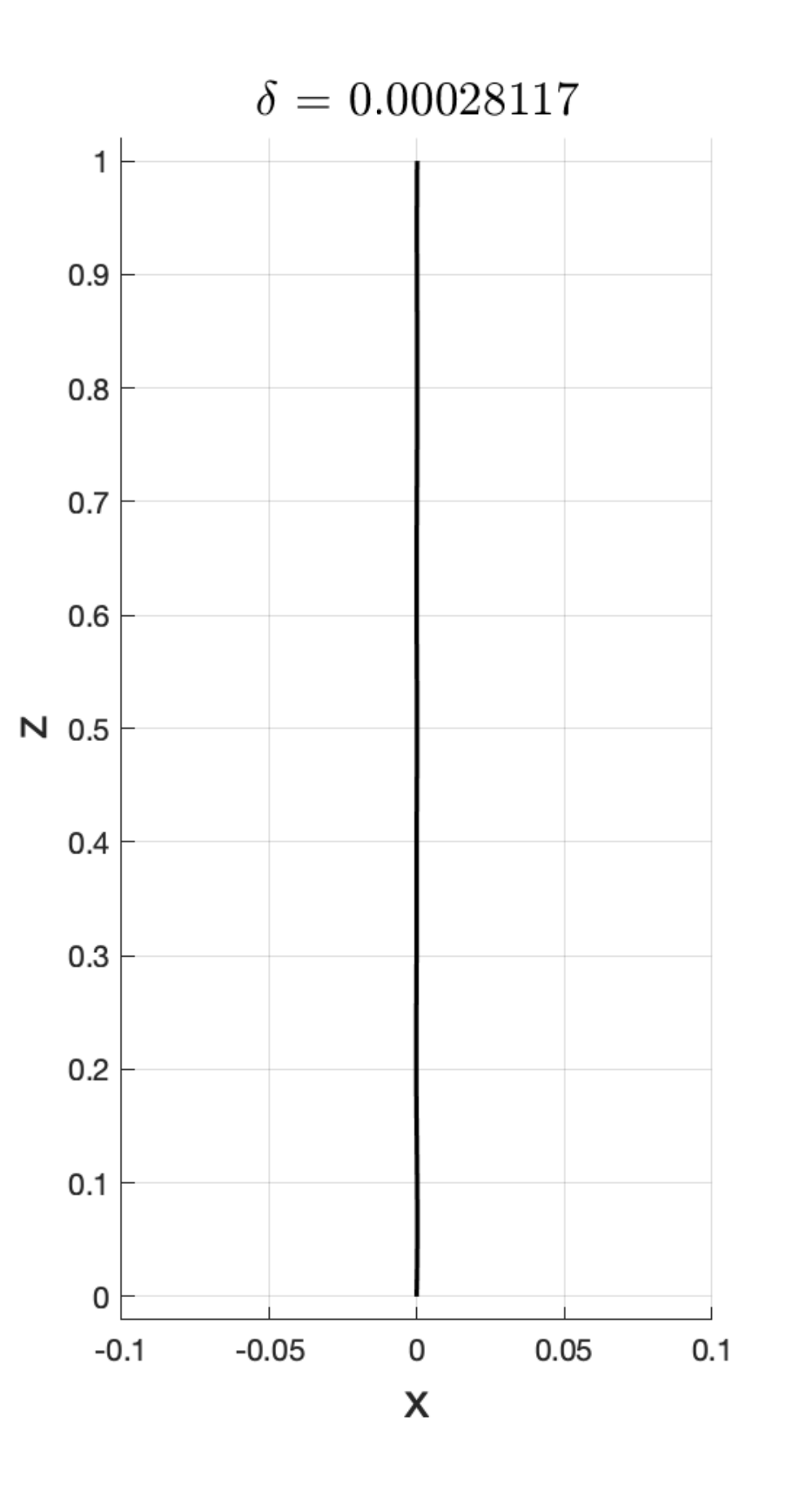}
		\caption{}
		\label{fig:squiggle4}
	\end{subfigure}
	\begin{subfigure}{.225\textwidth}
		\includegraphics[width=\linewidth]{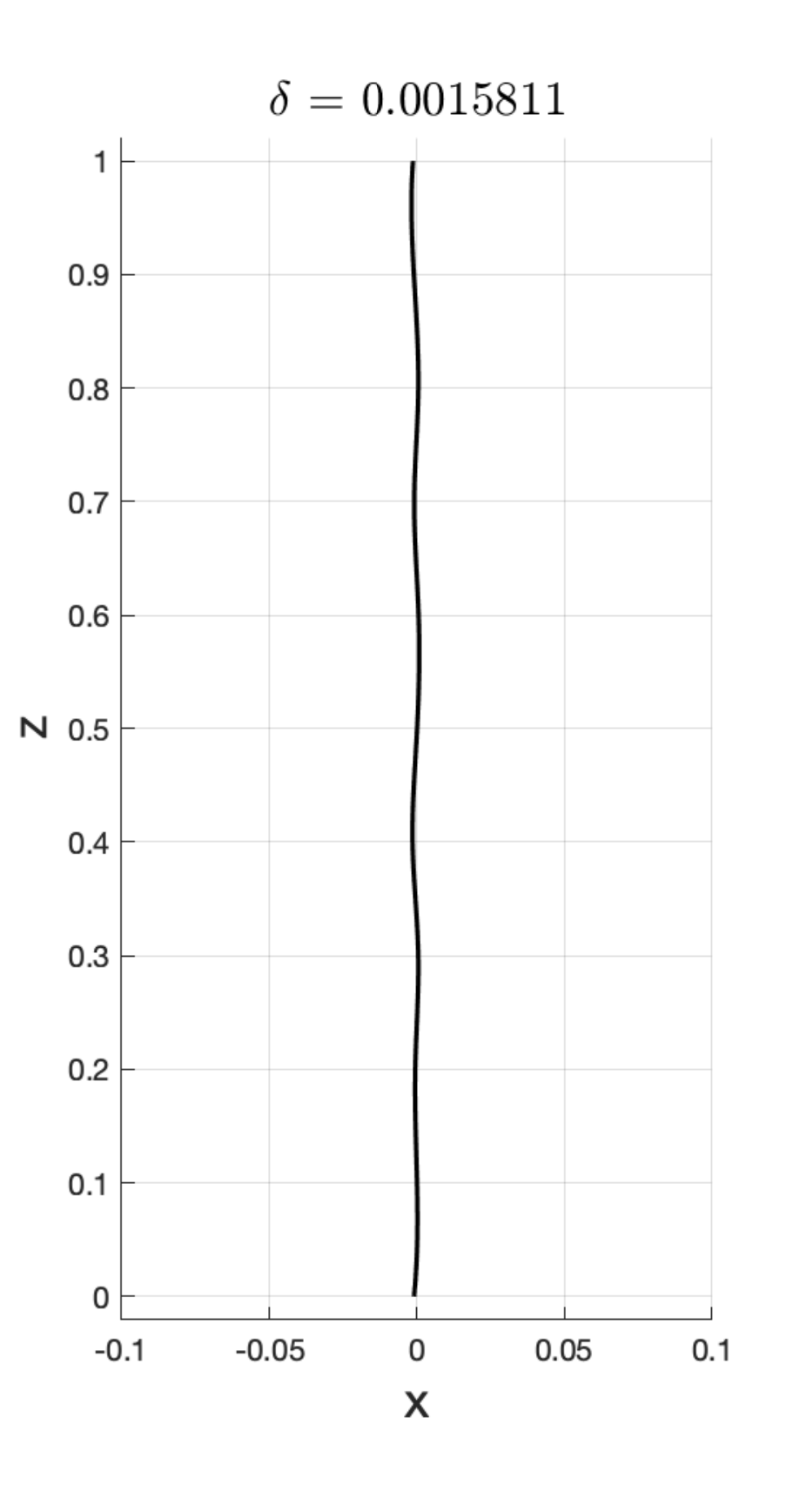}
		\caption{}
		\label{fig:squiggle6}
	\end{subfigure}
	\begin{subfigure}{.225\textwidth}
		\includegraphics[width=\linewidth]{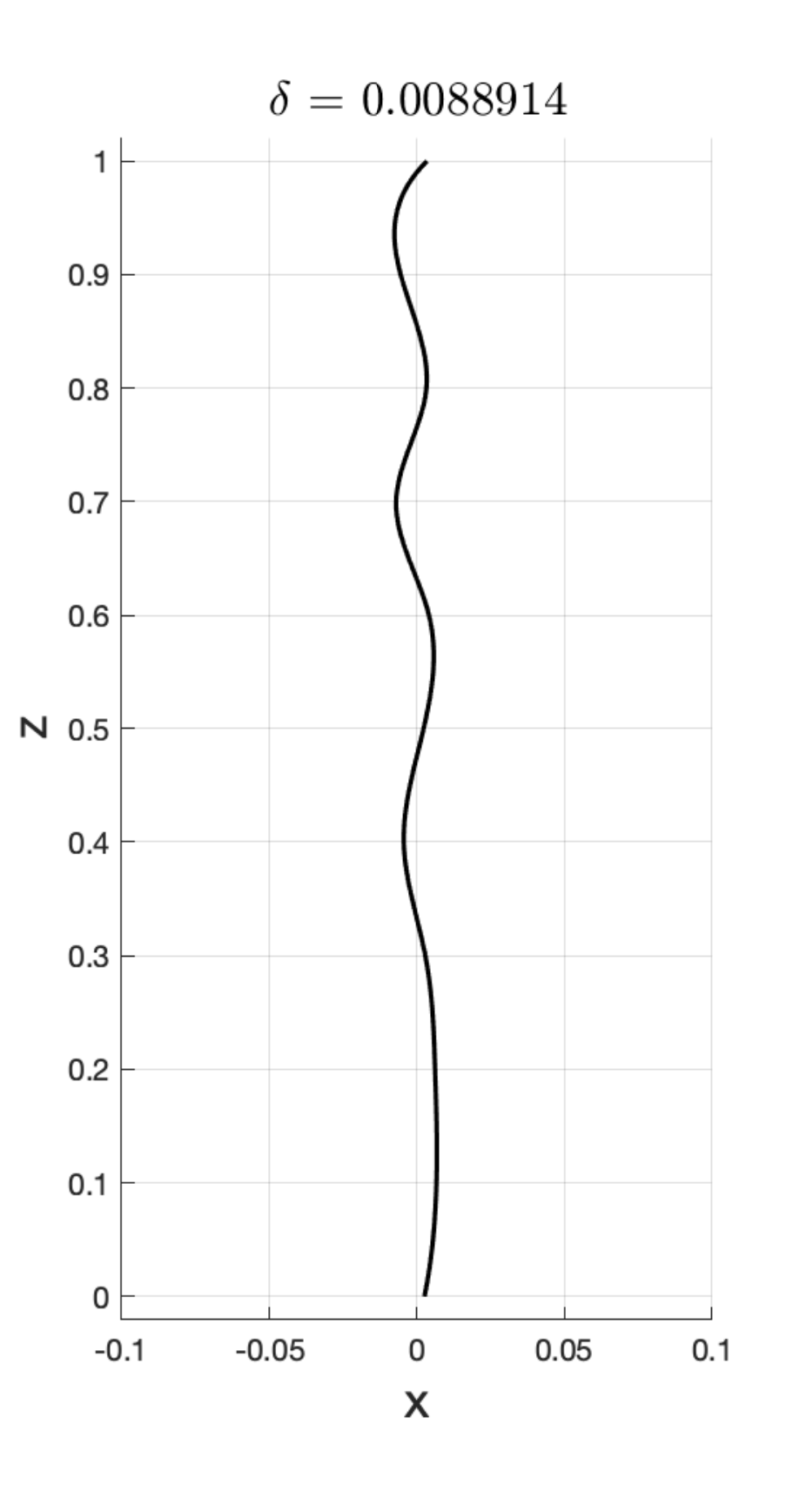}
		\caption{}
		\label{fig:squiggle8}
	\end{subfigure}
	\begin{subfigure}{.225\textwidth}
		\includegraphics[width=\linewidth]{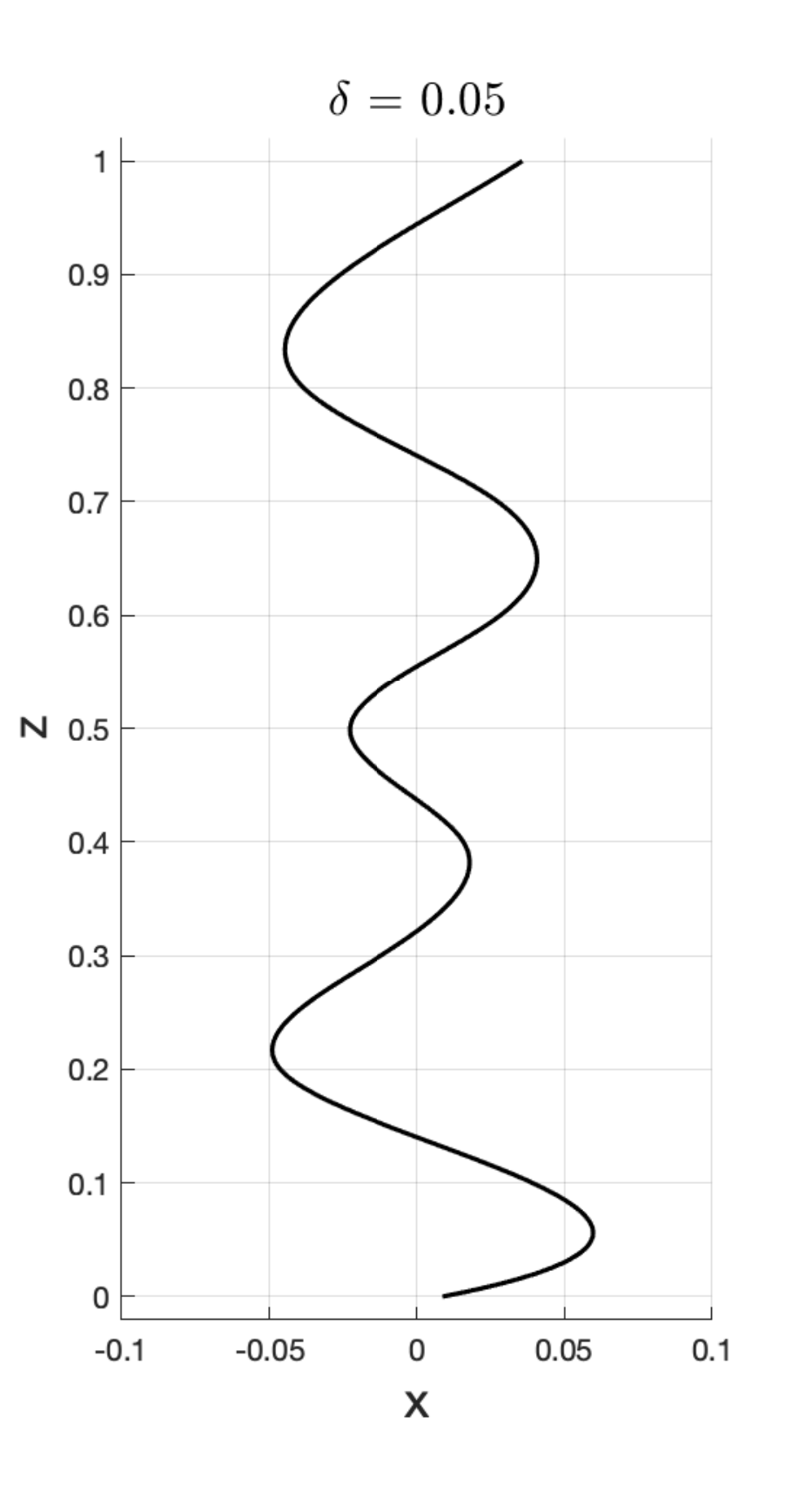}
		\caption{}
		\label{fig:squiggle10}
	\end{subfigure}
	\caption{The centerlines of four curved fiber shapes.}\label{rand_fibres}
\end{figure}

We fix $\epsilon=0.1$ and use the spheroidal radius function \eqref{prolate}. Taking $m=10$, we generate 6 different curvy fibers for different magnitudes $\delta\in [0,\frac{1}{10}]$. For each fiber we compute the spectrum $\{\lambda^\delta_i\}_{i=1}^n$ of its corresponding (non-regularized) integral operator $\underline{K}\,\underline{W}$. We plot the most positive eigenvalue $\lambda^\delta_{\max} = \max_i(\lambda^\delta_i)$ for each fiber in Figure \ref{squiggle_eigs}. For each value of $\delta$ we note that there is an eigenvalue crossing zero when $n=O(\epsilon^{-2})$. As $\delta$ increases and the magnitude of the curviness of the fiber increases, we can note a slight increase in the magnitude of the largest positive eigenvalue, but $\lambda^\delta_{\max}$ is still small -- roughly $O(10^{-8})$. Again, we can be assured to have a negative spectrum bounded away from 0 by a reasonable choice of regularization $\eta>1$. This effect is displayed in Figure \ref{squiggle_eigs2}, which shows the maximum eigenvalue $\lambda^{\delta,\eta}_{\mathrm{max}}$ of the now \textit{regularized} discrete integral operator $(2\log(\eta)I+\underline{K}\,\underline{W})$ for a fixed value of $\epsilon$ and $\delta$ and varying values of $\eta$. We see here that for all choices of $\eta>1$ in this range, the spectrum of $(2\log(\eta)I+\underline{K}\,\underline{W})$ remains negative definite. 

\begin{figure}
	\centering
		\begin{subfigure}{.45\textwidth}
		\includegraphics[width=\linewidth]{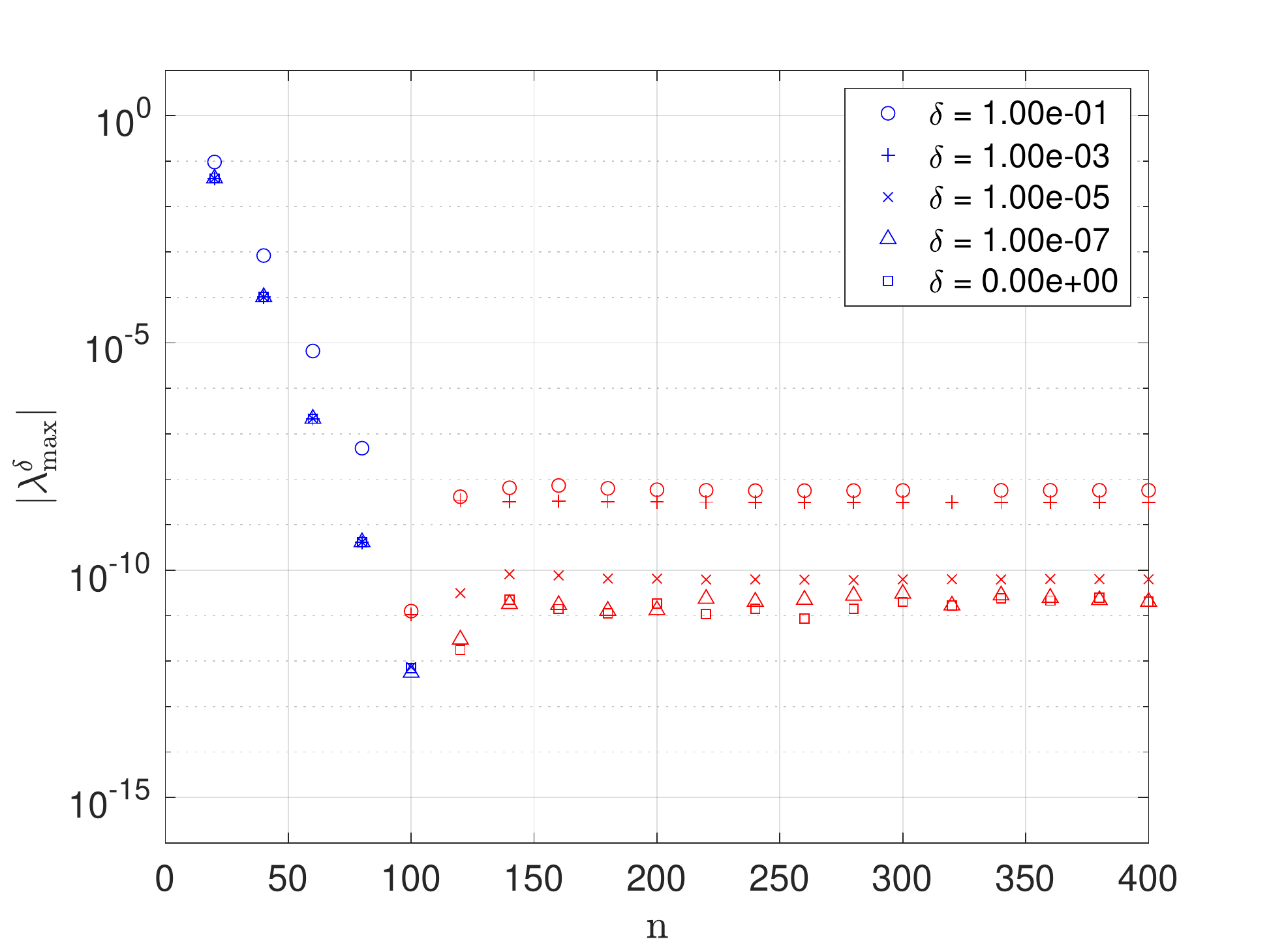}
		\caption{$\lambda^\delta_{\max}$ for different values of $\delta$ and $\eta=1$ . }
		\label{squiggle_eigs}
	\end{subfigure}
	\begin{subfigure}{.45\textwidth}
	\includegraphics[width=\linewidth]{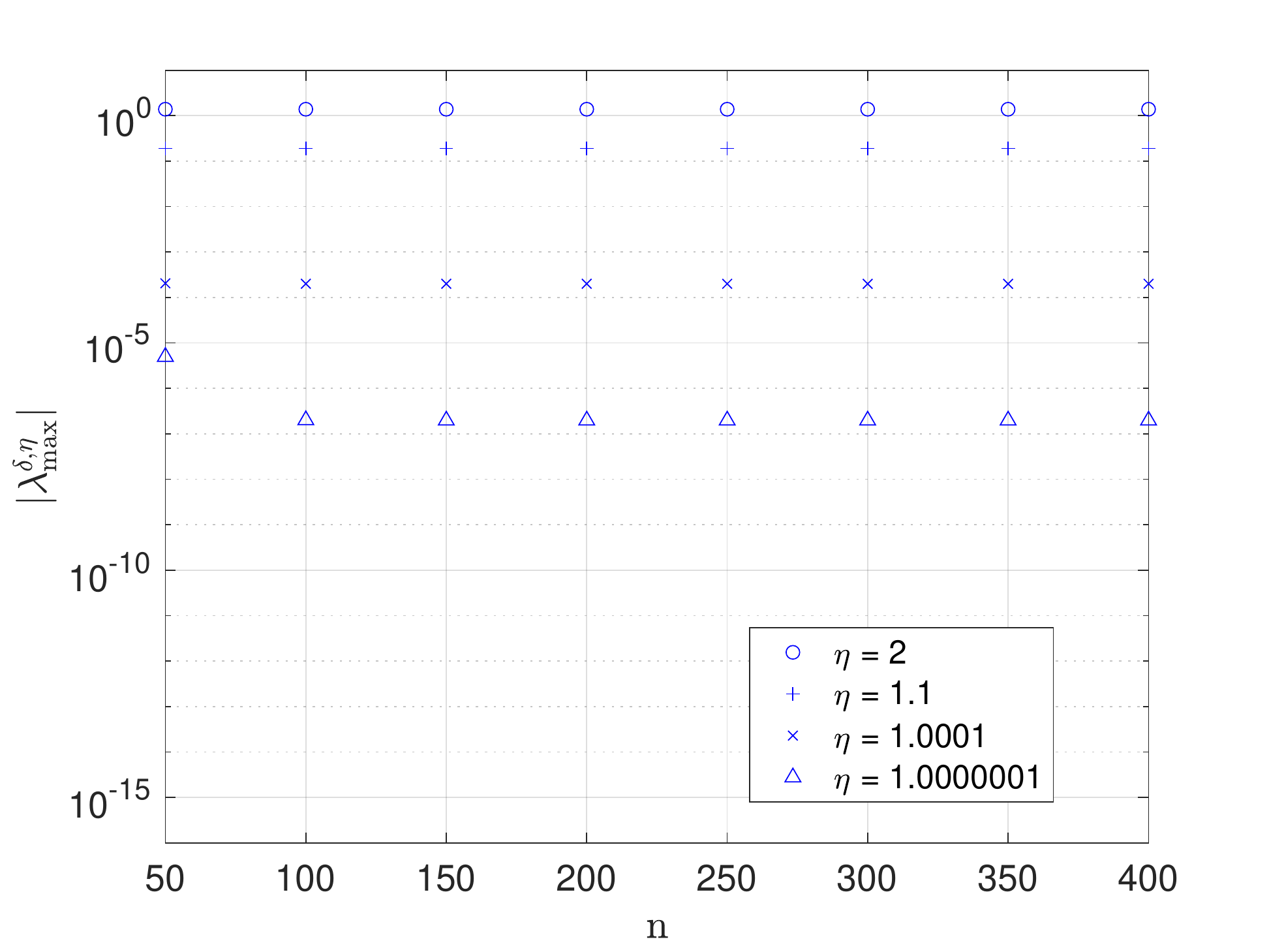}
	\caption{$\lambda^{\delta,\eta}_{\mathrm{max}}$ for different values of $\eta$ with $\epsilon=0.1$ and $\delta = 0.001$. }
	\label{squiggle_eigs2}
\end{subfigure}
	\caption{The maximum eigenvalues for the unregularized (a) and regularized (b) discrete integral operators for the curved fibers. The color blue denotes a negative maximum eigenvalue and red denotes a positive maximum eigenvalue. }
	\label{}
\end{figure}

\section{Dynamics of curved rigid fibers}\label{sec:dynamics}
We next use the slender body model \eqref{SB_new3} and the discretization procedure of Section \ref{sec:disc} to simulate the dynamics of curved rigid fibers in Stokes flow. After outlining the dynamical equations, we validate the model against known dynamics for a slender prolate spheroid. Finally, we compare the rotational dynamics of randomly curved fibers as in Figure \ref{rand_fibres} to straight fibers.

\subsection{Dynamical equations}
The dynamics of the slender body are governed by the rigid body equations. The angular momentum $\bm{m}$ of a rigid particle with torque $\bm{T}(t)$ is found by solving 
\begin{equation}\label{eq:rotation}
\dot{\bm{m}}=\bm{m}\times\bom+\bm{T},
\end{equation}
where $\bom = J^{-1} \bm{m}$ for moment of inertia tensor $J$. Each of these quantities are given in the fiber frame of reference. The fiber orientation (with respect to a fixed inertial frame of reference) is specified using Euler parameters $q\in\mathbb{R}^4$ which satisfy the constraint $||q||_2=1$ and are determined by solving the ODE
\begin{equation}\label{qode}
\dot{q} = \frac{1}{2} q\cdot w,
\end{equation}
where $w=(0,\bom\trans)\trans\in\mathbb{R}^4$ and $\cdot$ denotes the Hamilton product of two quaternions \cite{goldstein2002classical}. A vector $\bx_p$ in the fiber reference frame can be rotated to a vector $\bx_T$ in an inertial co-translating reference frame via $\bx_T = Q \bx_p$ where the rotation matrix $Q$ is the image of $q$ under the Euler-Rodriguez map. We refer the reader to \cite{goldstein2002classical} for details on quaternion algebra and rigid body mechanics. \\

The translational dynamics are given by Newton's second law
\begin{equation}\label{pode}
\dot{\bm{p}} = \bm{F},
\end{equation}
where $\bm{p} = \bm{v}/m$ is the inertial frame linear momentum for a fiber of mass $m$. The position of the fiber center of mass is found by solving
\begin{equation}\label{xode}
\dot{\bx} = \bm{v}.
\end{equation}
The ODEs \eqref{eq:rotation} - \eqref{xode} are integrated using the second order Strang splitting method of \cite{tapley2019novel}.\\

Recall the equations \eqref{FandT} for $\bm{F}^{[n]}$ and $\bm{T}^{[n]}$.
Since $\bm F^{[n]}$ and $\bm T^{[n]}$ depend linearly on the linear and angular momenta $\bp$ and $\bmm$, we may update them according to the linear equation
\begin{equation} \label{FandTmethod}
\left(\begin{array}{c}
\bm{F}^{[n]}\\\bm{T}^{[n]}\\
\end{array}\right)= A\left(\begin{array}{c}
\bp\\\bmm\\
\end{array}\right) + \bm{b},
\end{equation}
where $A$ is a negative definite dissipation matrix and $\bm b$ is due to the background fluid velocity and is independent of $\bp$ and $\bmm$. We have that
\begin{equation}\label{dissipation matrix}
A = \left(\begin{array}{cc}
\Phi\,\left(\mathbb{1}\otimes (I/m) \right), & \Phi\, \left(-\underline{X}(\mathbb{1}\otimes J^{-1})\right)\\
\Psi \left(\mathbb{1}\otimes (I/m)\right), & \Psi \left(-\underline{X}(\mathbb{1}\otimes J^{-1})\right)
\end{array}\right)\quad\mathrm{and}\quad\bm{b} = - \left(\begin{array}{c}
\Phi \underline{\bu} \\
\Psi\underline{\bu}\\
\end{array}\right),
\end{equation}
where $m$ and $J$ are the filament mass and moment of inertia tensor, respectively. We have also introduced the vector $\underline{\bu} = (\bu_0(\X(s_1))^T,...,\bu_0(\X(s_n))^T)^T$ containing the background fluid velocities at the location of the quadrature nodes along the centerline. 

\subsubsection{Overview and cost of algorithm}
The algorithm used to compute the dynamics of a slender fiber is as follows:
\begin{enumerate}
	\item Define particle geometry $\X(s)$, $\epsilon$, regularization parameter $\eta$ and discretization parameter $n$. 
	\item Choose a quadrature rule and compute the matrices $\underline{W}$ and $\underline{K}$. 
	\item Compute the matrices  $\Phi$, $\Psi$ and $A$ from equations \eqref{Phi}, \eqref{Psi} and \eqref{dissipation matrix}. 
	\item Time loop: for $t = 0,\Delta t,...,m\Delta t$
	\subitem{a)} Compute $\bm{F}^{[n]}$ and $\bm{T}^{[n]}$ using equation \eqref{FandTmethod}
	\subitem{b)} Numerically integrate the ODEs \eqref{eq:rotation} - \eqref{xode} 
\end{enumerate}
For step (2), we use the trapezoidal quadrature rule for closed fibers (i.e., a periodic integration interval) or Gauss-Lobatto quadrature rule for fibers with open ends. For step (4b), we use a splitting method \cite{tapley2019novel}. We note that for simulations where the fluid velocity field is calculated from a direct numerical simulation of the Navier-Stokes equations, the fluid field needs to be approximated onto the centerline of the particle using an interpolation method \cite{tapley2019computational}. \\

The above algorithm exploits the rigidity of the fiber by using the fact that $A$, $\Phi$ and $\Psi$ are constant in time and therefore can be computed outside of the time loop. The calculation of these matrices, which involves solving a linear system, is the most costly operation in the algorithm but only needs to be done once. If, for example, Gaussian elimination is used, this step has complexity of $O(n^3)$. Within the time loop, however, the most costly operation is the calculation of $\bm{F}^{[n]}$ and $\bm{T}^{[n]}$, which involves only $3\times 3n$ by $3n\times 1$ matrix-vector products, which has $O(n)$ complexity. We assume that the cost of numerically integrating the ODEs is negligible compared to this. For a single fiber, the total complexity of the algorithm is therefore $O(n^3 + nm)$, where $m$ is the total number of time steps used in the simulation. Hence, for simulations where many time steps are needed, the algorithm scales by $O(n)$. We remark that for problems where the background flow is zero, the cost of computing $\bm{F}^{[n]}$ and $\bm{T}^{[n]}$ is independent of $n$ (after $A$ has been computed) and therefore is $O(1)$. This is relevant, for example, when simulating fibers sedimenting in a still fluid under the influence of gravity \cite{newsom1994dynamics}.  \\

%%%%%%%%%%%%%%%%%%%%%
\subsection{Numerical validation of model dynamics}

\subsubsection{Dissipation matrix of a prolate spheroid}
Here we compare our model and numerical method with accurate closed form expressions for the force and torque given by Brenner \cite{brenner1964stokes} and Jeffery \cite{jeffery1922motion}. These expressions are valid for an ellipsoid when the fluid Jacobian is approximately constant throughout the volume of the particle. When the flow is linear, these terms are essentially exact and therefore serve as a good reference model against which to validate our model.\\

The purpose of this numerical experiment is therefore twofold. Firstly, we aim to show that our model converges to the reference model as $\epsilon\rightarrow 0$. This is primarily to validate the accuracy of the model. However, the numerical approximation of the force and torques also introduces a numerical error that is related to the discretization parameter $n$. Clearly, taking $n$ too small means that we will not exploit the accuracy of the model to its entirety. On the other hand, it is unwise to take $n$ as large as possible as this will incur unnecessary computational costs that go to minimizing numerical error beyond the accuracy of the model. So the second question we address here is what is an ideal choice of discretization parameter to use such that the numerical error is roughly the same as the modeling error. \\ 

Using $\eta = 1 + \epsilon^2$, the dissipation matrix for our slender body model $A$ is numerically approximated by equation \eqref{dissipation matrix}. The reference dissipation matrix $A_{sph}$ is found using the closed form expressions from Jeffery and Brenner, which are given in Appendix \ref{app:dissmat}. Denote the six eigenvalues of $A$ and $A_{sph}$, by $\lambda_i$ and $\lambda_i^{sph}$, respectively. Note that due to symmetry of the spheroid, $\lambda_1=\lambda_2$ and $\lambda_4=\lambda_5$ and similarly for the eigenvalues of $A_{sph}$. Furthermore, the slender body model is essentially a one dimensional filament and therefore $\lambda_6=0$ meaning that spinning motion about the centerline doesn't dissipate. This is in contrast to the Jeffrey term, which does dissipate spinning motion. We remark that this phenomenon only occurs in the case where the centerline is perfectly straight. Hence for curved fiber geometries where the application of the slender body is most useful, this nonphysical phenomenon is not observed. Note that for this geometry the dissipation matrices are diagonal and therefore the eigenvalues are directly proportional to the calculation of $\bm F^{[n]}$ and $\bm T^{[n]}$ in zero background flow. \\

The eigenvalues of $A$ are calculated using equation \eqref{dissipation matrix} after discretizing equation \eqref{intop2} on the Gauss-Lobatto nodes. The values $|\lambda_i-\lambda_i^{sph}|$ for $i=1,3,4$ are plotted in Figure \ref{fig:dissipationeigs} as a function of the discretization parameter $n$. We see that $\lambda_i$ converges exponentially to a point near $\lambda^{sph}_i$, which is likely due to the slender body modelling error. As $\epsilon$ decreases, we make two observations. First, for large $n$ the rate at which $\lambda_i$ converges to $\lambda_i^{sph}$ is approximately $-\epsilon^2\eta^2 \log (\epsilon\eta)$, as seen by the horizontal dash-dot lines. Second, as $\epsilon$ decreases, the convergence rate slows down and one must use a larger value of $n$ to reach the most accurate solution. This means that one must pay careful attention to the choice of $n$ when taking $\epsilon$ to be very small. In fact, we observe empirically that the convergence rate is approximately bounded by $e^{-4\epsilon n}$. Motivated by this, we will take $n$ in future experiments to be approximately the intersection of these two lines, that is
\begin{equation}\label{n}
	n  \approx -\frac{\log(-\epsilon^2 \log(\epsilon))}{4\epsilon}.
\end{equation}
% Note that $n<O(\epsilon^{-2})$ for $\epsilon<<1$, which is where we observed the discrete integral operator to lose its positive definiteness. We will also take $\eta = 1 + \epsilon^2$ in successive experiments. [I'm going to put the eta value directly in the section.]

\begin{figure}[h]
	\centering
		\begin{subfigure}{.45\textwidth}
		\includegraphics[width=\linewidth]{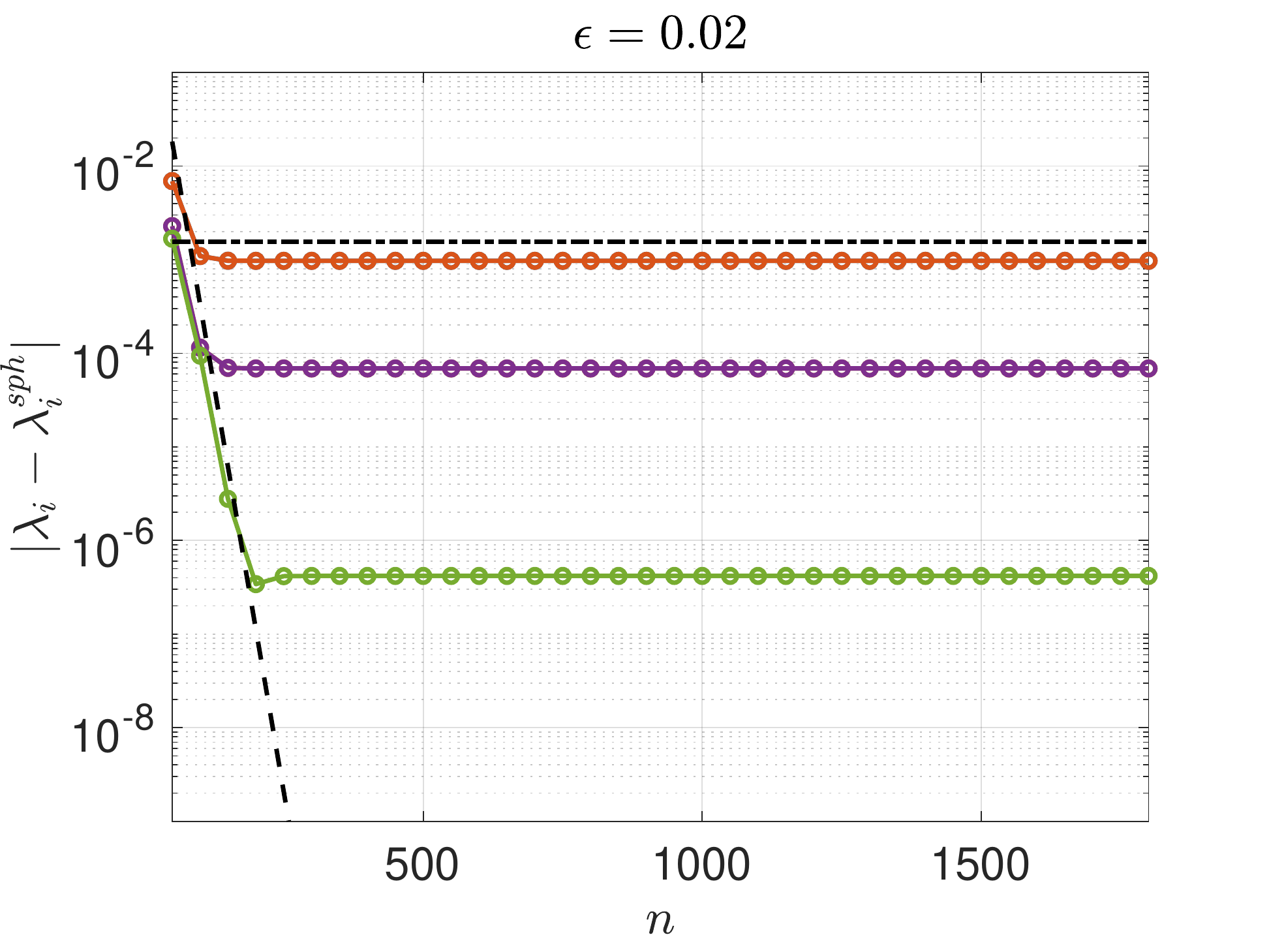}
		\caption{}
		\label{dissa}
	\end{subfigure}
		\begin{subfigure}{.45\textwidth}
	\includegraphics[width=\linewidth]{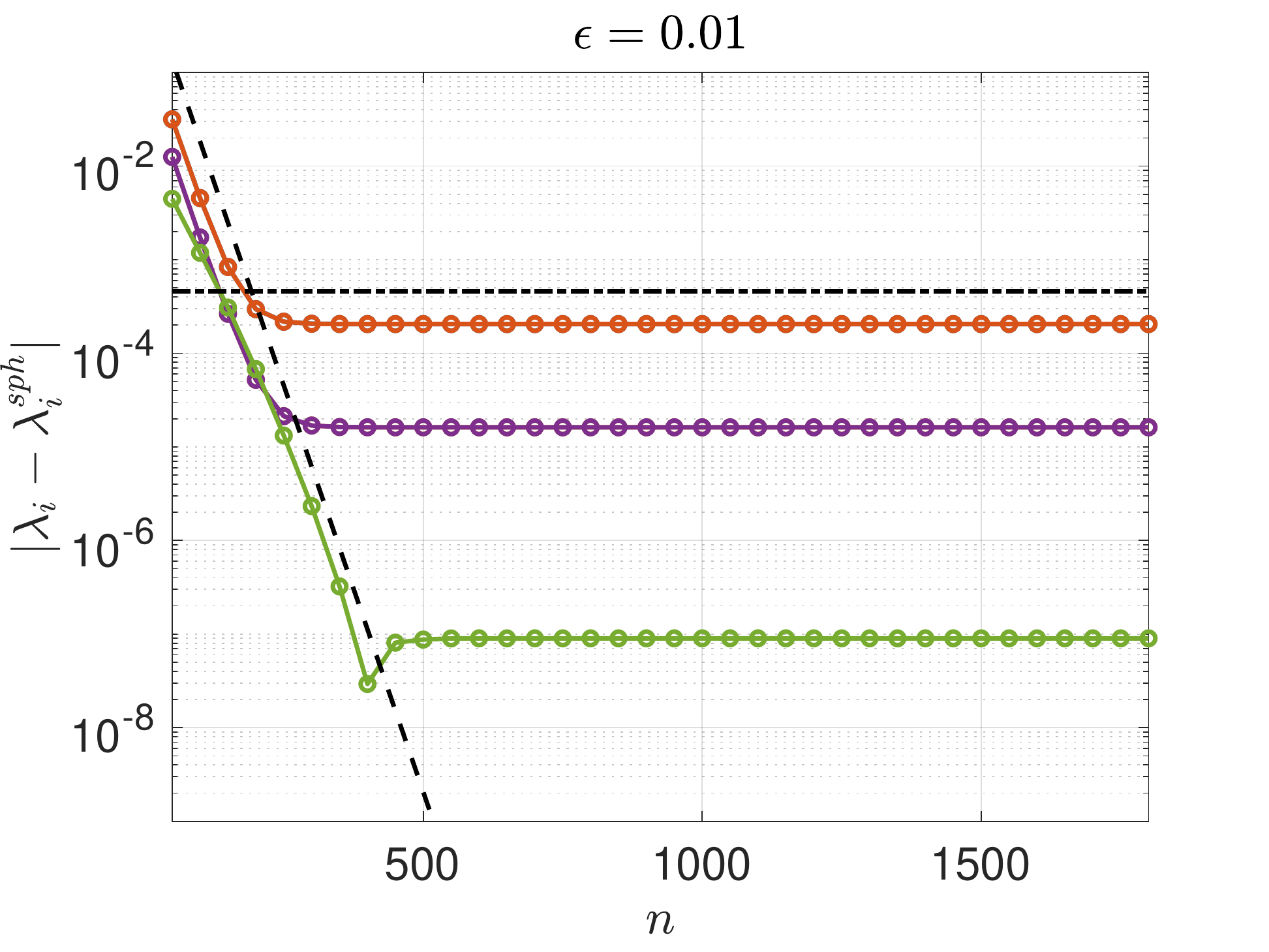}
	\caption{}
	\label{dissb}
\end{subfigure}

		\begin{subfigure}{.45\textwidth}
	\includegraphics[width=\linewidth]{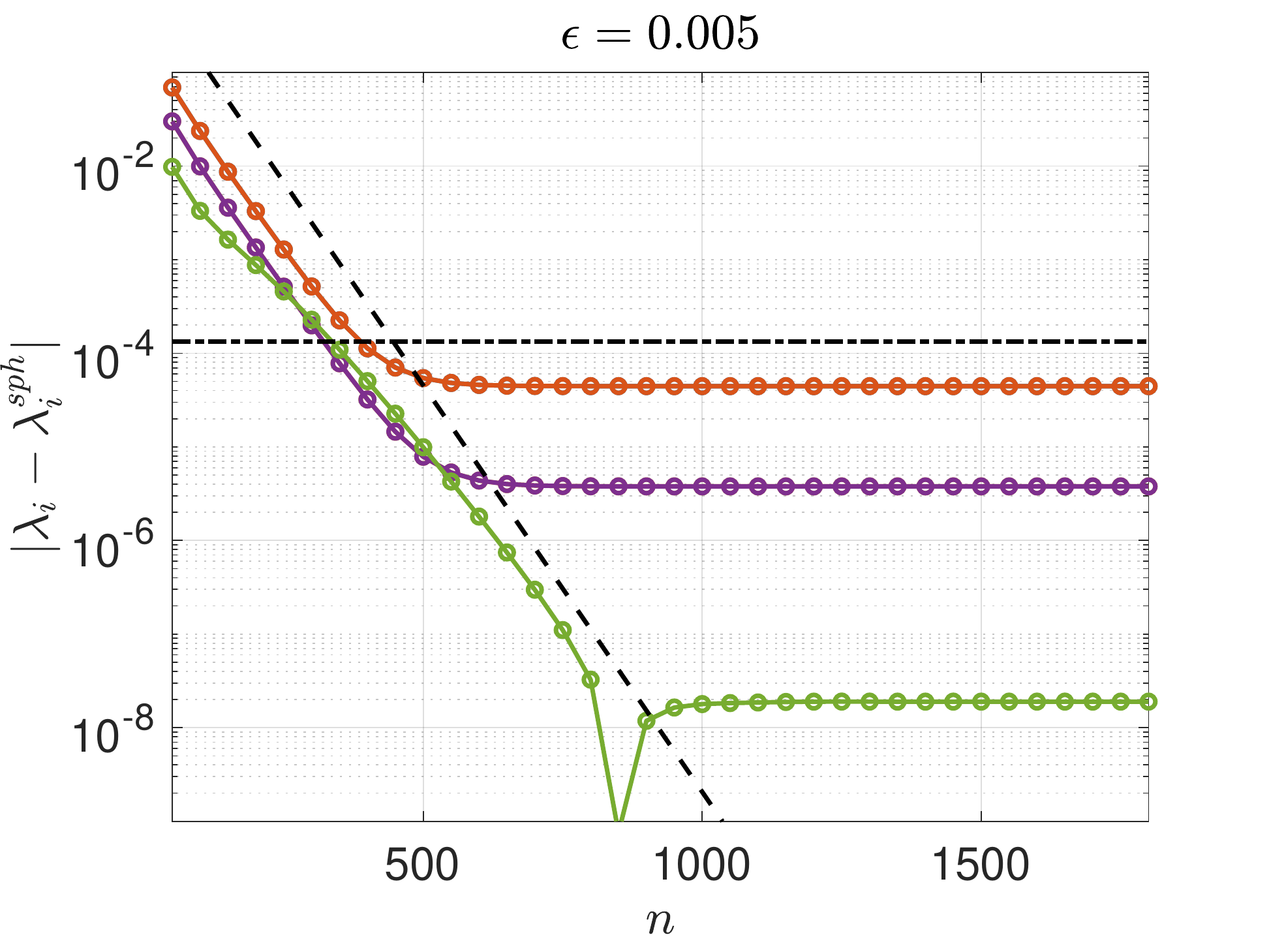}
	\caption{}
	\label{dissc}
\end{subfigure}
		\begin{subfigure}{.45\textwidth}
	\includegraphics[width=\linewidth]{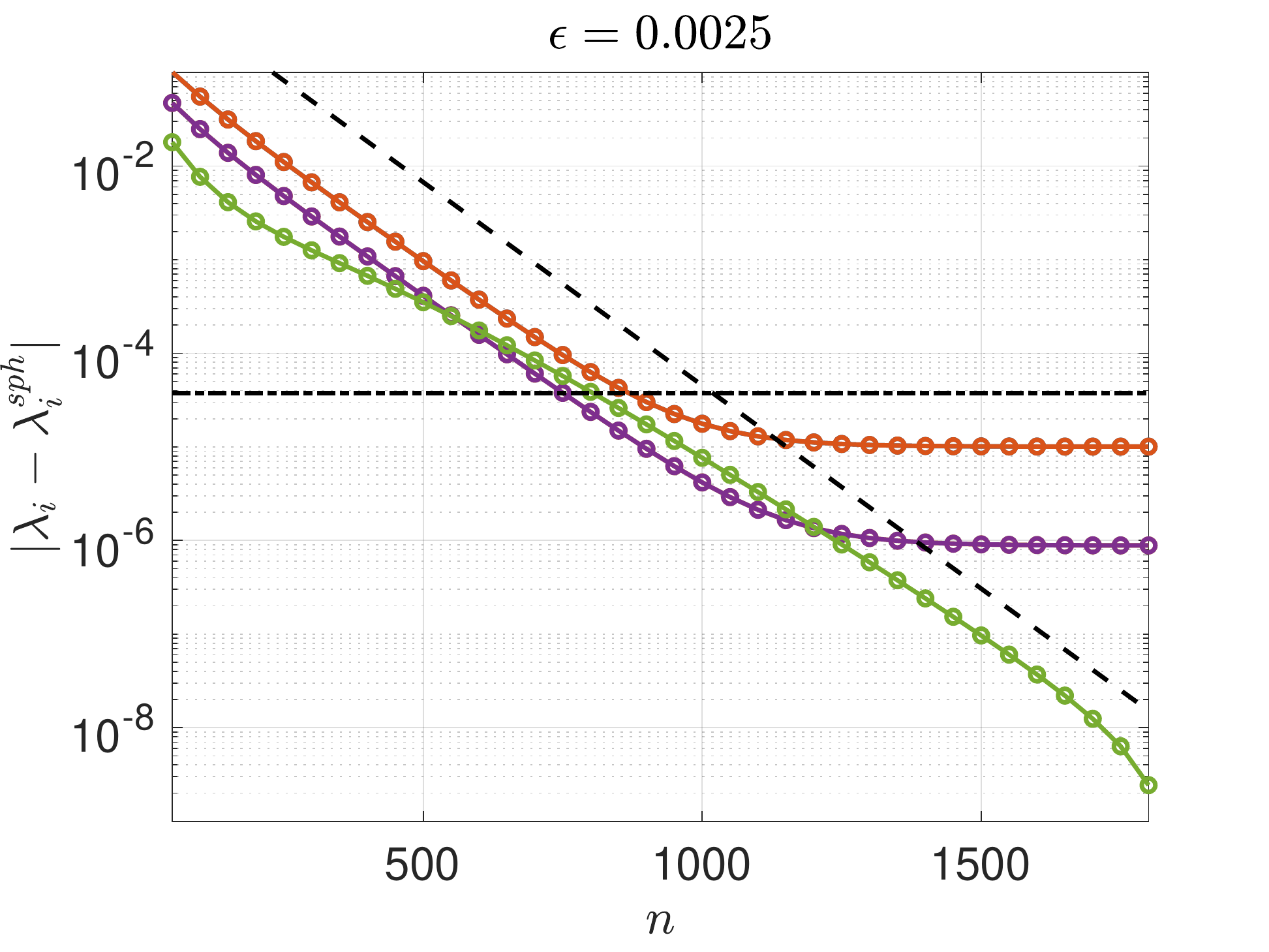}
	\caption{}
	\label{dissd}
\end{subfigure}
	\caption{The difference in the dissipation matrix eigenvalues $|\lambda_i-\lambda_i^{sph}|$, $i=1,3,4$ as a function of $n$ for three different values of $\epsilon$. The black dashed lines are $e^{-4\epsilon n}$ and the horizontal dash-dot lines are $-\epsilon^2\eta^2 \log(\epsilon\eta)$}
	\label{fig:dissipationeigs}
\end{figure}

\subsubsection{Prolate spheroids rotating in shear flow}
Now we calculate the dynamics of a prolate spheroid in shear flow $\bu = (z,0,0)^T$ using our model and compare it with that of the accurate Jeffrey model. The fiber is initially aligned at rest in the $z$-direction and its rotational dynamics are calculated by integrating equation \eqref{eq:rotation} on the interval $t\in[0,100]$ using the splitting method of \cite{tapley2019novel} with a small step size of $h=0.01$. The simulation was repeated with $h=0.05$ with no significant changes to the results and it is therefore concluded that time integration errors are negligible. We repeat the experiment for 20 values of $\epsilon$ logarithmically spaced in the interval $[0.1,0.001]$ and choose $n$ using equation \eqref{n} and $\eta=1+\epsilon^2$. As the spheroids are axisymmetric, they only experience a torque about their $y$ axis, hence all of other angular momentum components are zero (to machine precision). Three examples of the rotational dynamics are shown in Figure \ref{ang_mom}. It is seen here that as $\epsilon$ becomes smaller, the dynamics more closely resemble the Jeffery model. \\

The relative difference between the angular momenta of the Jeffery and slender body solutions are calculated and averaged over the simulation. This average relative error is then plotted against the corresponding value of $\epsilon$ in Figure \ref{fig:angmomspheroid}. We see that the average relative error decreases with $\epsilon$. It is observed that in the region $0.01<\epsilon<0.1$ the error converges at a faster rate than in the region $0.001<\epsilon<0.01$. This could be partially explained by the fact that wider particles (larger $\epsilon$) experience a greater resistive force as seen by the regions where $m_y$ nearly reaches zero. This means that the particle spends more time in the shear plane where the fluid velocity is zero and hence the slender body model does not experience a large torque. However, the fluid gradient is non-zero in this orientation and therefore the Jeffery model, which depends only on the fluid gradient, still experiences a constant torque. This means that compared to the Jeffery model, thicker fibers will see a greater difference in the torque term when the fiber is aligned in the shear plane than thinner fibers.

\begin{figure}[!ht]
	\centering
	\begin{subfigure}{.3\textwidth}
		\includegraphics[width=\linewidth]{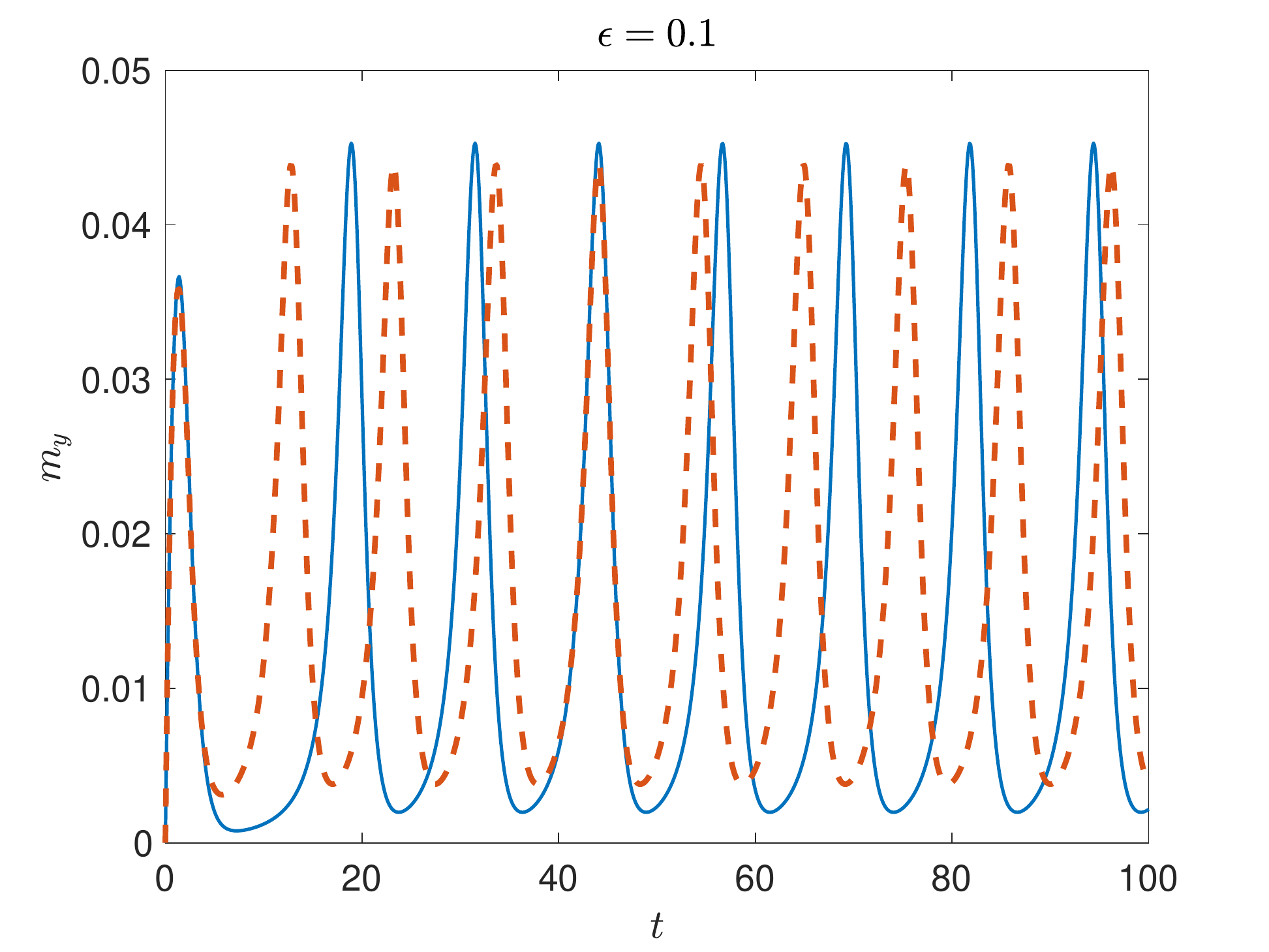}
		\caption{}
		\label{fig:1}
	\end{subfigure}
	\begin{subfigure}{.3\textwidth}
		\includegraphics[width=\linewidth]{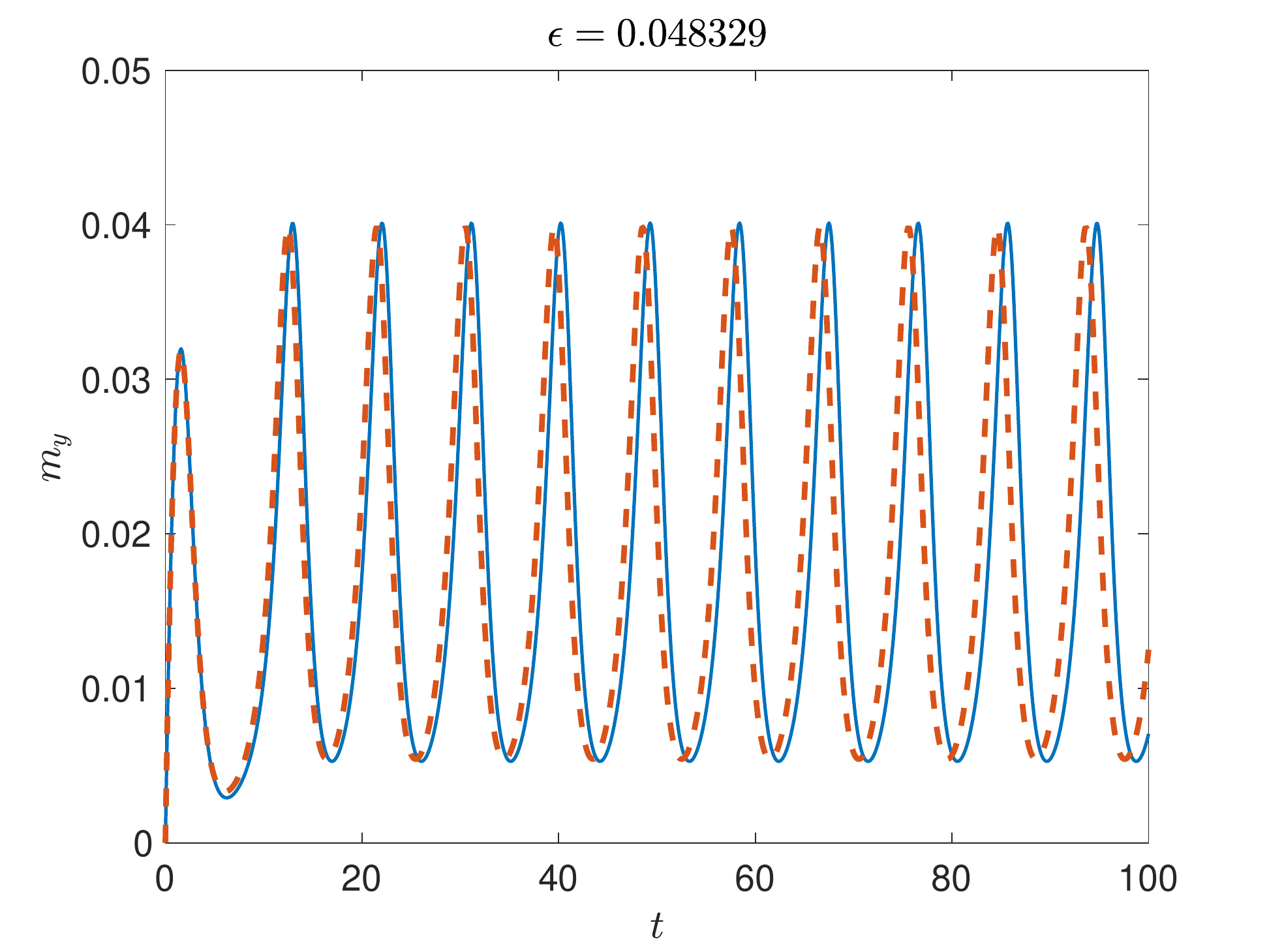}
		\caption{}
		\label{fig:5}
	\end{subfigure}
	\begin{subfigure}{.3\textwidth}
		\includegraphics[width=\linewidth]{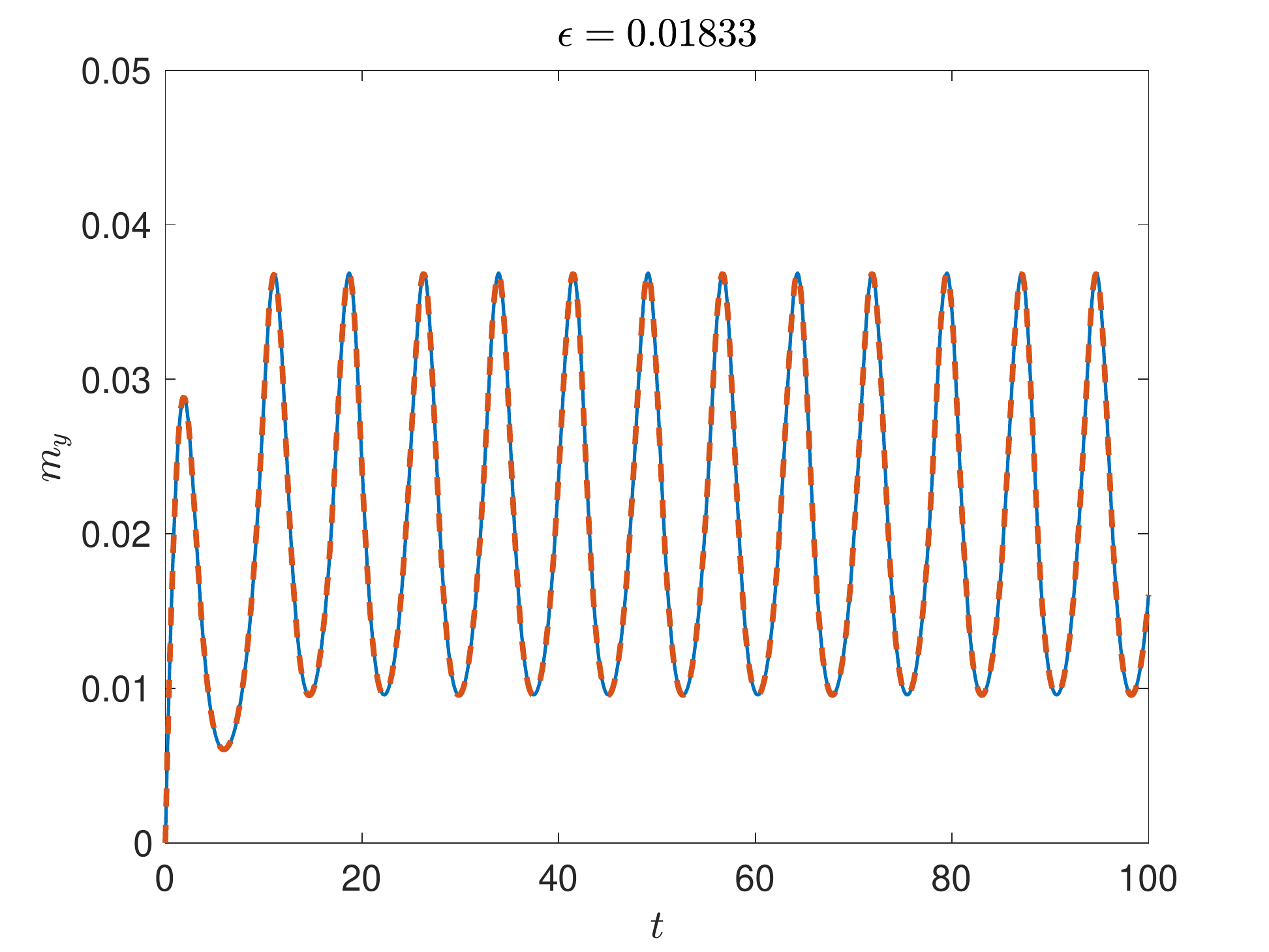}
		\caption{}
		\label{fig:10}
	\end{subfigure}
	\caption{The $y$ component of a spheroid rotating in shear flow for three different values of $\epsilon$. The solid line is the our slender body expression and the dashed line is due to Jeffery.}\label{ang_mom}
\end{figure}

\begin{figure}[!ht]
	\centering
	\includegraphics[width=0.45\linewidth]{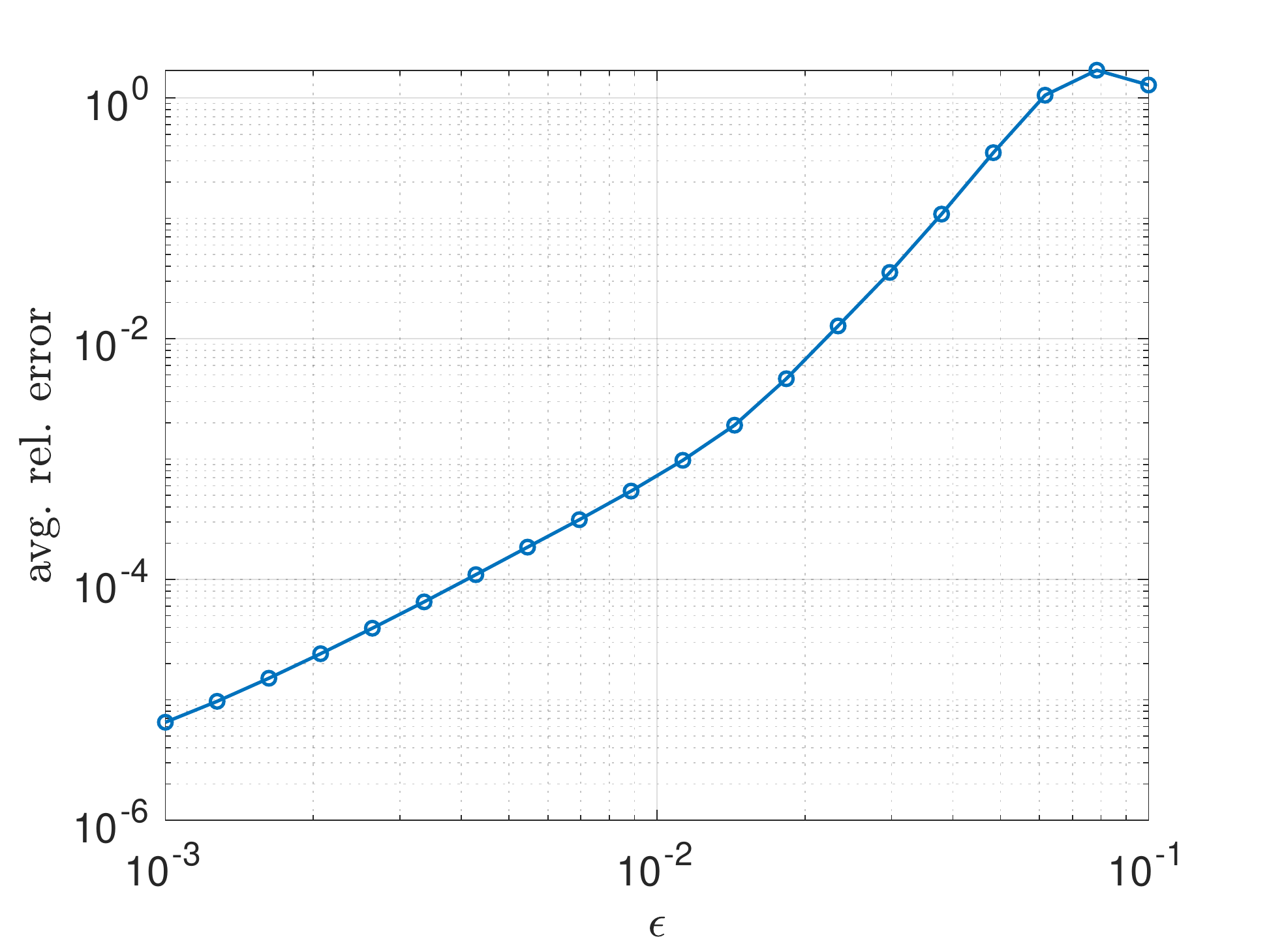}
	\caption{The relative difference in $m_y$ between the slender body and Jeffery solutions averaged over the interval $[0,100]$.}
	\label{fig:angmomspheroid}
\end{figure}

%%%%%%%%%%%%%%%%%%%%%%%%%%%%%%%%%%
\subsection{Dynamics of randomly curvy fibers}
Here we simulate the dynamics of the randomly curvy fibers of Figure \ref{rand_fibres} as they rotate in shear flow. We show how the rotational variables deviate from a straight fiber as $\delta$ becomes larger. \\

We generate 100 different fiber shapes with $m=10$ using 10 different values of $\delta$ logarithmically spaced in the interval $[5\times 10^{-5},5\times 10^{-2}]$. The 100 fibers are placed in shear flow $\bu = (z,0,0)^T$ and their rotational dynamics are calculated on the interval $t\in[0,100]$. The moment of inertia tensor is approximated by placing point masses along the centerline and using the formula
\begin{equation}
	J_{i,i} = \sum_{j=1}^k m_j (X_{i}(s_j)-c_i)^2, \quad \text{for}\quad i = 1,...,3
\end{equation}
where $X_{i}(s_j)$ is the $i$th component of the centerline function at the point $s_j$ on the centerline and $c_i$ is the $i$th component of the fiber center of mass. We weight $m_j$ by the cross sectional radius and use a very large value for $k$, e.g., $k=10^4$. Here we take $\epsilon=0.1$ and use the spheroidal radius function \eqref{prolate} along with $\eta=1+\epsilon^2$. \\

 Figure \ref{fig:deltas} shows the angular momentum $\bm{m}$ of three fibers compared to the $\delta = 0$ case. As the $\delta=0$ fiber is perfectly straight, it does not exhibit spinning motion and its angular momentum is purely in the $m_y$ component. This is in contrast to the fibers with a non-zero value of $\delta$, in which case some of the momentum is transferred to $m_x$. We therefore compare the value $\sqrt{m_x^2+m_y^2}$ between the fibers to account for this. We see here that the $\delta = 0.017783$ solution is visually very similar to the $\delta = 0$ solution. We notice a significant difference between the other two solutions. Figure \ref{fig:thetas} shows the angle $\theta$ between the $z$-axis of the particle reference frame (that is, a frame that is rotating with the fiber) and the $x$-axis of a fixed inertial reference frame. As the $\delta\ne0$ fibers are not symmetric, they slowly rotate out of the $xz$-plane and therefore after a long time, we see much more significant discrepancies in $\theta$.\\

To quantify the effect that $\delta$ has on the angular momentum, we calculate the difference in the angular momentum $\Delta m$ by subtracting off the $\delta = 0$ solution and averaging over the time interval $t\in[92,100]$, which corresponds to roughly one period of rotation. This value is averaged over all the fibers with similar values of $\delta$ and is expressed as a percentage of the $\delta=0$ solution, which we denote by $\% \Delta m$. The results are plotted in Figure \ref{fig:deltaerr}. We notice that the $\% \Delta m$ is linearly proportional to $\delta$. We observe that at the end of the simulation the $\delta = 0.0003$ fibers correspond to roughly 1\% discrepancy in angular momentum and $\delta = 0.0015$ corresponds to roughly 7.5\% discrepancy. \\

The difference in $\theta$ after one rotation as a function of $\delta$ is displayed in Figure \ref{fig:thetaerr}.  The $\delta = 0.0003$ solution corresponds to about a $3^\circ$ difference in $\theta$ and the $\delta = 0.0015$ solution corresponds to about an $8^\circ$ difference.

\begin{figure}
	\centering
		\begin{subfigure}{.45\textwidth}
		\includegraphics[width=\linewidth]{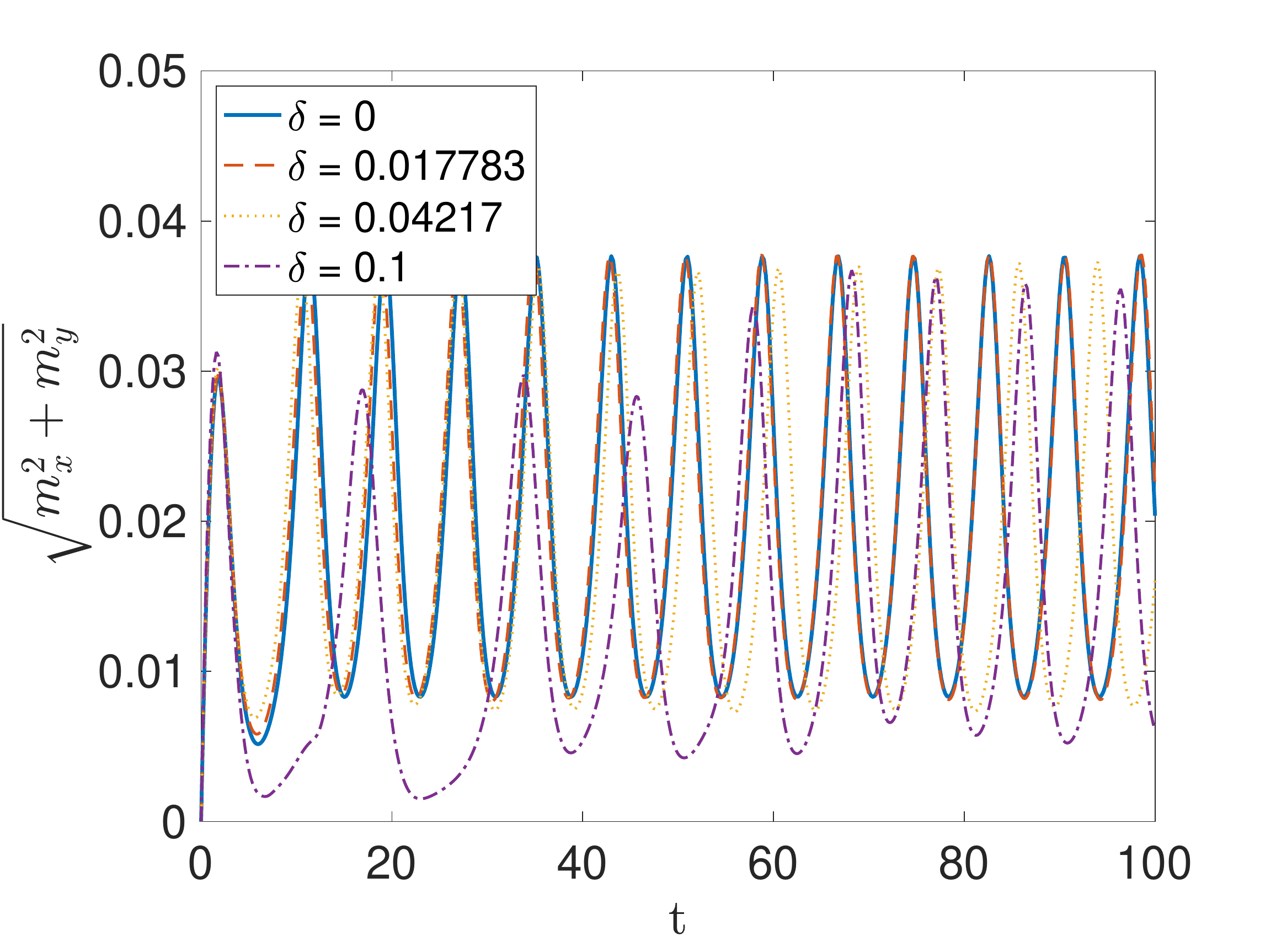}
		\caption{}
		\label{fig:deltas}
	\end{subfigure}
		\begin{subfigure}{.45\textwidth}
		\includegraphics[width=\linewidth]{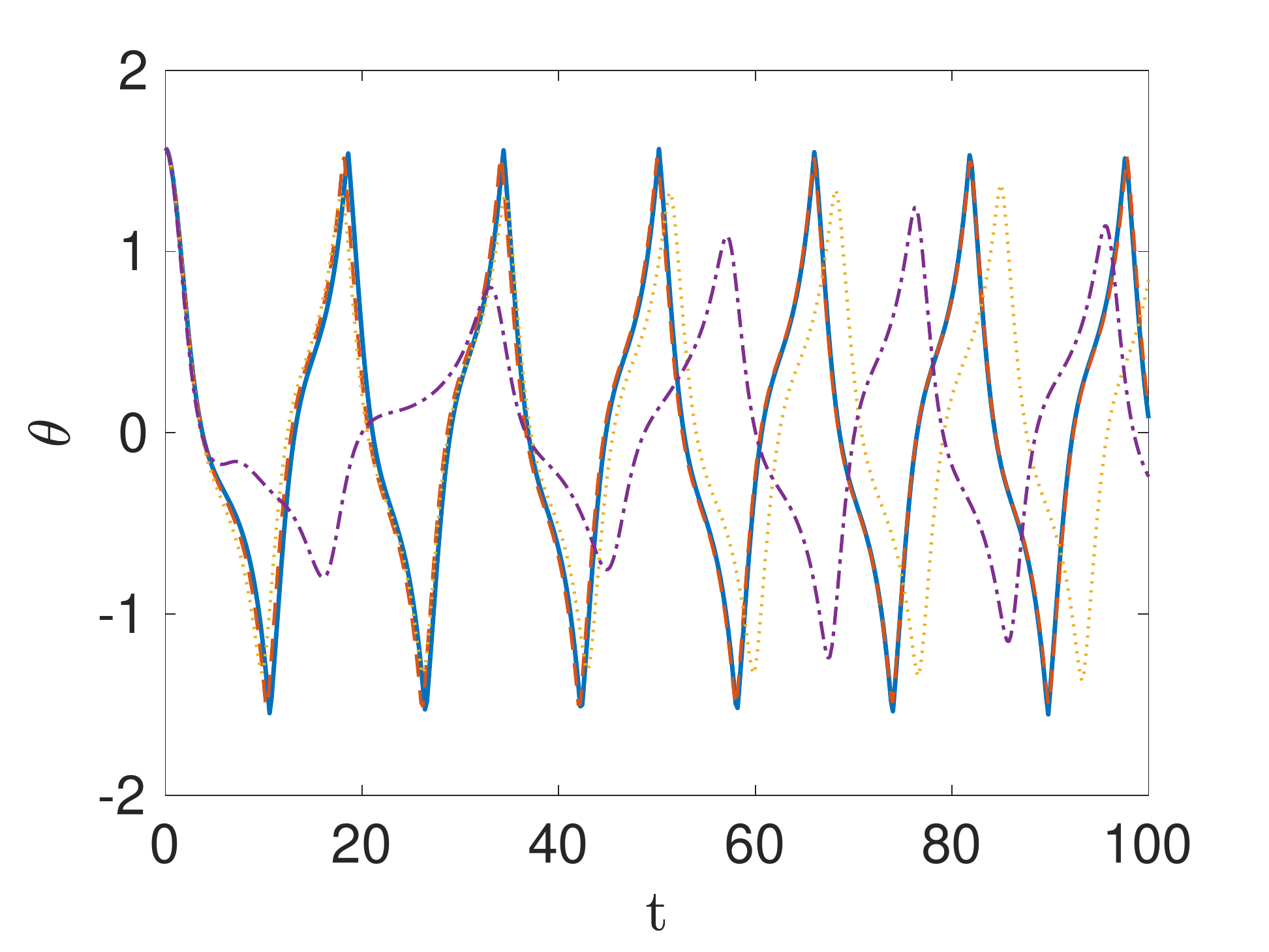}
		\caption{}
		\label{fig:thetas}
	\end{subfigure}
	\caption{The rotational variables of four fibers with different values of $\delta$. Figure (a) shows the angular momentum and Figure (b) is the angle between the fiber's long axis and the $x$-axis of the inertial frame.}
	\label{fig:angmomdeltas}
\end{figure}

\begin{figure}
	\centering
		\centering
	\begin{subfigure}{.45\textwidth}
		\includegraphics[width=\linewidth]{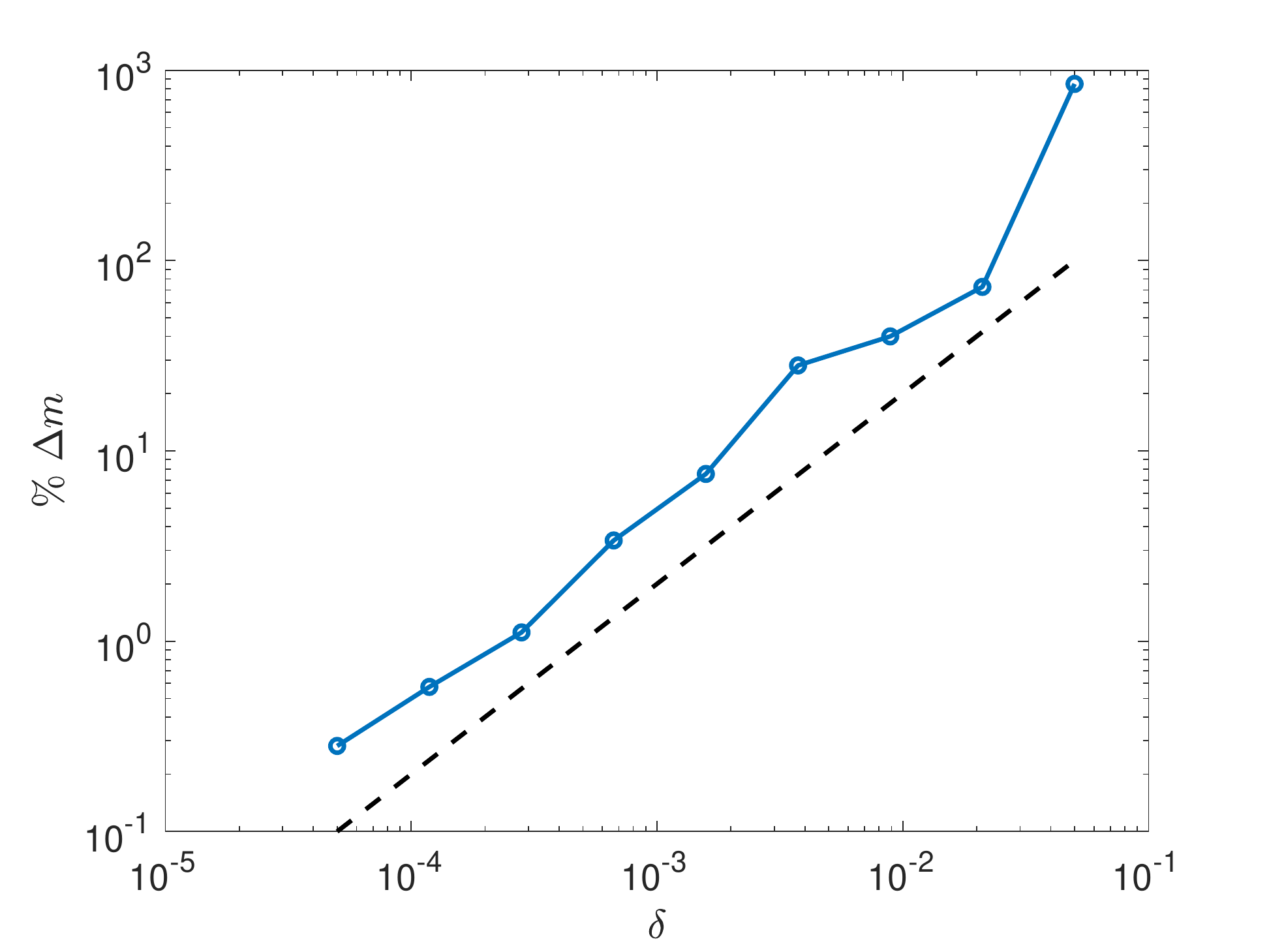}
		\caption{}
		\label{fig:deltaerr}
	\end{subfigure}
	\begin{subfigure}{.45\textwidth}
	\includegraphics[width=\linewidth]{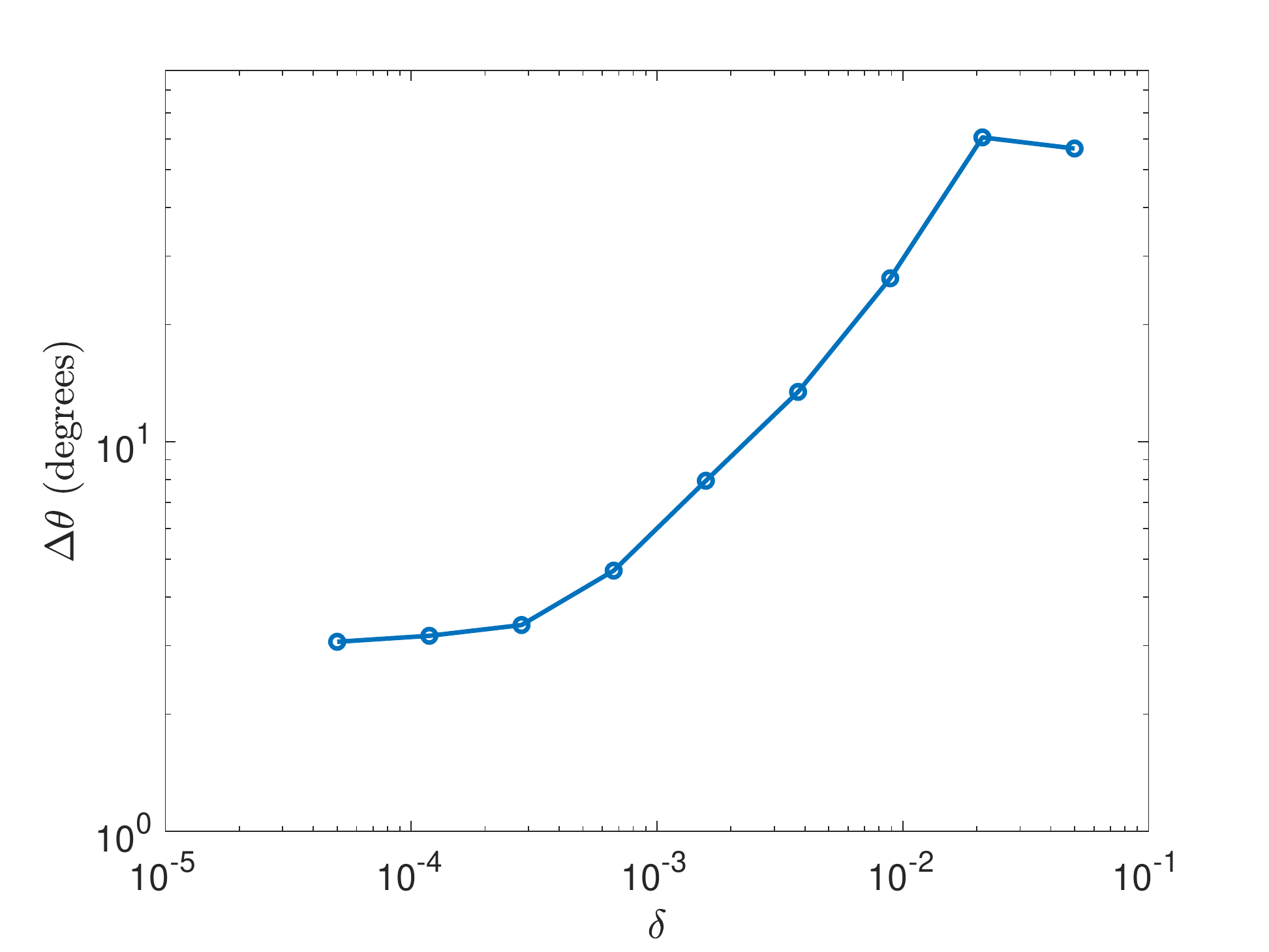}
	\caption{}
	\label{fig:thetaerr}
\end{subfigure}
	\caption{Figure (a) shows the difference in angular momentum $\Delta m$ between the curved fibers and the $\delta=0$ solution after 100 time units. The black dashed line is $O(\delta)$. Figure (b) shows the discrepancy $\Delta \theta$ in the angle between the centerline and the $x$-axis after roughly one rotation.}
	\label{fig:deltavserror}
\end{figure}

%%%%%%%%%%%%%%%%%%%%%%%%%%%%%%%%%%%%%%%%%%%%%%
%%%%%%%%%%%%%%%%%%%%%%%%%%%%%%%%%%%%%%%%%%%%%%
%%%%%%%%%%%%%%%%%%%%%%%%%%%%%%%%%%%%%%%%%%%%%%
%%%%%%%%%%%%%%%%%%%%%%%%%%%%%%%%%%%%%%%%%%%%%%
%%%%%%%%%%%%%%%%%%%%%%%%%%%%%%%%%%%%%%%%%%%%%%

\section{Conclusions}\label{conclude}
We have developed an integral model for the motion of a thin filament in a viscous fluid based on nonlocal slender body theory. The model relies on standard singular Stokeslets and doublets but makes use of the fiber integrity condition -- the near-cancellation of angular-dependent terms along the fiber surface -- to derive an asymptotically accurate fiber velocity expression depending only on arclength. The kernel of the resulting integral operator is smooth and retains dependence on the (possibly varying) fiber radius in a natural way. We can show that this integral operator is negative definite in the simplified geometry of a straight-but-periodic filament, and we expect similar high wavenumber behavior for curved filaments with constant radius. It is less clear how a non-constant radius affects the spectrum; however, numerical tests indicate that the discretized integral operator is very close to negative definite. Nevertheless, to ensure invertibility, we develop an asymptotically consistent regularization to convert the first-kind Fredholm integral equation for the force density along the fiber into a second-kind equation and show that this second-kind regularization improves the stability and conditioning of the discretized equation. We develop a numerical method for solving the integral equation based on the the Nystr\"om method \cite{atkinson2005theoretical} and show how constraining the fiber motion to be rigid can be exploited for fast computation of fiber dynamics. We validate the method and model against the prolate spheroid model of Jeffery \cite{jeffery1922motion}, and apply the method to study the rotational deviation of randomly curved rigid fibers from straight fibers. \\

While the fibers considered here are rigid, the model can also be used to simulate the dynamics of semiflexible filaments. The invertibility properties of the integral equation make it particularly well suited for handling simulations involving inextensible fibers, where an additional line tension equation must be solved at each time step \cite{tornberg2004simulating,maxian2020integral}. We may also consider the effects of different choices of radius functions on the model properties, similar to what is done in \cite{walker2020regularised}, although we note the necessity of smooth decay in our radius function near the fiber endpoints. \\

To build on the dynamic simulations for rigid fibers, we aim to consider the effects of fiber shape on particle deposition and aggregation. We are especially interested in more complicated background flows, including suspensions of rigid fibers in turbulence. The novel modelling approach advocated herein will enable earlier explorations based on the point-particle approach \cite{challabotla2016fiber} to be extended to curved fibers particles.

\section{Acknowledgments}
This work has received funding from the European Unions Horizon 2020 research and innovation programme under the Marie Sklodowska-Curie grant agreement (No. 691070) as well as the SPIRIT project (No. 231632) under the Research Council of Norway FRIPRO funding scheme. E. Celledoni, B. Owren and B. K. Tapley would like to thank the Isaac Newton Institute for Mathematical Sciences, Cambridge, for support and hospitality during the programme {\it Geometry, compatibility and structure preservation in computational differential equations} (2019) where part of the work on this paper was undertaken. E. Celledoni and B. Owren also acknowledge funding from the European Union’s Horizon 2020 research and innovation programme under the Marie Skodowska-Curie grant agreement No 860124. L. Ohm was supported by a University of Minnesota Doctoral Dissertation Fellowship and NSF Postdoctoral Research Fellowship Grant No. 2001959. The authors are listed in alphabetical order. 

\bibliographystyle{abbrv}
\bibliography{NTNU_bib}

%%%%%%%%%%%%%%%%%%%%%%%%%%%%%%%%%%%%%%%%%%%%%%%%%%%%%
%%%%%%%%%%%%%%%%%%%%%%%%%%%%%%%%%%%%%%%%%%%%%%%%%%%%%
%%%%%%%%%%%%%%%%%%%%%%%%%%%%%%%%%%%%%%%%%%%%%%%%%%%%%
%%%%%%%%%%%%%%%%%%%%%%%%%%%%%%%%%%%%%%%%%%%%%%%%%%%%%
%%%%%%%%%%%%%%%%%%%%%%%%%%%%%%%%%%%%%%%%%%%%%%%%%%%%%

\appendix
\section{Dissipation matrix of a prolate spheroid}\label{app:dissmat}
The non-dimensionalized body frame resistance tensor $R_1$ for a spheroid with aspect ratio $\lambda$ was derived by Oberbeck \cite{oberbeck1876uber} and is given by 
\begin{equation}
R_1=16\pi\lambda~\mathrm{diag}\left( \frac{1}{\chi_0+\alpha_0},\frac{1}{\chi_0+\beta_0} ,\frac{1}{\chi_0+\lambda^2\gamma_0}\right).
\end{equation}
The constants $\chi_0$, $\alpha_0$, $\beta_0$ and $\gamma_0$ were calculated by Siewert \cite{siewert2014orientation} and are presented for a prolate ($\lambda>1$) spheroid 
\begin{gather}
	\chi_0 = \frac{-\kappa_0\lambda}{\sqrt{\lambda^2-1}},\\
	\alpha_0 = \beta_0 = \frac{\lambda^2}{\lambda^2-1}+\frac{\lambda\kappa_0}{2(\lambda^2-1)^{3/2}},\\
	\gamma_0 = \frac{-2}{\lambda^2-1}-\frac{\lambda\kappa_0}{(\lambda^2-1)^{3/2}},\\
	\kappa_0 = \ln\left(\frac{\lambda-\sqrt{\lambda^2-1}}{\lambda+\sqrt{\lambda^2-1}}\right).
\end{gather}
The torques $\mathbf{N}=(N_x,N_y,N_z)\trans$ that describe the rotational forces acting on an ellipsoid in creeping Stokes flow in the body frame were calculated by Jeffery \cite{jeffery1922motion} and are presented in their non-dimensional form with zero background flow
\begin{align}
N_x = & -\frac{16\pi \lambda}{3(\beta_0+\lambda^2\gamma_0)}\left[(1+\lambda^2)\omega_x\right],\label{eq:JT1} \\
N_y = & -\frac{16\pi \lambda}{3(\alpha_0+\lambda^2\gamma_0)}\left[(1+\lambda^2)\omega_y\right],\label{eq:JT2} \\
N_z = & -\frac{32\pi \lambda}{3(\alpha_0+\beta_0)}\omega_z.\label{eq:JT3}
\end{align}
Here $\bom = (\omega_x,\omega_y,\omega_z)\trans$ is the body frame angular velocity, which is related to body frame angular momentum by $\bL = J\bom$. Taking derivatives of $\mathbf{N}$ with respect to $\bL$ gives for the rotational dissipation matrix
\begin{equation}\label{eq:A_2}
R_2 = -\frac{16\lambda}{3}
\mathrm{diag}\left(\frac{(1+\lambda^2)}{(\beta_0+\lambda^2\gamma_0)},
\frac{(1+\lambda^2)}{(\alpha_0+\lambda^2\gamma_0)},
\frac{2}{(\alpha_0+\beta_0)}\right)J^{-1}.
\end{equation}
The full dissipation matrix used for the calculation in Figure \ref{fig:dissipationeigs} is given by
\begin{equation}
	A_{sph} = \left(\begin{array}{cc}
	R_1 & 0\\
	0 & R_2 \\
	\end{array}\right).
\end{equation}

\section{Regularizing effect of rigid body integration}\label{app:reg}
Here we consider the regularizing effect of a linear functional of the form \eqref{Fexpr} on numerical solutions to the first-kind equation \eqref{intop} when the fiber radius is constant. We consider specifically $M={\bf I}$. \\

First note that when the fiber radius is constant, the integral kernel of \eqref{intop} is symmetric. Therefore, if $\bff(s)$ is a solution to equation \eqref{intop} then so is $\bff(s)+\bff_o(s)$, where $\bff_o(s)=-\bff_o(-s)$ is any odd function. However, this nonuniqueness is not an issue if we are only interested in the integral of $\bff(s)+\bff_o(s)$ over the same interval as such an odd function would vanish.\\

This can further be illustrated using the singular value expansion. Let  $\bu_i$ and $\bv_i$ be the (orthogonal) left and right singular vectors and $\sigma_i$ be the singular values of $\mathbf{K}$. It is possible to express the integral operator as
\begin{equation}
	\mathbf{K}[f] = \sum_{i=1}^{\infty} \sigma_i(\bff,\bv_i)\bu_i,
\end{equation}
where $(\cdot,\cdot)$ is the inner product on $L^2([-L,L],\mathbb{R}^3)$. Similarly, we can expand the right hand side of the integral equation \eqref{intop} in terms of the basis $\bu_i$:
\begin{equation}
	\by(s) = \sum_{i=1}^{\infty} y_i \bu_i.
\end{equation}
From equation \eqref{intop}, we now obtain the relations
\begin{equation}
\sigma_i f_i = y_i, \quad \forall i.
\end{equation}
We see here that we can only solve for $f_i$ only if $\sigma_i\ne0$; otherwise the operator is singular. Our objective is to compute the integral of the solution $\bff(s)$, which is a linear functional applied to $\bff(s)$ that we denote by $\phi$. By the Reisz lemma this can be represented by a function $\bar{\phi}\in L^2([-L,L],\mathbb{R}^3)$ through ${\phi}[\bff] = (\bar{\phi},\bff)$. Expanding this function in the basis $\bv_i$ gives 
\begin{equation}
	\bar{\phi} = \sum_{i=1}^{\infty} \bar{\phi}_i \bv_i
\end{equation}
where $\bar{\phi}_i = (\bar{\phi},\bv_i)$. We then obtain 
\begin{equation}
	{\phi}(\bff) = \sum_{i=1}^{\infty} \frac{\bar{\phi}_iy_i}{\sigma_i}.
\end{equation}
Since $\sum_{i=1}^{\infty}y_i<\infty$ and $\sum_{i=1}^{\infty}\bar{\phi}_i<\infty$, these coefficients tend to zero and can therefore be approximated by some finite truncation as long as $\sigma_i$ doesn't decay too rapidly. Therefore the integration step has a regularizing effect on the ill-conditioned first-kind equation \eqref{intop}; however, we are still not guaranteed that a solution exists in the first place. \\

Note that we can perform a similar analysis for the second-kind equation \eqref{intop2} to obtain
\begin{equation}
{\phi}(\bff) = \sum_{i=1}^{\infty} \frac{\bar{\phi}_iy_i}{\alpha + \sigma_i}.
\end{equation}
In this case, even if $\sigma_i$ rapidly decays to zero, ${\phi}(\bff)$ exists and can be approximated by a finite sum. 

\end{document}